%% file: main.tex
\documentclass{article}
\usepackage{graphicx} 
\usepackage{amsmath,amsthm,amssymb}
\usepackage{comment}

\usepackage{geometry}
\usepackage{parskip}
\usepackage{natbib}
\usepackage{enumitem}
\usepackage{mathtools}
\mathtoolsset{showonlyrefs}
\allowdisplaybreaks

\usepackage{color-edits}
\addauthor[Ying]{ying}{blue}
\addauthor[Dominik]{dominik}{red}
\newtheorem{theorem}{Theorem}

\newtheorem{proposition}{Proposition}
\newtheorem{corollary}{Corollary}

\usepackage{ying}

\usepackage[colorlinks,
            linkcolor=red,
            anchorcolor=blue,
            citecolor=blue
            ]{hyperref}
            
\title{Augmented Inverse Hybrid Weighting: Robust Inference under Deterministic and Random Shifts}
\author{Ying Jin and Dominik Rothenh\"ausler}

\begin{document}

\maketitle
\begin{abstract}
Reweighting source samples to match a target covariate distribution is a standard response to distribution shift when generalizing evidence from one population to another. This strategy is well suited to deterministic, learnable covariate discrepancies, but can be insufficient when source--target population differences also contain changes beyond covariate shift or when estimation of the density-ratio weights is unstable. 
To address this challenge, we introduce a new model that allows non-systematic changes between two population laws after systematic shifts are accounted for. Such residual shift is modeled as random perturbations to the probability space that cannot be represented in a learnable way. In this way, we separate systematic shifts, treated as bias and corrected by reweighting, from residual random perturbations, treated as distributional uncertainty and handled through dataset pooling. 
Under pure random perturbations, this principle yields Augmented Inverse Distance Weighting (AIDW), which uses regression augmentation and variance-optimal dataset-level pooling. For mixed shifts, we develop Augmented Inverse Hybrid Weighting (AIHW), which interpolates between AIDW and standard augmented importance weighting.
Both methods trade off sampling uncertainty and distributional uncertainty via a \emph{distributional distance} that describes the strength of random perturbations. 
We establish asymptotic properties of the methods, together with  
plug-in guidance for choosing tuning parameters and model diagnostic tools. 
%
Experiments on three real-world multi-site datasets demonstrate consistent reductions in mean-squared error compared with standard weighting baselines, along with substantially improved empirical coverage in settings where covariate-shift adjustment alone undercovers, showing the robustness of the proposed methods across diverse distribution shift scenarios.

\end{abstract}


\input{01intro}

\input{02setup}
\input{03aidw}

\input{04aihw}

\input{05implement}

\input{06real}

\section{Discussion}\label{sec:discussion}

We have introduced a new framework for robust inference under distribution shift, a pervasive issue in modern data analysis. Moving beyond covariate shift adjustment, we broaden the class of distributional shifts under consideration and develop methods that remain robust across the diverse forms of shift encountered in practice.
Our model captures a new type of non-systematic shift that persists even after accounting for systematic shift (such as covariate shift), leading to a random shift model and a hybrid model that admits both systematic and non-systematic components. Distinguishing between the two, we address the systematic component through standard weighting and the non-systematic component through dataset pooling, yielding the AIDW estimator for purely random shifts and the AIHW estimator for the hybrid setting.

We establish the large-sample properties of both estimators and show that they can be viewed as interpolating between existing approaches. In particular, these methods adaptively trade off distinct sources of uncertainty, guided by a notion of \emph{distributional distance} that captures the strength of the non-systematic shift.
We complement our theory with practical tools for estimating the distributional distance, diagnosing the hybrid shift model, and estimating the deterministic component of the density ratio.  
In our real-world multi-site case studies, this framework yields accurate estimation and reliable uncertainty quantification, consistently improving upon standard weighting baselines in various scenarios. 

Despite these advances, several limitations remain. First, fully automatic, data-driven procedures with end-to-end guarantees for selecting the set $\mathcal{D}$ of deterministically shifted covariates would further enhance the practical utility of our method, particularly for high-dimensional settings where manual covariate selection becomes infeasible. 
Second, our analysis focuses on the random target mean $\theta=\mathbb{E}_t[Y]$; generalizing the framework to other estimands, such as quantiles, or parameters in empirical risk minimization, is an important direction for future work. Third, while we use $\hat\delta_{\text{dist}}^2$ as a practical plug-in tuning quantity, a complete large-sample theory for this estimator remains open. Fourth, our diagnostic procedure for the hybrid model is justified for a fixed $\mathcal D$ under the same nuisance-rate conditions as AIHW, but iterating the diagnostic to update $\mathcal D$ is adaptive model selection, and formal familywise size control for that adaptive procedure falls outside the scope of our theory. Finally, the asymptotic guarantees throughout assume the balanced regime $n_s\asymp n_t\asymp J$ and may require correspondingly large sample sizes to be reliable in practice.

\section{Acknowledgments}

Rothenh\"ausler gratefully acknowledges support as a David Huntington Faculty Scholar, Chamber Fellow,
and from the Dieter Schwarz Foundation.

\bibliographystyle{plainnat}
\bibliography{reference}

\appendix
\input{10appendix}

\end{document}

%% file: 01intro.tex

\section{Introduction}
 
Distribution shift is a central challenge when data from one population are used to support conclusions about a partially observed, related, but distinct target population~\citep{degtiar2023review}. 
This problem arises in various domains including clinical trials~\citep{deaton2018understanding,stuart2011use}, social science~\citep{hotz2005predicting}, and machine learning~\citep{quinonero2008dataset}. 
For example, in a clinical evaluation of a new treatment, we may have rich data---including the participants' background (covariates), treatment assignment, and outcomes---from a source hospital, but only limited covariate information from a target hospital where the treatment is yet to be rolled out. To estimate the treatment effect or mean outcome in the target hospital, the key challenge is to account for differences between the two populations. 

A standard response is to assume covariate shift, that is,  the discrepancy between the source and target populations is fully explained by differences in their covariate distributions. Estimation then proceeds by learning a density ratio between the target and source covariate distributions and reweighting source units so that the weighted source sample resembles the target sample~\citep{shimodaira2000improving,horvitz1952generalization,robins1994estimation}. This solution is natural and powerful when the distribution shift is systematic, stable, and well captured by observed covariates. 

However, recent empirical investigations in machine learning, causal inference, and replication studies have documented limitations of observed covariate shifts~\citep{cai2023diagnosing,lu2023you,jin2023diagnosing,jin2024beyond}. In large multi-site replication datasets, prediction intervals relying only on covariate-shift adjustment also failed to attain nominal empirical coverage for the target-site benchmarks~\citep{jin2024beyond}.
Two reasons may explain this insufficiency. 
First, from a \emph{modeling} perspective, the covariate shift assumption may be violated, that is, the covariates may not fully explain the distribution shift. Second, from the \emph{estimation} perspective, even if the covariate shift assumption holds, the estimation of density ratio involved in these methods may be unstable and introduce large estimation error, making the corresponding uncertainty quantification less reliable. 
 
\subsection{Modeling deterministic and random shifts}
\label{subsec:intro_model}

Empirical studies of multi-site datasets point to two sources of
cross-population variation. 
Following the standard perspective, some source--target differences are
systematic and can be related to observed covariates. Substantial
variation often remains after such adjustment, and this residual variation exhibits approximately
non-directional, random-like behavior across sites~\citep{jin2024beyond,jeong2024out}. A heuristic analysis
in \citet{jin2024beyond} provided preliminary evidence that exploiting this
random-shift structure can improve effect transportation and uncertainty
quantification, suggesting that the likely stochastic structure of this residual shift can be statistically useful. 
However, a principled framework to develop   theoretically justified  estimators under random or a mixture of both types of shifts remains missing.

Let $P_s$ denote the source law and $P_t$ the target law. 
Motivated by this need for both perspectives, we represent the
source--target discrepancy via two components:
\[
P_s
\ \xrightarrow{\text{deterministic, learnable shift}}\
P'
\ \xrightarrow{\text{centered random perturbation}}\
P_t,
\]
where \(P'\) is an intermediate population obtained after applying the
stable component of the shift to $P_s$.  The concrete models will be introduced in Section~\ref{sec:notation}. 
The above two components serve distinct roles: 
\begin{itemize}
\item 
The first is a \textbf{systematic} component that occurs to observed covariates; it acts like bias and should be corrected through reweighting. Such deterministic shifts may arise from stable differences in site populations, sampling criteria set by investigators, or institutional specialization, e.g., a target hospital systematically serving older or more severe patients than the source site. This component resembles the standard covariate-shift perspective, which captures stable, learnable differences. 
\item 
The second is a residual component that remains after accounting for systematic differences, which we model as \textbf{random} and centered around $P'$. 
It represents the aggregate effect of many small, non-systematic factors (such as local recruitment fluctuations, referral patterns, operational variation, or routine site-specific differences) that happen randomly to a specific target law. This component reflects the aforementioned empirical insights and broadens the traditional perspective.  
\end{itemize}

In this way, the major distinction from the classical distribution shift model is that the discrepancy between $P_s$ and $P_t$ may contain a realization of a random perturbation. 
While this realized perturbation may still be learned from data, due to the random perturbations to the observed covariates, fitting the full density-ratio weighting may chase fluctuations and produce unstable weights without
recovering any systematic structure.  
The central estimation challenge is therefore to exploit any deterministic 
transport structure without overfitting to the random fluctuations. From a modeling perspective, both covariate shift and pure random shift are useful endpoints, and in many applications, one may expect both to be present. We thus  study  the random shift setting first (so $P'=P_s$) and then the hybrid setting where both types of shifts are present.

\subsection{Overview of methodological contributions}

Our methods address the above challenge by assigning different statistical
roles to the two components. We use \emph{reweighting} to address the bias part, the deterministic shift,  following the conventional wisdom in the causal inference literature~\citep{robins1994estimation}. 
By contrast, the variance part (the random shift) is addressed by \emph{pooling} information across datasets.  

We first study the pure random-shift model, which leads to an Augmented Inverse Distance Weighting (AIDW) estimator. AIDW pools source and target covariate information in the regression adjustment, thereby trading off the sampling variability of the target covariate distribution against the distributional discrepancy between source and target. Under the random-shift model, AIDW remains unbiased for any pooling level, and its asymptotic variance is minimized by an optimal pooling rate that depends on the strength of the random perturbation.  
In addition, we show that exact covariate balancing, a deliberately favorable benchmark for reweighting, has asymptotic variance no smaller than that of oracle AIDW, with a strictly positive gap whenever the balancing weights are non-constant.

We then extend our framework to the hybrid setting when both types of shift are present. We develop Augmented Inverse Hybrid Weighting (AIHW), which combines reweighting for the systematic component with pooling for the residual random component. 
In this sense, AIHW interpolates between AIDW, which treats the discrepancy as purely random, and standard augmented importance weighting, which treats the full covariate discrepancy as deterministic. 
We derive closed-form asymptotic inferential guarantees with variance formulas for both estimators, provide plug-in guidance for choosing the pooling parameters, practical tools for estimating the systematic shift component, and develop model-checking diagnostics for assessing the hybrid shift model.
 Table~\ref{tab:method-comparison} provides a high-level comparison of our proposed methods with existing approaches, summarizing their view of distribution shift and estimation strategies.

\begin{table}[h]
\centering
\footnotesize
\begin{tabular}{@{}lll@{}}
\toprule
\textbf{Method} & \textbf{Setting/assumption} & \textbf{Estimation strategy} \\
\midrule
AIPW & Deterministic shift & Reweight source units and regression adjustment  \\[0.5ex]

Covariate Balancing & Chosen features capture outcome model & Stable weights to balance covariate moments \\[0.5ex]

AIDW (Ours) & Random perturbation & Pool across datasets \\[0.5ex]

AIHW (Ours) & Both deterministic and random shift & 
Interpolates between reweighting and dataset pooling
\\
\bottomrule
\end{tabular}
\caption{Comparison of method for effect generalization.}
\label{tab:method-comparison}
\end{table}

We demonstrate the efficacy of the proposed AIDW and AIHW methods  via observed target-site benchmarks across three real-world multi-site datasets that exhibit distinct distribution shift patterns:
\begin{itemize}
    \item The Pipeline project~\citep{schweinsberg2016pipeline}, a multi-site replication study where sites are selected ``due to their access to subject populations in which the original effect was theoretically predicted to emerge'' (p. 61) and the teams made considerable efforts to maintain consistency in the experimental procedure, yet existing analysis~\citep{jin2024beyond} finds substantial non-systematic discrepancies not explained by the covariate shift. Intuitively, it is plausible to believe the primary existence of random perturbation.
    \item The Krefeld-Schwarb--Sugerman--Johnson (KSJ) data~\citep{krefeld2024exposing}, where the team deliberately chose the panels that are expected to differ, in order to study the variation of causal effects. In addition, the analysis of~\cite{jin2024beyond} finds strong covariate shift as well as substantial residual shift in this dataset. Thus, a mixture of deterministic and random shifts may describe this dataset.
    \item The American Community Survey (ACS) income data where each site is a state in the United States. Such geographical, real-world  distribution shift is particularly difficult to model, providing a stress test for the robustness of methods in which the assumed models may not hold exactly. 
\end{itemize}

\begin{figure}[htbp]
    \centering
    \includegraphics[width=0.8\linewidth]{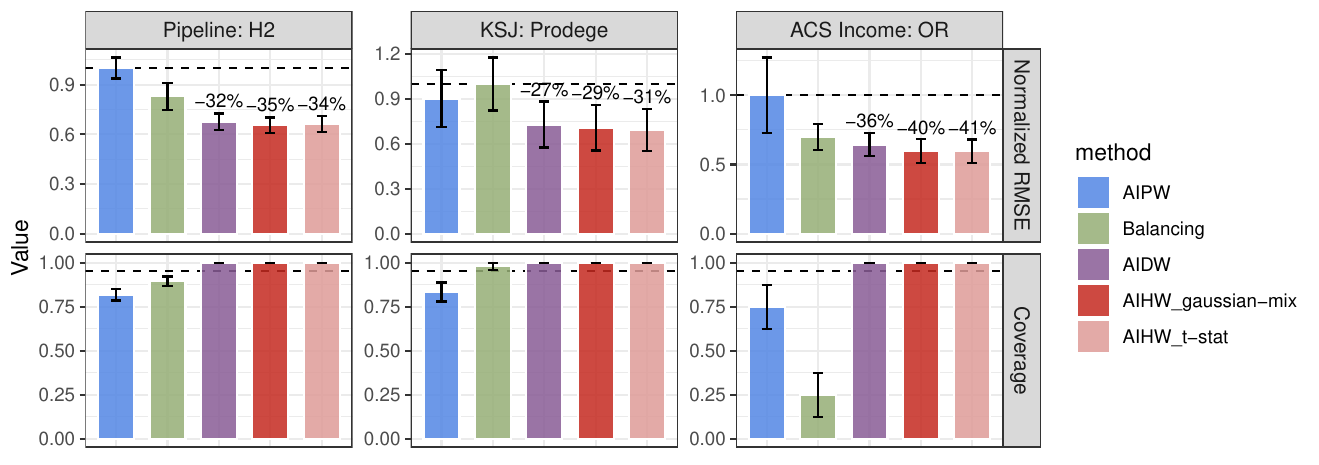}
    \caption{{\small Preview of empirical results across three real datasets in Section~\ref{sec:experiments} (Hypothesis 2 in the Pipeline dataset, Prodege as the target site in KSJ, and OR as the target site in ACS income dataset), where our methods achieve low RMSE and improved uncertainty quantification compared to covariate-shift-based methods. The first row shows the RMSE of the transport estimator from the full-observation target estimator, and the second row shows the empirical coverage of the corresponding prediction intervals. AIPW and Balancing are two covariate-shift based methods; AIDW is our method for  random shift;  two AIHW methods are our methods for hybrid shift, implemented with two procedures for estimating the deterministic component (gaussian-mixture and t-stat).}}
    \label{fig:intro}
\end{figure}

Figure~\ref{fig:intro} previews the numerical results, showing that our methods consistently deliver lower mean-squared error  and higher empirical coverage than the covariate-shift-based baselines  across diverse scenarios.

We close this section by a review  of related work. The rest of the paper is organized as follows. Section~\ref{sec:notation} introduces the distribution shift models and sets up the problem. Section~\ref{sec:aidw} presents AIDW for purely random perturbations. Section~\ref{sec:aihw} develops the AIHW estimator for hybrid shift. Section~\ref{sec:practical} discusses practical implementation, including estimation of distributional parameters and model diagnostics. Section~\ref{sec:experiments} provides empirical validation. Section~\ref{sec:discussion} concludes with limitations and future directions.

\subsection{Related work}\label{sec:related-work}

Reweighting to transport information between populations has a long history in statistics, exemplified by the Horvitz-Thompson estimator~\citep{horvitz1952generalization}, and in modern practice often takes the form of inverse-probability or density-ratio weights. The instability of raw importance weights motivated variance-reduction strategies, including augmentation approaches that use outcome models~\citep{robins1994estimation} and balancing-weight procedures that regularize the weights~\citep{deville1992, hainmueller2012entropy,zubizarreta2015stable}. Recent work on generalizing experimental findings across sites and populations has driven substantial applied development~\citep{cole2010generalizing, stuart2011use, Tipton:2013ew, hartman2015sate, buchanan2018ipsw, dahabreh2019generalizing, egami2021covariate}. Our AIHW estimator builds on a similar augmentation idea as doubly robust estimation but departs from standard practice by explicitly minimizing the impact of random perturbations to reduce variance.

The covariate-shift and domain-adaptation literature also centers on importance weighting via density ratios and related estimators~\citep{shimodaira2000improving,bickel2007discriminative}. This literature typically assumes a deterministic change in covariate distribution (see comprehensively survey in~\cite{quinonero2008dataset, pan2009survey}). Our deterministic-plus-random-perturbation model extends this paradigm by decomposing distributional shift into a deterministic component and remaining random perturbations, then deriving principled reweighting procedures under that decomposition.

Random-perturbation models were introduced to calibrate statistical inference under excess distributional variation~\citep{jeong2022calibrated} and were later extended to empirical risk minimization under random perturbations~\citep{jeong2024out}. Practical relevance has been demonstrated in applications such as refugee assignment~\citep{bansak2024learning}, and related models have been validated on a large collection of replication datasets~\citep{jin2024beyond}. This paper complements and extends that literature by decomposing shift into sparse deterministic shifts and dense random perturbations and by deriving hybrid weighting estimators and diagnostics targeted at robust estimation and inference.

%% file: 02setup.tex

\section{Problem setup and distribution shift models}\label{sec:notation}

Let $X\in \RR^d$ denote the covariates and $Y\in \RR$ denote the outcome. 
We have i.i.d.\ observations $(X_i,Y_i)$, $i=1,\ldots,n_s$, from a source distribution $P_s$ and i.i.d.\ target covariates $X_i'$, $i=1,\ldots,n_t$, drawn from the covariate marginal induced by a target law $P_t$ on $(X,Y)$. 
The target data is part of $\{(X_i',Y_i')\}_{i=1}^{n_t}$ where $(X_i',Y_i')\iid P_t$. To simplify the discussion, we primarily focus on outcome mean estimation in the main text. For average treatment effect in randomized experiments, one could apply the methods in treated and control groups separately. Interested readers may also refer to Appendix~\ref{app:extend_linear} for a discussion on extending the ideas to asymptotically linear estimators.   

Many distribution-shift models can be viewed as making assumptions on the unknown, \emph{fixed} density ratio between source and target laws. For example, $f$-divergence balls restrict the size of this density ratio~\citep{bental2013robust, duchi2021learning}, and covariate-shift models assume it is a function of $X$ only~\citep{shimodaira2000improving, bickel2007discriminative}.  We take a distinct perspective: some discrepancies arise from a random process that generates the target population (e.g., small factors driving the deviations). We thus model the density ratio itself as random.

\subsection{Random perturbation model}
\label{subsec:setup_random_shift_model}

We first formalize the pure random-perturbation model, which describes the non-systematic component of distribution shift. The model views the target population as a randomly tilted version of the source population. The high level intuition is as follows: imagine partitioning the joint covariate--outcome space into many fine regions; a random perturbation increases the probability mass of some regions and decreases that of others, without favoring any direction in the sample space. A realized target population may therefore differ substantially from the source, but the discrepancy is not represented by a fixed, learnable density ratio.

Formally, fix the source law \(P_s\) on \((X,Y)\), and let \((I_j)_{j=1}^J\) be a measurable partition of the joint covariate--outcome space. Let \((W_j)_{j=1}^J\) be positive random variables. Conditional on \(W_\bullet=(W_1,\ldots,W_J)\), the realized target law \(P_t(\cdot;W_\bullet)\) is defined by the Radon--Nikodym derivative
\[
\frac{dP_t(\cdot;W_\bullet)}{dP_s}(x,y)
=
\frac{W_j}{J^{-1}\sum_{j'=1}^J W_{j'}}
\qquad \text{for } (x,y)\in I_j .
\]  
The normalization ensures that \(P_t(\cdot;W_\bullet)\) is a probability measure. The weights \(W_j\) encode local perturbations: regions with larger \(W_j\) become more prevalent in the target population, while regions with smaller \(W_j\) become less prevalent. Returning to our running example, this captures how unobserved factors like varying referral patterns randomly perturb the patient mixture and outcome behavior.

Throughout the theoretical analysis, we impose the following regularity conditions. 
\vspace{0.25em}
\begin{assumption}\label{assump:W}
    Assume $\mathbb{E}[W_j]=1$; the weights $\{W_j\}_{j=1}^J$ are i.i.d.~drawn from a common distribution that does not vary as $J\to\infty$, with $\mathrm{Var}(W_1)<\infty$ and $W_j \ge c$ for some constant $c > 0$; the cells have equal source mass, $P_{s}((X,Y)\in I_j)=1/J$ for all $j$. Finally, step functions defined by $\{I_j\}_{j=1}^J$ densely approximate $L^2(P_{s})$, i.e., for any $f\in L^2(P_{s})$, $\| f(X,Y)-\sum_j \ind_{(X,Y)\in I_j}\,\EE_s[f(X,Y)\mid (X,Y)\in I_j] \|_{L^2(P_{s})} \to 0$ as $J\to\infty$.
\end{assumption}
\vspace{0.25em}

Because the cells are defined on the joint covariate--outcome space, the perturbation can change both the marginal distribution of $X$ and the conditional law of $Y\mid X$.  The data-generating process is then two-stage, which we formalize below for easier reference.
\vspace{0.25em}
\begin{assumption}[Sampling process in random shift model]\label{assump:sample_random}
Let $W_\bullet$ be random weights satisfy Assumption~\ref{assump:W}. 
Conditional on $W_\bullet$, we assume 
$\{(X_i,Y_i)\}_{i=1}^{n_s}$ are i.i.d.~from $P_s$, and $\{(X_i',Y_i')\}_{i=1}^{n_t}$ are i.i.d.~from $P_t(\cdot;W_\bullet)$ and independent of each other. The  full target outcomes \(\{Y_i'\}_{i=1}^{n_t}\) are unobserved.
\end{assumption} 

\paragraph{Inferential target.} The target parameter is a random quantity 
\[
\theta=\theta(W_\bullet):=E_{P_t(\cdot;W_\bullet)}[Y],
\]
whose randomness arises from the site-level draw of the target population. 
Throughout the paper, we focus on characterizing the marginal error \(\hat\theta-\theta(W_\bullet)\),  
where the randomness includes both standard data sampling uncertainty and that in the perturbation stated in Assumption~\ref{assump:sample_random}. This is in contrast to standard inferential statements for a fixed parameter of a fixed population.

A consequence of the model is that the uncertainty in the distribution shift contributes to the deviation of target estimates from source estimates. 
Throughout, we adopt the asymptotic regime where $J$ grows proportionally to $n_s$ and $n_t$, that is,  $n_s/J\to\rho_s\in(0,\infty)$ and $n_t/J\to\rho_t\in(0,\infty)$. 
Distributional central limit theorems for related random shift models were developed by~\citet{jeong2024out}, in a regime where sampling uncertainty is asymptotically negligible, and by~\citet{zhang2025data}. In the matched-rate regime adopted here, Theorem~\ref{thm:matched-clt} in Appendix~\ref{subsec:distributional-clt-matched} shows that for any fixed function $f \in L^2(P_s)$, the difference of empirical means  is asymptotically normal:
\begin{equation}\label{eq:clt}
s_n^{-1}\big(\hat{\mathbb{E}}_s[f(X,Y)] - \hat{\mathbb{E}}_t[f(X,Y)]\big)\ \overset{d}{\longrightarrow}\ \mathcal{N}(0,1),
\end{equation}
with variance $s_n^2= (\frac{1}{n_s}+\frac{1}{n_t}+\frac{1}{J}\mathrm{Var}(W_1)) \cdot \mathrm{Var}_{P_s}(f(X,Y))$. 
Here $\hat{\EE}_s$ and $\hat{\EE}_t$ denotes the empirical mean with data $\{(X_i,Y_i)\}_{i=1}^{n_s}$ and (hypothetical) $\{(X_i',Y_i')\}_{i=1}^{n_t}$, respectively. Equation~\eqref{eq:clt} follows from Theorem~\ref{thm:matched-clt} applied with baseline law $P_s$, $K=1$, and $\phi_1=f$, combined with the ordinary central limit theorem for the independent source sample.

\paragraph{Distributional distance.}  We define the following \emph{distributional distance} in the above variance term:
\begin{equation}
\label{eq:delta_dist}
\delta_\text{dist}^2:=\frac{1}{J}\mathrm{Var}(W_1)
\end{equation}
which summarizes the contribution of the random perturbation to the total variance $s_n^2$.  
Consequently, when we compare empirical means between source and target data, the variance scales with three terms:
\[
 s_n^2 = \Big( \underbrace{1/ n_s}_{\text{source sampling}} + \underbrace{1/n_t}_{\text{target sampling}} + \underbrace{\delta_\text{dist}^2}_{\text{distributional uncertainty}} \Big) \cdot \mathrm{Var}_{P_s}(f(X,Y)).
\]
Intuitively,  $\delta_\text{dist}^2 \geq 0$ measures the size of the random-perturbation component in the distribution shift. 
In the pure random-perturbation model, $\delta_\text{dist}^2 = 0$ means there is no distributional shift at all (i.e., $P_s=P_t$). As $\delta_\text{dist}^2$ increases, the random perturbations become more pronounced. This distributional uncertainty parameter is unknown in practice but can be estimated from data (we shall discuss the estimation when introducing our estimators). As we will see in Section~\ref{sec:limitations_individual_level_reweighting}, this additional source of uncertainty can create substantial variance inflation for weighting-based methods even in a favorable exact-balancing setting.
 
\subsection{Hybrid shift model}
\label{subsec:hybrid_model}

We now extend the pure random-perturbation model to allow for a systematic component of the shift. The hybrid model consists of two parts. First, the source law is shifted by a deterministic  tilt on a subset of covariates. Second, the intermediate law ($P'$ in the notation of Section~\ref{subsec:intro_model}) is perturbed randomly. 

Formally, the deterministic component of the shift happens to  a subset of features \(\mathcal{D} \subseteq \{1,\ldots,d\}\), and write \(X_{\mathcal D}\) for the corresponding subvector. Let \(w_{\mathcal D}(x_{\mathcal D})\ge 0\) be a fixed function obeying \(\mathbb E_s[w_\mathcal D(X_{\mathcal D})]=1\). 
We define the deterministically shifted intermediate law \(P' = P_{s,\mathcal D}\) by 
\begin{equation}
\label{eq:def_P_s_D}
dP_{s,\mathcal D}(x,y) := w_{\mathcal D}(x_{\mathcal D})\,dP_s(x,y),
\quad
\mathbb{E}_{s,\mathcal D}[g] := \mathbb{E}_s[w_{\mathcal D}(X_{\mathcal D})g(X,Y)],
\end{equation}
for any function $g(X,Y)$; the variance under $P_{s,\cD}$ is similarly defined.
The function \(w_{\mathcal D}\) represents the stable component of the discrepancy between source and target laws.

The target law is then generated by applying random perturbations to \(P_{s,\mathcal D}\). Let \((I_j)_{j=1}^J\) be a measurable partition of the joint covariate--outcome space with equal mass, i.e., \( 
P_{s,\mathcal D}(I_j)=\int_{I_j} w_\mathcal D(x_{\mathcal D})\,dP_s(x,y)=1/J, 
\)
for $j=1,\dots,J$. 
Under the same regularity conditions as Assumption~\ref{assump:W},  
conditional on \(W_\bullet=(W_1,\ldots,W_J)\), the realized target law is defined by randomly perturbing \(P_{s,\mathcal D}\) across these cells:
\begin{equation}
\label{eq:def_Pt_hybrid}
\frac{dP_t(\,\cdot\,;W_\bullet)}{dP_s}(x,y) = w_\mathcal{D}(x_{\mathcal D})\frac{dP_t(\,\cdot\,;W_\bullet)}{dP_{s,\mathcal D}}(x,y)
= w_\mathcal{D}(x_{\mathcal D}) \frac{W_j}{\frac{1}{J}\sum_{j'=1}^J W_{j'}} \quad \text{for } P_{s,\cD} \text{-a.s.~} (x,y) \in I_j.
\end{equation}
Analogous to Assumption~\ref{assump:W}, we further assume that step functions densely approximate \(L^2(P_{s,\mathcal D})\) as \(J\to\infty\).  

In this way, the realized source--target likelihood ratio factorizes into a fixed component and a random component. 
This model thus interpolates between two limiting cases: (1) if \(w_{\mathcal D}\equiv 1\), then \(P_{s,\mathcal D}=P_s\), and the model reduces to the pure random-perturbation model in Section~\ref{subsec:setup_random_shift_model}; (2) if \(W_j\equiv 1\), then \(P_t=P_{s,\mathcal D}\), and the model reduces to a covariate-shift model with density ratio $w_\cD(X_\cD)$. 

\vspace{0.25em}
\begin{assumption}\label{assump:W_hybrid}
    Assume $\mathbb{E}[W_j]=1$; the weights $\{W_j\}_{j=1}^J$ are i.i.d.~drawn from a common distribution that does not vary as $J\to\infty$, with $\mathrm{Var}(W_1)<\infty$ and $W_j \ge c$ for some constant $c > 0$. 
    For a subset $\cD\subset \{1,\dots,d\}$, assume the deterministic weight is bounded away from zero and infinity, $c_w \le w_\cD(x_\cD) \le C_w$ for constants $0<c_w\le C_w<\infty$, so that $L^2(P_s)$ and $L^2(P_{s,\cD})$ coincide as sets. For the distribution $P_{s,\cD}$ defined in~\eqref{eq:def_P_s_D}, it holds that $P_{s,\cD}((X,Y)\in I_j)=1/J$ for all $j$. Finally, for any $f\in L^2(P_{s,\cD})$, $\| f(X,Y)-\sum_j \ind_{(X,Y)\in I_j}\,\EE_{s,\cD}[f(X,Y)\mid (X,Y)\in I_j] \|_{L^2(P_{s,\cD})} \to 0$ as $J\to\infty$.
\end{assumption} 

\vspace{0.5em}
\begin{assumption}[Sampling process in hybrid shift model]\label{assump:sample_hybrid}
Let $W_\bullet$ be random weights that satisfy Assumption~\ref{assump:W_hybrid}. 
Conditional on $W_\bullet$, we assume 
$\{(X_i,Y_i)\}_{i=1}^{n_s}$ are i.i.d.~from $P_s$, and $\{(X_i',Y_i')\}_{i=1}^{n_t}$ are i.i.d.~from $P_t(\cdot;W_\bullet)$ defined in~\eqref{eq:def_Pt_hybrid} and independent of each other. The  full target outcomes \(\{Y_i'\}_{i=1}^{n_t}\) are unobserved.
\end{assumption}
\vspace{0.25em}

As in Section~\ref{subsec:setup_random_shift_model}, the target mean is the random population parameter
\[
\theta=\theta(W_\bullet):=E_{P_t(\cdot;W_\bullet)}[Y].
\]
Our inferential statements are under the marginal law that includes randomness from perturbation, source sampling, and target sampling. Note that the distributional central limit theorem~\eqref{eq:clt} no longer holds and requires new developments later on, yet the role of $\delta^2_{\text{dist}}$ remains central in calibrating our methods. 

Finally, we remark that while the fixed shift component only applies to the subset \(\mathcal D\), the marginal distribution of each individual feature is subject to change: it can be affected by the fixed component \(w_{\mathcal D}(X_{\mathcal D})\)  through their dependence on \(X_{\mathcal D}\), as well as the random perturbation that reweights the entire covariate--outcome space. 
Accordingly, the density ratio between the source and target covariate laws is a combination of \(w_{\mathcal D}\) and the random tilt \(W_j/\bar W\). Under the hybrid model, learning \(w_{\mathcal D}\) means learning the stable, reproducible part of the shift, while treating the remaining discrepancy as distributional uncertainty.

\vspace{0.5em}
\begin{remark}[Asymptotic regime]
Throughout this paper, we consider the asymptotic regime where $n_s, n_t, J \to \infty$ with $n_s/J \to \rho_s \in (0,\infty)$ and $n_t/J \to \rho_t \in (0,\infty)$. In particular, $n_s \asymp n_t \asymp J$, so the sampling terms $1/n_s$ and $1/n_t$ are both of order $J^{-1}$, and $\delta_\text{dist}^2 = J^{-1}\mathrm{Var}(W_1)$ is of the same order. Under this scaling, source sampling,
target sampling, and random perturbation all contribute at order \(J^{-1}\). This is the
nondegenerate regime in which the estimators can trade off sampling uncertainty against
distributional uncertainty. 
\end{remark} 

%% file: 03aidw.tex

\section{Augmented Inverse Distance Weighting}\label{sec:aidw}

In this section, we study effect generalization under the pure random shift model in Section~\ref{subsec:setup_random_shift_model}. 
Under purely random perturbations, individual-level reweighting can be suboptimal as it attempts to correct for inherently unpredictable fluctuations, leading to high-variance density ratio estimates. Instead, we exploit the random nature of these shifts through a fundamentally different approach: pooling source and target data. The intuition is that the non-systematic shift can be treated as variance, and dataset pooling reduces such variance. 
We make this intuition precise through the AIDW estimator.

\subsection{The AIDW estimator}

Recall that our goal is to estimate the random target mean $\theta=\theta(W_\bullet)=\mathbb{E}_{P_t(\cdot;W_\bullet)}[Y]$ using data $\{(X_i,Y_i)\}_{i=1}^{n_s}$ from the source distribution $P_s$ and target covariates $\{X_i'\}_{i=1}^{n_t}$ from the target distribution $P_t(\cdot;W_\bullet)$.  We propose the Augmented Inverse Distance Weighting (AIDW) estimator: 
\begin{equation*}
 \hat \theta_\text{AIDW}(\alpha) = \underbrace{\frac{1}{n_s} \sum_{i=1}^{n_s} (Y_i - \hat Q(X_i))}_{\text{source residual correction}} +  \underbrace{\alpha \frac{1}{n_s} \sum_{i=1}^{n_s} \hat Q(X_i)  + (1-\alpha) \frac{1}{n_t} \sum_{i=1}^{n_t} \hat Q(X_i')}_{\text{dataset weighting}},
\end{equation*}
where $\hat Q(X)$ is an estimate of the conditional mean function $\mathbb{E}_s[Y|X]$. 
In our theory, we assume that $\hat Q$ is fitted on auxiliary held-out data independent of the current samples and of the perturbation draw.\footnote{Extending the same argument to standard cross-fitting~\citep{chernozhukov2018double} would require additional bookkeeping, which we do not pursue here.}

The AIDW estimator consists of two terms. 
The first source residual correction term is similar to that in the AIPW estimator~\citep{robins1994estimation}, which corrects for the estimation error in the outcome model. 
Second, the dataset reweighting term combines outcome predictions from source and target datasets with weights $\alpha$ and $1-\alpha$, respectively. When $\alpha=1$, we fully rely on the source data and the AIDW estimator reduces to the source sample mean without any transfer. When $\alpha=0$, we fully rely on the target covariates, and the AIDW reduces to the AIPW estimator with weights equal to $1$ (i.e., no reweighting at all).  
The optimal value of $\alpha$ depends on the distributional distance $\delta_\text{dist}^2$ defined in~\eqref{eq:delta_dist}; see Theorem~\ref{theorem:aidw}. 
The name AIDW reflects the role of $\delta_{\text{dist}}^2$ in determining the optimal pooling (dataset weighting) strategy.

The key distinction of the AIDW estimator from estimators for fixed populations is that it is designed for non-systematic shift that acts as variance. Let us interpret the AIDW estimator by its ``unbiasedness'' property. 
Conditional on $W_{\bullet}$ and $\hat{Q}$, and using a fixed value of $\alpha$, we have $\mathbb{E}[\hat \theta_{\text{AIDW}}\mid W_\bullet,\hat Q] = \mathbb{E}_s[Y - \hat Q] + \alpha \mathbb{E}_s[\hat Q] + (1-\alpha) \mathbb{E}_t[\hat Q]$, where $\EE_t[\cdot]$ is under the realized target law $P_t(\cdot;W_\bullet)$. 
Due to the distribution shift, $\EE_s[\hat{Q}]\neq \EE_t[\hat{Q}]$, and thus $\mathbb{E}[\hat \theta_{\text{AIDW}}\mid W_\bullet,\hat Q]\neq \EE_t[Y]$ in general. 
However, because of the non-systematic nature of the distribution shift, marginalizing over all the randomness we have 
\$
\mathbb{E}[\mathbb{E}_t[\hat Q]] = \mathbb{E}_s[\hat Q],\quad \text{and}\quad \mathbb{E}[\mathbb{E}_t[Y]] = \mathbb{E}_s[Y],
\$
which means the AIDW estimator is unbiased: $\mathbb{E}[\hat \theta_{\text{AIDW}} - \theta(W_\bullet)] = 0$. 
Such an unbiasedness property differs from standard notions since it holds only when the distributional randomness is accounted for.

\subsection{Theoretical properties of AIDW}
We establish the theoretical properties of AIDW, including its asymptotic distribution and variance structure. These results facilitate performance comparisons with existing methods and support variance-based, plug-in tuning of $\alpha$ within the model.

\vspace{0.25em}
\begin{theorem}[AIDW]\label{theorem:aidw}
Suppose Assumptions~\ref{assump:W} and~\ref{assump:sample_random} hold,  $n_s/J \to \rho_s \in (0,\infty)$, $n_t/J \to \rho_t \in (0,\infty)$ and $\mathbb{E}_s[Y^2] < \infty$. Assume further that the nuisance estimator $\hat Q$ is fit independent of the observations and $W_{\bullet}$, and obey $\|\hat Q-Q\|_{L^2(P_s)} = o_P(1)$ and $|\hat \alpha - \alpha| = o_P(1)$ for any fixed function $Q \in L^2(P_s)$ and $\alpha \in [0,1]$. 
Define 
\begin{equation}\label{eq:sn}
    s_n^2 =    \big\{  1/n_s + \delta_\text{dist}^2 \big\}\cdot \mathrm{Var}_{P_{s}}\big(Y - (1-\alpha) Q\big)  +   \frac{(1-\alpha)^2}{n_t} \cdot \mathrm{Var}_{P_{s}}(Q),
\end{equation}
and assume non-degenerate variances $\mathrm{Var}_{P_{s}}(Q)>0$ and $\mathrm{Var}_{P_{s}}\big(Y - (1-\alpha) Q\big)>0$.  
Then, we have 
\begin{equation*}
  s_n^{-1}(\hat \theta_\mathrm{AIDW}(\hat \alpha) - \theta) \rightarrow \mathcal{N}(0, 1),
\end{equation*}
where the randomness is over the sampling process in Assumption~\ref{assump:sample_random} and $\hat{Q}$ and $\hat\alpha$.  
\end{theorem}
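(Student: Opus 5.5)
We would reduce the statement to the distributional CLT of Theorem~\ref{thm:matched-clt} by isolating an oracle linear term. First we decompose
\[
\hat\theta_{\mathrm{AIDW}}(\hat\alpha)-\theta
=\Big\{\hat{\mathbb E}_s[Y-(1-\alpha)Q]-\mathbb E_t[Y-(1-\alpha)Q]\Big\}
+(1-\alpha)\Big\{\hat{\mathbb E}_t[Q]-\mathbb E_t[Q]\Big\}
+R_1+R_2 .
\]
Here $\hat{\mathbb E}_t$ denotes the empirical mean over the observed target covariates. The remainder $R_1$ collects the nuisance error terms, with $\hat Q$ replaced by $Q$. The remainder $R_2=(\hat\alpha-\alpha)\{\hat{\mathbb E}_s[\hat Q]-\hat{\mathbb E}_t[\hat Q]\}$ accounts for the estimated pooling weight. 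To verify the identity, note that $\theta=\mathbb E_t[Y]$ and $\mathbb E_t[Y]=\mathbb E_t[Y-(1-\alpha)Q]+(1-\alpha)\mathbb E_t[Q]$.

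Next we show that both remainders are $o_P(J^{-1/2})=o_P(s_n)$. The bound $s_n\asymp J^{-1/2}$ uses the nondegeneracy assumptions and $n_s\asymp n_t\asymp J$.

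\textbf{Remainder $R_2$.} Conditioning on $\hat Q$, which is independent of the data and $W_\bullet$, we apply the difference-of-means bound \eqref{eq:clt} to $f=\hat Q$. This gives $\hat{\mathbb E}_s[\hat Q]-\hat{\mathbb E}_t[\hat Q]=O_P(J^{-1/2})$, because $\mathrm{Var}_{P_s}(\hat Q)=O_P(1)$. Multiplying by $|\hat\alpha-\alpha|=o_P(1)$ makes this term negligible.

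\textbf{Remainder $R_1$.} This term equals a linear combination of $\hat{\mathbb E}_s[g]-\hat{\mathbb E}_t[g]$ and $\hat{\mathbb E}_t[g]-\mathbb E_t[g]$ with $g=\hat Q-Q$. Conditionally on $\hat Q$, its second moment is bounded by $C(1/n_s+1/n_t+\delta^2_{\rm dist})\|g\|^2_{L^2(P_s)}$ plus the second moment of $\mathbb E_t[g]-\mathbb E_s[g]$. For the latter, $\mathbb E_t[g]-\mathbb E_s[g]=J^{-1}\sum_j (W_j/\bar W-1)\,\bar g_j$, where $\bar g_j$ is the cell average of $g$. Using $W_j\ge c$ and i.i.d.\ weights, its variance is $O(J^{-1}\mathrm{Var}(W_1)\|g\|^2)$, since $\sum_j \bar g_j^2/J\le\|g\|^2$. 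A conditional Chebyshev argument then yields $R_1=o_P(J^{-1/2})$. Some care is needed with the random normalization $\bar W$; we handle it by writing $W_j/\bar W-1=(W_j-1)-(\bar W-1)+O_P(J^{-1})$.

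For the leading term, define $\phi_1=Y-(1-\alpha)Q$ and $\phi_2=Q$. The target sample $X'$ is exchangeable with $(X',Y')\sim P_t$, so $\hat{\mathbb E}_t[Q]-\mathbb E_t[Q]$ is a conditional sampling fluctuation. We invoke Theorem~\ref{thm:matched-clt} with $K=2$, or with $K=1$ combined with a conditional Lindeberg CLT given $W_\bullet$. This gives joint asymptotic normality of three terms:
\begin{itemize}[label={}]
\item the source fluctuation $\hat{\mathbb E}_s[\phi_1]-\mathbb E_s[\phi_1]$, with variance $\mathrm{Var}_{P_s}(\phi_1)/n_s$;
\item the distributional term $\mathbb E_s[\phi_1]-\mathbb E_t[\phi_1]$, with variance $\delta^2_{\rm dist}\mathrm{Var}_{P_s}(\phi_1)(1+o(1))$;
\item the target sampling term $\hat{\mathbb E}_t[Q]-\mathbb E_t[Q]$, with variance $\mathrm{Var}_{P_t}(Q)/n_t=\mathrm{Var}_{P_s}(Q)(1+o_P(1))/n_t$.
\end{itemize}
The last equality uses $\mathbb E_t[h]\to\mathbb E_s[h]$ in probability for $h\in L^1$, which follows by the law of large numbers across cells.

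These three pieces are asymptotically independent. The source sample is independent of $W_\bullet$. The target sampling term is conditionally centered given $W_\bullet$, so it is uncorrelated with anything measurable with respect to $W_\bullet$. Summing the three variances gives exactly $s_n^2$ in \eqref{eq:sn}, and Slutsky's lemma completes the proof.

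We expect the main obstacle to be the target sampling term. It is conditionally centered, but its conditional variance $\mathrm{Var}_{P_t}(Q)$ is random. Combining it with the distributional term therefore requires a conditional CLT with a stable limit: we will show that the conditional characteristic function given $W_\bullet$ converges in probability. We would try first to establish this via a Lindeberg condition under $P_t$. The condition is controlled by $W_j/\bar W\le \max_j W_j/c$, together with uniform integrability of $Q^2$, which can be truncated using $L^2$ approximation.
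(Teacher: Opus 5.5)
Your proposal is correct and follows essentially the same route as the paper. You replace $(\hat Q,\hat\alpha)$ by $(Q,\alpha)$ and control the remainders by conditional Chebyshev (source sampling, target sampling, and the cellwise perturbation term). You then split the oracle error into source-fluctuation, target-sampling and perturbation pieces, and get their joint Gaussian limit from Theorem~\ref{thm:matched-clt} with shared weights and an auxiliary unobserved empirical coordinate, exactly as the paper does.

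Two small tightenings. For $R_2$, bound $\hat{\mathbb E}_s[\hat Q]-\hat{\mathbb E}_t[\hat Q]$ by a conditional second-moment bound (or split $\hat Q=Q+g$) rather than invoking the fixed-$f$ CLT~\eqref{eq:clt} for a random $f$. For the normalization, your expansion has remainder $-(\bar W-1)(W_j/\bar W-1)$, which is not $O_P(J^{-1})$ per cell; instead, use that the centered cell averages $\tilde g_j$ sum to zero to write $\mathbb E_t[g]-\mathbb E_s[g]=(J\bar W)^{-1}\sum_j(W_j-1)\tilde g_j$ and bound $1/\bar W\le 1/c$, as the paper does.
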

\vspace{0.25em}

The proof of this result can be found in Appendix~\ref{app:subsec_proof_aidw}.
Note that $\hat Q$ may converge to an arbitrary $Q\in L^2(P_s)$: AIDW stays asymptotically centered at $\theta$, and misspecification leads to an unbiased estimator, albeit with larger variance. When $Q(x)=\mathbb E_s[Y\mid X=x]$, orthogonality gives $\mathrm{Var}_{P_s}(Y-(1-\alpha)Q) = \mathrm{Var}_{P_s}(Y-Q)+\alpha^2 \mathrm{Var}_{P_s}(Q)$, so that
\begin{equation}\label{eq:sn_correct}
s_n^2 =  \big\{  1/n_s + \delta_\text{dist}^2 \big\}\cdot \mathrm{Var}_{P_{s}}(Y - Q)  +   \big\{ \alpha^2 \left( 1/n_s + \delta_\text{dist}^2 \right) + (1-\alpha)^2 /n_t \big\} \cdot \mathrm{Var}_{P_{s}}(Q),
\end{equation}
the form we use below, and $\alpha^*$ in~\eqref{eq:opt_alpha_aidw} minimizes it.
The variance formula~\eqref{eq:sn_correct} differs from the standard variance of AIPW or difference-in-mean estimators; our analysis relies on decomposing the uncertainty in both i.i.d.~sampling and random perturbations, and we provide some intuitions here. 
Suppose first that $\hat Q=Q$ (such estimation error turns out to be higher-order terms), and write the residual
\( 
\varepsilon(X,Y):=Y-Q(X).
\)
Then this ``oracle'' AIDW estimator satisfies
\begin{align*}
\hat\theta_{\mathrm{AIDW}}^{\mathrm{orc}}(\alpha)-\theta(W_\bullet)
&=
\underbrace{\hat{\mathbb{E}}_s[\varepsilon]-\mathbb{E}_s[\varepsilon]}_{\text{residual sampling}} + 
\underbrace{\mathbb{E}_s[\varepsilon]-\mathbb{E}_t[\varepsilon]}_{\text{residual shift}}
+
\underbrace{\alpha\big(\hat{\mathbb{E}}_s[Q]-\mathbb{E}_t[Q]\big)
+
(1-\alpha)\big(\hat{\mathbb{E}}_t[Q]-\mathbb{E}_t[Q]\big)}_{\text{prediction part}}.
\end{align*}
The first term concerns the sampling error in the source empirical mean, while the second term (the contrast between source/target population parameters) relies on the random perturbation strength. 
These two terms contribute the factor $(1/n_s + \delta_{\mathrm{dist}}^2)\cdot \mathrm{Var}_{P_s}(Y-Q)$. 
Applying this argument to the first part in the ``prediction'' component, that contributes $\alpha^2 ( 1/n_s + \delta_{\mathrm{dist}}^2 )\cdot  \mathrm{Var}_{P_s}(Q)$. Finally, the second additive term in the ``prediction'' part involves the target i.i.d.~sampling error, contributing the $(1-\alpha)^2  \mathrm{Var}_{P_s}(Q)/n_t$ term.

Another way to interpret~\eqref{eq:sn_correct} is to separate the contributions of the usual i.i.d.~sampling uncertainty and the distribution uncertainty. 
As we discussed, under the marginal perspective, the perturbation acts as variance instead of bias, collectively contributing to the total variance by $\delta_{\text{dist}}^2 \cdot \{\Var_{P_s}(Y-Q) + \alpha^2 \Var_{P_s}(Q)\}$, which depends on the shift strength $\delta_{\text{dist}}^2$ and the choice of $\alpha$. The parameter $\alpha$ can thus be used to trade off the distributional uncertainty and usual sampling uncertainty, which we discuss below. 

\paragraph{Optimal choice of $\alpha$.}
Minimizing the asymptotic variance formula in Theorem~\ref{theorem:aidw} with respect to $\alpha$ yields  
\begin{equation}\label{eq:opt_alpha_aidw}
\alpha^* = \frac{1/(1/n_s + \delta_{\text{dist}}^2)}{n_t + 1/(1/n_s + \delta_{\text{dist}}^2)}.
\end{equation}
This optimal choice can be estimated by plugging in an estimate of the distributional uncertainty $\delta_{\text{dist}}^2$; we discuss the estimation issue in  Section~\ref{sec:estimation-of-delta}.   

The optimal weighting parameter~\eqref{eq:opt_alpha_aidw} balances the competing sources of uncertainty. 
Intuitively, a larger value of $\delta_{\text{dist}}^2$, i.e., stronger shift, leads to a smaller value of $\alpha^*$ which gives less weights to source data, while with smaller values of $\delta_{\text{dist}}^2$ one would put more weights on the source data. 

More specifically, $\alpha^*$ balances source sampling uncertainty ($1/n_s$), target sampling uncertainty ($1/n_t$), and distributional uncertainty ($\delta_{\text{dist}}^2$). 
It is the inverse-variance weight that depends on the ``effective sample sizes'' from source/target data. Such effective sample sizes are calculated based on the random shift. For example, for estimating the target parameter $\EE_t[Y]$, the source empirical mean $\hat\EE_s[Y]$ has effective sample size $1/(1/n_s+\delta_{\text{dist}}^2)$, while the target empirical mean $\hat\EE_t[Y]$ has effective sample size $n_t$ (such calculation remains the same after incorporating the regression adjustment).

\subsection{Comparison with exact balancing}\label{sec:limitations_individual_level_reweighting}

We now further demonstrate the benefits of dataset weighting by comparing it with methods that weights individual observations. 
To keep the comparison transparent, we use a deliberately favorable benchmark for reweighting: exact balance on a well-specified, finite-dimensional outcome model with independent homoskedastic noise. Let $b(X) \in \mathbb{R}^p$ be a fixed feature map whose first coordinate is  equal to $1$, and suppose
\[
Q_0(X) = \mathbb{E}_s[Y \mid X] = \beta_0^\top b(X),
\qquad
\varepsilon = Y - Q_0(X),
\]
where, under $P_s$, $\varepsilon$ is independent of $X$, $\mathbb{E}_s[\varepsilon]=0$, and $\mathrm{Var}_{P_s}(\varepsilon)=\sigma_\varepsilon^2>0$. Consider an exact-balancing estimator, written in the augmented form
\begin{equation}\label{eq:augmented_balancing_estimator}
\hat\theta_{\mathrm{bal}}
=
\hat{\mathbb{E}}_s\left[\hat w(X)\{Y-\hat\beta^\top b(X)\}\right]
+
\hat{\mathbb{E}}_t\left[\hat\beta^\top b(X)\right],
\end{equation}
where $\hat w$ depends only on the source and target covariates, and $\hat\beta$ is arbitrary. Assume exact balance:
\begin{equation}\label{eq:exact_balance_basis}
\hat{\mathbb{E}}_s \left[\hat w(X)b(X)\right]
=
\hat{\mathbb{E}}_t[b(X)].
\end{equation}
Under \eqref{eq:exact_balance_basis}, the augmentation terms cancel for every \(\hat\beta\), so \eqref{eq:augmented_balancing_estimator} reduces to the ordinary balanced weighted mean \(\hat{\mathbb E}_s[\hat w(X)Y]\). Because the first component of $b(X)$ is $1$, exact balance implies $\hat{\mathbb{E}}_s[\hat w(X)] = 1$.

\vspace{0.25em}
\begin{proposition}[Exact-balancing benchmark]\label{prop:exact_balance_benchmark}
Under the  conditions of Theorem~\ref{theorem:aidw}, assume the homoskedastic noise model above and
\[
\mathbb{E}_s \left[\|b(X)\|_2^2\right] < \infty,
\qquad
\mathbb{E} \left[\hat{\mathbb{E}}_s \left[\hat w(X)^2\right]\right] < \infty.
\]
If \eqref{eq:exact_balance_basis} holds almost surely for the randomness in the sampling process (Assumption~\ref{assump:sample_random}), then $\hat\theta_{\text{bal}}$ has negligible bias of order $o(1/\sqrt{n_t}+1/\sqrt{n_s})$, and  the leading-order variance is  
\begin{equation}\label{eq:exact_balance_variance_asymptotic}
\mathrm{AVar}(\hat\theta_{\mathrm{bal}}-\theta(W_\bullet))
=
\frac{1}{n_t}\,\mathrm{Var}_{P_s}(Q_0(X))
+
\frac{\sigma_\varepsilon^2}{n_s}\,\mathbb{E} \left[\hat{\mathbb{E}}_s \left[\hat w(X)^2\right]\right]
+
\delta_{\mathrm{dist}}^2 \,\sigma_\varepsilon^2.
\end{equation} 
\end{proposition}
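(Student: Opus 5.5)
Since exact balance \eqref{eq:exact_balance_basis} makes the augmentation terms cancel, we work with $\hat\theta_{\mathrm{bal}}=\hat{\mathbb E}_s[\hat w(X)Y]$. Substituting $Y=\beta_0^\top b(X)+\varepsilon$ and applying \eqref{eq:exact_balance_basis} once more gives the exact identity
\[
\hat\theta_{\mathrm{bal}}=\hat{\mathbb E}_t[Q_0(X)]+\hat{\mathbb E}_s[\hat w(X)\varepsilon].
\]
We write the target in the same form, $\theta(W_\bullet)=\mathbb E_t[Q_0(X)]+\mathbb E_t[\varepsilon]$. This yields the three-term decomposition
\[
\hat\theta_{\mathrm{bal}}-\theta=\underbrace{\hat{\mathbb E}_t[Q_0]-\mathbb E_t[Q_0]}_{A}+\underbrace{\hat{\mathbb E}_s[\hat w\varepsilon]}_{B}-\underbrace{\mathbb E_t[\varepsilon]}_{C}.
\]
The plan is to show that $A$, $B$, $C$ are asymptotically uncorrelated and to compute each variance at order $J^{-1}$.

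\textbf{Term $A$.} Conditional on $W_\bullet$, $A$ is a mean of $n_t$ i.i.d.\ centered draws under $P_t$. Its conditional variance is $\mathrm{Var}_{P_t}(Q_0)/n_t$. Since $\mathrm{Var}_{P_t}(Q_0)\to\mathrm{Var}_{P_s}(Q_0)$ in probability (law of large numbers over cells, using $W_j\ge c$ and finite variance), we get $\mathrm{Var}(A)=\mathrm{Var}_{P_s}(Q_0)/n_t\,(1+o(1))$.

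\textbf{Term $C$.} We have $\mathbb E_t[\varepsilon]=\bar W^{-1}J^{-1}\sum_j W_j\,\mathbb E_s[\varepsilon\mid I_j]$. Since $\mathbb E_s[\varepsilon]=0$, this equals $\bar W^{-1}J^{-1}\sum_j (W_j-1)e_j$ with $e_j=\mathbb E_s[\varepsilon\mid I_j]$. Its variance is
\[
J^{-2}\mathrm{Var}(W_1)\sum_j e_j^2=\delta_{\mathrm{dist}}^2\,\|\Pi_J\varepsilon\|_{L^2(P_s)}^2,
\]
up to the $\bar W\to1$ correction. By the density of step functions in Assumption~\ref{assump:W}, $\|\Pi_J\varepsilon\|^2\to\mathrm{Var}_{P_s}(\varepsilon)=\sigma_\varepsilon^2$. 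This step is essentially the $K=1$ case of Theorem~\ref{thm:matched-clt}.

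\textbf{Term $B$.} Here $\hat w$ is measurable with respect to the covariates $\{X_i\},\{X_i'\}$. Under $P_s$, the source noises $\varepsilon_i$ are i.i.d., mean zero, variance $\sigma_\varepsilon^2$, and independent of $X_i$. They are also independent of $W_\bullet$ and of the target sample, because the source sample is drawn from $P_s$ irrespective of $W_\bullet$. Hence $\mathbb E[B\mid \text{covariates},W_\bullet]=0$ and $\mathrm{Var}(B)=\frac{\sigma_\varepsilon^2}{n_s}\mathbb E[\hat{\mathbb E}_s[\hat w^2]]$. This exact conditional mean-zero property is what makes the bias negligible, since $\mathbb E[A]=0$ and $\mathbb E[C]=O(1/J)$ through $\bar W^{-1}$.

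\textbf{Cross terms.} $\mathrm{Cov}(B,A)=\mathrm{Cov}(B,C)=0$ follows from the same conditioning, because $A$ and $C$ are functions of the target sample and $W_\bullet$ only. $\mathrm{Cov}(A,C)=0$ because $A$ is conditionally mean zero given $W_\bullet$.

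\textbf{Main obstacle.} The delicate part is justifying that $B$ is conditionally centered. We need $\varepsilon_i\perp(X_{1:n_s},X'_{1:n_t},W_\bullet)$. This holds in the source sample, but one has to be careful that $\hat w$ may depend on target covariates, which correlate with $W_\bullet$ and hence with the cell structure. It does not correlate with the source noise, which is what saves the argument. A secondary technical point is controlling the $\bar W^{-1}$ normalization in $C$ and the conditional variance in $A$ by a delta-method argument with uniform integrability. Summing the three variances gives \eqref{eq:exact_balance_variance_asymptotic}.
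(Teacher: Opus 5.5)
Your proposal is correct and follows essentially the same route as the paper. Both use the identity $\hat\theta_{\mathrm{bal}}=\hat{\mathbb E}_t[Q_0]+\hat{\mathbb E}_s[\hat w\varepsilon]$, the split into $A+B-C$, conditioning on the covariates and $W_\bullet$ to make $B$ conditionally centered and kill all cross-covariances, and a cellwise computation of $\mathrm{Var}(\mathbb E_t[\varepsilon])$.

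The one place the paper is tidier is the random normalization. Since the $W_j/\bar W$ are exchangeable and sum to $J$, $\mathbb E[W_j/\bar W]=1$ exactly. This gives $\mathbb E[C]=0$, $\mathbb E[\mathbb E_t[Q_0^2]]=\mathbb E_s[Q_0^2]$, and a closed form for $\mathrm{Var}(C)$ with $\mathbb E[(W_1/\bar W-1)^2]\to\mathrm{Var}(W_1)$ by dominated convergence. That replaces your delta-method and uniform-integrability step.
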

\vspace{0.25em}
The proof of this result can be found in Appendix~\ref{app:subsec_proof_balance}. 
This leads to the following variance comparison. 

\vspace{0.25em}
\begin{corollary}[Oracle AIDW comparison]\label{cor:aidw_dominates_exact_balance}
Under the conditions of Proposition~\ref{prop:exact_balance_benchmark}, let \(s^2_{\mathrm{AIDW},*}\) be the leading-order oracle AIDW variance from Theorem~\ref{theorem:aidw} with \(Q=Q_0\) and with \(\alpha=\alpha^*\) chosen by \eqref{eq:opt_alpha_aidw}. Then  
\begin{equation}\label{eq:aidw_vs_exact_balance_gap}
\begin{split}
\mathrm{AVar}(\hat\theta_{\mathrm{bal}}-\theta) - s^2_{\mathrm{AIDW},*}
=\,\,&
\bigg(
\frac{1}{n_t}
-
\frac{1}{\,n_t + \frac{1}{1/n_s+\delta_{\mathrm{dist}}^2}\,}
\bigg)\mathrm{Var}_{P_s}(Q_0(X)) + 
\frac{\sigma_\varepsilon^2}{n_s}
\left(
\mathbb{E} \left[\hat{\mathbb{E}}_s[\hat w(X)^2]\right]-1
\right).
\end{split}
\end{equation}
\end{corollary}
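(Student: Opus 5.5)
The corollary follows by subtracting two closed-form variances, so the plan is purely algebraic. I would take the exact-balancing variance from \eqref{eq:exact_balance_variance_asymptotic} in Proposition~\ref{prop:exact_balance_benchmark} as given. On the AIDW side, I would use Theorem~\ref{theorem:aidw} with $Q=Q_0=\mathbb{E}_s[Y\mid X]$, so that the orthogonal decomposition \eqref{eq:sn_correct} applies. Under the homoskedastic model, $\mathrm{Var}_{P_s}(Y-Q_0)=\sigma_\varepsilon^2$. Writing $a:=1/n_s+\delta_{\mathrm{dist}}^2$ and $V:=\mathrm{Var}_{P_s}(Q_0(X))$, this gives
\[
s_n^2(\alpha)=a\,\sigma_\varepsilon^2+\big\{\alpha^2 a+(1-\alpha)^2/n_t\big\}V .
\]

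The first step is to evaluate this expression at $\alpha^*$ from \eqref{eq:opt_alpha_aidw}. The bracket is the variance of an inverse-variance combination of two quantities with variances $a$ and $1/n_t$. Its minimum over $\alpha$ is therefore $(1/a+n_t)^{-1}$, attained at $\alpha^*=(1/a)/(n_t+1/a)$. I would check this directly: plugging in gives $\alpha^{*2}a+(1-\alpha^*)^2/n_t=\{(1/a)+n_t\}/(n_t+1/a)^2=1/(n_t+1/a)$. Hence
\[
s^2_{\mathrm{AIDW},*}=a\sigma_\varepsilon^2+\frac{V}{n_t+1/a}.
\]

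The second step is to subtract this from \eqref{eq:exact_balance_variance_asymptotic}. The difference is
\[
\frac{V}{n_t}+\frac{\sigma_\varepsilon^2}{n_s}\,\mathbb{E}[\hat{\mathbb{E}}_s[\hat w^2]]+\delta_{\mathrm{dist}}^2\sigma_\varepsilon^2-\Big(\frac1{n_s}+\delta_{\mathrm{dist}}^2\Big)\sigma_\varepsilon^2-\frac{V}{n_t+1/a}.
\]
The $\delta_{\mathrm{dist}}^2\sigma_\varepsilon^2$ terms cancel. The two remaining $\sigma_\varepsilon^2$ terms combine into $\frac{\sigma_\varepsilon^2}{n_s}(\mathbb{E}[\hat{\mathbb{E}}_s[\hat w^2]]-1)$. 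The $V$ terms give exactly the first bracket in \eqref{eq:aidw_vs_exact_balance_gap}.

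The only point needing care is checking that the hypotheses of Theorem~\ref{theorem:aidw} hold with $Q=Q_0$, so that the "leading-order oracle variance" is indeed \eqref{eq:sn_correct}. This holds because $Q_0=\beta_0^\top b\in L^2(P_s)$ by $\mathbb{E}_s\|b\|^2<\infty$.

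As a remark that explains the "strictly positive gap" claim in the introduction, both terms of the gap are nonnegative. The first is nonnegative because $n_t<n_t+1/a$. For the second, Jensen together with $\hat{\mathbb{E}}_s[\hat w]=1$ gives $\hat{\mathbb{E}}_s[\hat w^2]\ge(\hat{\mathbb{E}}_s[\hat w])^2=1$, with strict inequality when $\hat w$ is non-constant on the sample.
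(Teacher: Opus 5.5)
Your proposal is correct and follows essentially the same route as the paper: specialize the oracle AIDW variance \eqref{eq:sn_correct} to $Q=Q_0$ and $\alpha=\alpha^*$ to get $(1/n_s+\delta_{\mathrm{dist}}^2)\sigma_\varepsilon^2+\mathrm{Var}_{P_s}(Q_0(X))/\{n_t+1/(1/n_s+\delta_{\mathrm{dist}}^2)\}$, then subtract it from \eqref{eq:exact_balance_variance_asymptotic}. Like the paper, you then use Jensen together with $\hat{\mathbb{E}}_s[\hat w(X)]=1$ for the nonnegativity remark. Your explicit check that the bracket evaluates to $1/(n_t+1/a)$ at $\alpha^*$ is a step the paper only asserts.
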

\vspace{0.25em}
The proof of this result is in Appendix~\ref{app:subsec_compare_aidw}. 
It shows that exact balancing has no smaller leading-order variance than oracle AIDW. 
The second term is nonnegative because exact intercept balance gives $\hat{\mathbb{E}}_s[\hat w(X)]=1$ and hence $\hat{\mathbb{E}}_s[\hat w(X)^2]\ge 1$; it is strictly positive whenever $\hat w(X_i)$ are non-constant across the source sample with positive probability. The gap in~\eqref{eq:aidw_vs_exact_balance_gap} is therefore strictly positive whenever $\mathrm{Var}_{P_s}(Q_0(X))>0$ or the realized exact-balancing weights are non-constant with positive probability.

The takeaway is that, with distributional uncertainty, individual-sample reweighting can create an irreducible term even in this favorable exact balancing benchmark.  
Intuitively, reweighting individuals tries to correct the density ratio given by $W_j/\bar{W}$, which is the artifact of non-systematic random ``noise'', and is therefore suboptimal. 
In contrast, it is treated in AIDW by reweighting the two datasets which achieves smaller asymptotic variance.

%% file: 04aihw.tex

\section{Augmented Inverse Hybrid Weighting}\label{sec:aihw}

While AIDW provides a solution for purely random perturbations, the assumption that all distributional change is random may be overly restrictive in practice. Returning to our hospital example, we might expect some changes to be deterministic and predictable (such as differences in average patient age due to hospital specialization) while others remain random perturbations, such as day-to-day variation in referral patterns.
This observation motivates the hybrid shift model in Section~\ref{subsec:hybrid_model}: a subset of covariates shifts in a systematic, learnable way, while a random perturbation acts on the whole covariate-outcome space on top of it.
We now extend our framework to handle such hybrid shift. Building on the insights from Section~\ref{sec:aidw} where  dataset pooling addresses random perturbations, we develop Augmented Inverse Hybrid Weighting (AIHW), which interpolates between the pooling approach and standard weighting approach.

\subsection{The AIHW estimator} 
\label{app:subsec_aihw}
Recall that under the hybrid shift model, the source law is first
transported to the intermediate law \(P_{s,\cD}\) through the deterministic
density ratio \(w_\cD(X_\cD)\) 
and is then perturbed randomly to yield the target law. Throughout this section, we treat the deterministic-shift
coordinates \(\cD\) as fixed. Practical procedures for selecting \(\cD\) and
estimating \(w_\cD\) are discussed in Section~\ref{sec:practical}. 

Let \(Q(x)=E_s[Y\given X=x]\) and
\(Q_\cD(x_\cD)=E_s[Y\given X_\cD=x_\cD]\). 
A natural starting point is AIPW based on the reduced covariate set
\(X_\cD\), which corrects for the deterministic shift (bias) by
reweighting via \(w_\cD(X_\cD)\). For the hybrid problem, we enlarge this
construction by allowing the augmentation \(m(X)\) to depend on the full
covariate vector, leading to the family
\[
\hat\theta(m)
=
\frac{1}{n_s}\sum_{i=1}^{n_s}
\hat w_\cD(X_{\cD,i})\{Y_i-m(X_i)\}
+\frac{1}{n_t}\sum_{i=1}^{n_t}m(X_i'),
\]
where, to retain orthogonality with respect to \(w_\cD\), the augmentation
needs to satisfy \(E_s[Y-m(X)\given X_\cD]=0\). 
As a heuristic, once \(w_\cD\) corrects for the deterministic shift, one
might expect $\hat\theta(m)$ to remain asymptotically unbiased for
$\theta(W_{\bullet})$, with the residual random perturbation contributing
an additional variance component. The conventional choice for AIPW based only on \(X_\cD\) is
\(m=Q_\cD\). The full-covariate choice
\(m=Q\) is also reasonable and minimizes the source variation
$\Var(Y_i-m(X_i))$. In the hybrid setting, however, neither endpoint need
minimize the overall MSE.
 
This phenomenon is clearest when the distributional uncertainty
\(\delta_{\mathrm{dist}}^2\) and the source-sampling scale \(1/n_s\) are small
relative to \(1/n_t\). The source residual contribution is then nearly
variance-free, so the dominant objective is to reduce the variance of the
target term. Among orthogonal augmentations, this favors \(m=Q_\cD\), the
least variable function satisfying the conditional moment restriction,
rather than \(m=Q\). More generally, the MSE-optimal augmentation interpolates
between \(Q_\cD\) and \(Q\) according to the relative sizes of source
sampling, target sampling, and distributional uncertainty. We shall see that the MSE-optimal estimator combines reweighting with dataset pooling. 

We use \(\hat Q\), \(\hat Q_\cD\), and \(\hat w_\cD\) to denote
auxiliary-sample estimates of these nuisance functions. We then define 
plug-in weight and augmentation for AIHW as
\[
\hat\lambda_{\mathcal D}(x_{\mathcal D})
:=
\frac{\hat w_{\mathcal D}(x_{\mathcal D})/n_s
+\hat\delta_{\mathrm{dist}}^2}
{\hat w_{\mathcal D}(x_{\mathcal D})/n_s
+\hat\delta_{\mathrm{dist}}^2+1/n_t},
\qquad
\hat m_n
:=
\hat\lambda_{\mathcal D} \hat Q+(1-\hat\lambda_{\mathcal D}) \hat Q_{\mathcal D}.
\]
These quantities estimate the oracle counterparts
\$
\lambda_{\cD}^*(x_{\cD}) = \frac{  w_{\mathcal D}(x_{\mathcal D})/n_s
+ \delta_{\mathrm{dist}}^2}
{ w_{\mathcal D}(x_{\mathcal D})/n_s
+ \delta_{\mathrm{dist}}^2+1/n_t},
\qquad
 m_n^*
:=
 \lambda^*_{\mathcal D}   Q+(1- \lambda^*_{\mathcal D}) Q_{\mathcal D}.
\$
Our final, MSE-optimal AIHW estimator is defined as
\begin{equation}
\hat\theta_{\mathrm{AIHW}}^*
=
\frac{1}{n_s}\sum_{i=1}^{n_s}
\hat w_{\mathcal D}(X_{\mathcal D,i})
\{Y_i-\hat m_n(X_i)\}
+\frac{1}{n_t}\sum_{i=1}^{n_t}\hat m_n(X_i').
\label{eq:mse-optimal-aihw-estimator}
\end{equation}
We provide a theoretical analysis of the AIHW estimator in the next subsection, including the asymptotic normality of $\hat\theta_{\text{AIHW}}^*$ subject to nuisance function estimation errors  as well as a justification for~\eqref{eq:mse-optimal-aihw-estimator} that it minimizes the leading MSE over a class of augmentation functions.

\paragraph{AIHW interpolates between existing approaches.}
The AIHW estimator interpolates between existing approaches, which we demonstrate via three interpretable special cases:
\begin{itemize}
  \item If \(w_{\mathcal D}\) is close to one, then
  \(\lambda_{\mathcal D}^*\) is nearly constant, so AIHW approximately uses
  dataset-level pooling as in AIDW. In particular, when
  \(\mathcal D=\emptyset\) so $w_{\cD}\equiv 1$, this is exactly the optimal AIDW estimator~\eqref{eq:opt_alpha_aidw}.
  \item If the distributional uncertainty \(\delta_{\mathrm{dist}}^2\) dominates
  \(w_{\mathcal D}/n_s+1/n_t\), then
  \(\lambda_{\mathcal D}^*\approx1\) and \(m_n^*\approx Q\). Thus AIHW
  approaches the reduced set covariate AIPW estimator with outcome regression \(Q(X)\).
  \item If \(1/n_t\) dominates
  \(w_{\mathcal D}/n_s+\delta_{\mathrm{dist}}^2\), as can occur when the
  target sample is small, then \(\lambda_{\mathcal D}^*\approx0\) and
  \(m_n^*\approx Q_{\mathcal D}\). AIHW therefore approaches the AIPW
  estimator whose outcome regression uses only \(X_{\mathcal D}\).
\end{itemize}

\subsection{Orthogonal residual pooling and MSE optimality}
\label{subsec:aihw_justify}

The AIHW estimator in~\eqref{eq:mse-optimal-aihw-estimator} belongs to a
broader family indexed by an augmentation function \(m\). To further justify our AIHW estimator, we proceed in two
steps. First, we establish in Theorem~\ref{theorem:aihw} the asymptotic Gaussianity of the AIHW estimator for a
generic estimated augmentation. The subsequent Corollary~\ref{prop:aihw-optimal-augmentation} then minimizes the
leading MSE over augmentations that preserve Neyman orthogonality with
respect to the deterministic weight \(w_{\mathcal D}\).
To define this class, write
\@\label{eq:def_MD}
\mathcal M_{\mathcal D}
:=
\left\{m:\mathcal X\to\mathbb R:
\mathbb E_s[Y-m(X)\mid X_{\mathcal D}]=0\right\}.
\@
The conditional moment restriction is needed to preserve the Neyman orthogonality and remove the first-order effect of
estimating \(w_{\mathcal D}\). In particular, for every square-integrable direction
\(a(X_{\mathcal D})\),
\[
\left.\frac{d}{dt}\right|_{t=0}
\left[
\mathbb E_s \, \left[
\{w_{\mathcal D}(X_{\mathcal D})+t\cdot a(X_{\mathcal D})\}
\{Y-m(X)\}
\right]
+\mathbb E_{s,\cD}[m(X)]
\right]
=
\mathbb E_s[a(X_{\mathcal D})\{Y-m(X)\}]
=0.
\]
Thus every \(m\in\mathcal M_{\mathcal D}\) yields an estimating equation
that is locally insensitive to perturbations of the deterministic weight. The proof of the following result can be found in Appendix~\ref{app:aihw-gaussianity}.

\begin{theorem}[Asymptotic Gaussianity]
\label{theorem:aihw}
Suppose Assumptions~\ref{assump:W_hybrid} and~\ref{assump:sample_hybrid} hold,
\(n_s/J \to \rho_s \in (0,\infty)\), \(n_t/J \to \rho_t \in (0,\infty)\), and
\(\mathbb E_s[Y^4]<\infty\). Fix a bounded
\(m_\infty\in\mathcal M_{\mathcal D}\). Let \(\hat m_n\) and
\(\hat w_{\mathcal D}\) be stochastic functions that are uniformly bounded
by a fixed constant with probability tending to one. Suppose \(\hat m_n\) is
constructed from auxiliary data independent of the main samples and
\(W_\bullet\), while \(\hat w_{\mathcal D}\) is independent of the main
source sample but may depend on auxiliary target data and \(W_\bullet\).
Assume
\( 
\|\hat m_n-m_\infty\|_{L^2(P_s)}
+\|\hat m_n-m_\infty\|_{L^2(P_{s,\cD})}
+\|\hat w_{\mathcal D}-w_{\mathcal D}\|_{L^2(P_s)}=o_P(1).
\)
In addition, suppose
\begin{equation}
\|\hat w_{\mathcal D}-w_{\mathcal D}\|_{L^2(P_s)}
\|\hat m_n-m_\infty\|_{L^2(P_s)}
=o_P(n_s^{-1/2}).
\label{eq:aihw-product-rate}
\end{equation}
Define
\( 
\hat\theta
:=
\hat{\mathbb E}_s
\!\left[\hat w_{\mathcal D}(X_{\mathcal D})
\{Y-\hat m_n(X)\}\right]
+\hat{\mathbb E}_t[\hat m_n(X)]
\), 
and let
\begin{equation}
\label{eq:aihw-general-leading-variance}
V_n(m) :=
\frac{1}{n_s}\Var_{P_s}\{w_{\mathcal D}(Y-m)\}
+\frac{1}{n_t}\Var_{P_{s,\mathcal{D}}}(m)
+\delta_{\mathrm{dist}}^2\Var_{P_{s,\mathcal{D}}}(Y-m).
\end{equation}
If \(\liminf_{n\to\infty}JV_n(m_\infty)>0\), then
\[
V_n(m_\infty)^{-1/2}(\hat\theta-\theta)
\xrightarrow{d}\mathcal N(0,1),
\]
where the randomness is over the sampling process in Assumption~\ref{assump:sample_hybrid} and the auxiliary randomness used to construct \(\hat m_n\) and \(\hat w_{\mathcal D}\).
\end{theorem}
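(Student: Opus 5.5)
We decompose $\hat\theta-\theta$ into an oracle part plus remainders and handle the oracle part with a matched-rate distributional CLT. Write $\theta=\mathbb E_t[Y]$ and $\mathbb E_t[\cdot]$ for expectation under the realized $P_t(\cdot;W_\bullet)$. With $m=m_\infty$ and $w=w_{\mathcal D}$, we have the exact identity
\begin{align*}
\hat\theta-\theta
={}&\underbrace{\big(\hat{\mathbb E}_s-\mathbb E_s\big)[w(Y-m)]}_{T_1}
+\underbrace{\big(\hat{\mathbb E}_t-\mathbb E_t\big)[m]}_{T_2}
+\underbrace{\mathbb E_{s,\mathcal D}[Y-m]-\mathbb E_t[Y-m]}_{T_3}
+R_n ,
\end{align*}
where we used $\mathbb E_s[w(Y-m)]=\mathbb E_{s,\mathcal D}[Y-m]$. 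The remainder $R_n$ collects all terms involving $\hat w_{\mathcal D}-w$ and $\hat m_n-m$. Since $m\in\mathcal M_{\mathcal D}$ and $\mathbb E_s[w]=1$, we have $\mathbb E_{s,\mathcal D}[Y-m]=0$, so $T_3=-\mathbb E_t[Y-m]$ is a centered perturbation term.

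\emph{Step 1: oracle CLT.} Conditionally on $W_\bullet$, $T_1$ is a mean of i.i.d.\ centered terms from $P_s$, and it is independent of both $W_\bullet$ and the target sample. Its variance is $n_s^{-1}\mathrm{Var}_{P_s}(w(Y-m))$, and since $\mathbb E_s[Y^4]<\infty$ and $m,w$ are bounded, Lyapunov's condition holds. For $(T_2,T_3)$ we invoke Theorem~\ref{thm:matched-clt} from the appendix with baseline law $P_{s,\mathcal D}$ in place of $P_s$. This is legitimate because the cells have equal $P_{s,\mathcal D}$-mass and step functions are dense in $L^2(P_{s,\mathcal D})$ by Assumption~\ref{assump:W_hybrid}. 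We apply it with $K=2$ and the test functions $\phi_1=m$ and $\phi_2=Y-m$.

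The needed joint statement is that $\hat{\mathbb E}_t[\phi_1]-\mathbb E_t[\phi_1]$ and $\mathbb E_t[\phi_2]-\mathbb E_{s,\mathcal D}[\phi_2]$ are jointly asymptotically normal, with variances $n_t^{-1}\mathrm{Var}_{P_{s,\mathcal D}}(m)$ and $\delta_{\rm dist}^2\mathrm{Var}_{P_{s,\mathcal D}}(Y-m)$ and asymptotically zero covariance. The covariance vanishes because the first component is conditionally centered given $W_\bullet$. We also use $\mathrm{Var}_{P_t}(m)\to\mathrm{Var}_{P_{s,\mathcal D}}(m)$ in probability, a law-of-large-numbers statement over cells using $W_j\ge c$ and $\mathrm{Var}(W_1)<\infty$. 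If Theorem~\ref{thm:matched-clt} only gives the sum of the two pieces, we instead argue by a conditional-characteristic-function argument: the target-sampling part is conditionally Gaussian given $W_\bullet$ with an asymptotically deterministic variance, and the perturbation part is a normalized sum of i.i.d.\ $W_j$ times cell means, to which we apply Lindeberg. Combining with $T_1$ by independence, $V_n(m)^{-1/2}(T_1+T_2+T_3)\Rightarrow\mathcal N(0,1)$, using $\liminf JV_n>0$ to normalize.

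\emph{Step 2: remainders.} Write $\Delta_w=\hat w_{\mathcal D}-w$ and $\Delta_m=\hat m_n-m$. Then
\[
R_n=\hat{\mathbb E}_s[\Delta_w(Y-m)]-\hat{\mathbb E}_s[w\Delta_m]+\hat{\mathbb E}_t[\Delta_m]-\hat{\mathbb E}_s[\Delta_w\Delta_m],
\]
and we show $R_n=o_P(J^{-1/2})$ piece by piece.
\begin{itemize}
\item \emph{First term.} Since $\hat w_{\mathcal D}$ is independent of the source sample, conditioning on it gives a centered fluctuation of size $O_P(n_s^{-1/2}\|\Delta_w\|)=o_P(n_s^{-1/2})$. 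Its mean $\mathbb E_s[\Delta_w(Y-m)]=0$ by the $\mathcal M_{\mathcal D}$ orthogonality, because $\Delta_w$ depends only on $X_{\mathcal D}$.
\item \emph{Second and third terms.} Their combination is $(\hat{\mathbb E}_t-\mathbb E_t)[\Delta_m]-(\hat{\mathbb E}_s-\mathbb E_s)[w\Delta_m]$ plus $\mathbb E_t[\Delta_m]-\mathbb E_{s,\mathcal D}[\Delta_m]$. The two sampling fluctuations are $o_P(n^{-1/2})$ by conditioning on $\hat m_n$, which is independent of the samples and of $W_\bullet$. The last difference is a perturbation fluctuation. Since $\hat m_n$ is independent of $W_\bullet$, its conditional variance given $\hat m_n$ is $O(J^{-1}\|\Delta_m\|^2_{L^2(P_{s,\mathcal D})})$, which is again $o_P(J^{-1/2})$ in size.
\item \emph{Product term.} $\hat{\mathbb E}_s[\Delta_w\Delta_m]$ has mean at most $\|\Delta_w\|\|\Delta_m\|=o_P(n_s^{-1/2})$ by \eqref{eq:aihw-product-rate}, and a fluctuation that is small by boundedness and consistency.
\end{itemize}
Slutsky's lemma then completes the proof.

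\emph{Main obstacle.} The hardest part is Step 1's joint treatment of target sampling and perturbation under the hybrid law, i.e.\ adapting Theorem~\ref{thm:matched-clt} to baseline $P_{s,\mathcal D}$ with $Y$ unbounded. This requires a truncation plus $L^2$ step-function approximation argument, which is where the fourth-moment condition enters. A secondary subtlety in Step 2 is that $\hat w_{\mathcal D}$ may depend on $W_\bullet$. This is harmless because it enters only through source-sample averages, which are independent of $W_\bullet$.
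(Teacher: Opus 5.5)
Your proposal is correct and follows the paper's proof essentially step for step: the same exact decomposition of the oracle error into source-sampling, target-sampling, and perturbation terms, with Theorem~\ref{thm:matched-clt} applied under baseline $P_{s,\mathcal D}$ to $\phi_1=m_\infty$ and $\phi_2=Y-m_\infty$, and the same remainder analysis. The paper keeps $\hat{\mathbb E}_s[(\hat w_{\mathcal D}-w_{\mathcal D})\{Y-\hat m_n\}]$ as a single term whose mean is controlled by the product rate, which is equivalent to your splitting off $\hat{\mathbb E}_s[\Delta_w\Delta_m]$. One small correction: Theorem~\ref{thm:matched-clt} already handles $L^2$ test functions by truncation, so the fourth moment is really needed in the remainder, where $\mathbb E_s[\Delta_w^2(Y-m)^2]\le\|\Delta_w\|_\infty\|\Delta_w\|_{L^2(P_s)}\|Y-m\|_{L^4(P_s)}^2=o_P(1)$ gives the $o_P(n_s^{-1/2})$ fluctuation bound (your stated rate $O_P(n_s^{-1/2}\|\Delta_w\|)$ would require a bounded conditional variance of $Y$ given $X_{\mathcal D}$).
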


In summary, Theorem~\ref{theorem:aihw} shows that, as long as $\hat{m}_n$ converges in slow nonparametric rates to a limiting function $m_\infty$ obeying the orthogonality condition~\eqref{eq:def_MD}, the AIHW estimator is asymptotically normal with estimable variance. 
We then proceed to characterize the choice of $m_\infty$ that minimizes such asymptotic variance, yielding the optimal choice of the AIHW estimator~\eqref{eq:mse-optimal-aihw-estimator}. 
The proof of the following result can be found in Appendix~\ref{app:mse-optimal-aihw}.
\begin{corollary}[MSE-optimal augmentation]
\label{prop:aihw-optimal-augmentation}
Assume \(\mathbb E_s[Y^2]<\infty\). Over
\(m\in\mathcal M_{\mathcal D}\) for which \(V_n(m)<\infty\), the unique
minimizer of \(V_n\) in \(L^2(P_{s,\cD})\) is
\begin{equation}
m_n^*(x)
=
Q_{\mathcal D}(x_{\mathcal D})
+
\frac{w_{\mathcal D}(x_{\mathcal D})/n_s+\delta_{\mathrm{dist}}^2}
{w_{\mathcal D}(x_{\mathcal D})/n_s+\delta_{\mathrm{dist}}^2+1/n_t}
\{Q(x)-Q_{\mathcal D}(x_{\mathcal D})\}.
\label{eq:optimal-aihw-m}
\end{equation} 
This function is uniquely determined wherever
\(\mathbb E_{s,\mathcal{D}}[(Q-Q_{\mathcal D})^2\mid X_{\mathcal D}]>0\).
\end{corollary}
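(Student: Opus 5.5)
Every \(m\in\mathcal M_{\mathcal D}\) can be written as \(m=Q_{\mathcal D}+h\) with \(h\in\mathcal H:=\{h:\mathbb E_s[h(X)\mid X_{\mathcal D}]=0\}\). Indeed, \(\mathbb E_s[Y-m\mid X_{\mathcal D}]=0\) holds exactly when \(\mathbb E_s[m\mid X_{\mathcal D}]=Q_{\mathcal D}\). So we minimize \(V_n(Q_{\mathcal D}+h)\) over the linear space \(\mathcal H\). Write \(R:=Q-Q_{\mathcal D}\in\mathcal H\) and \(\varepsilon:=Y-Q\), which satisfies \(\mathbb E_s[\varepsilon\mid X]=0\). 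Since \(w_{\mathcal D}\) depends only on \(x_{\mathcal D}\), conditioning on \(X_{\mathcal D}\) under \(P_s\) is the same as under \(P_{s,\mathcal D}\). Hence \(\mathcal H\) is also the set of functions with \(P_{s,\mathcal D}\)-conditional mean zero given \(X_{\mathcal D}\).

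\textbf{Step 1: expand each term of \(V_n\).} The means are
\[
\mathbb E_s[w_{\mathcal D}(Y-m)]=\mathbb E_{s,\mathcal D}[Y-m]=0,
\qquad
\mathbb E_{s,\mathcal D}[m]=\mathbb E_{s,\mathcal D}[Q_{\mathcal D}].
\]
The mean of \(m\) does not depend on \(h\), because \(\mathbb E_{s,\mathcal D}[h]=\mathbb E_s[w_{\mathcal D}\mathbb E_s[h\mid X_{\mathcal D}]]=0\). Next use \(Y-m=\varepsilon+(R-h)\), where \(\varepsilon\) is orthogonal to every function of \(X\) under both \(P_s\) and \(P_{s,\mathcal D}\). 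Also use that \(Q_{\mathcal D}\) is orthogonal to \(h\) under \(P_{s,\mathcal D}\). Together these give
\[
V_n=\text{const}+\mathbb E_s\!\Big[w_{\mathcal D}\Big(\tfrac{w_{\mathcal D}}{n_s}+\delta_{\mathrm{dist}}^2\Big)(R-h)^2+\tfrac{w_{\mathcal D}}{n_t}h^2\Big].
\]
Here the constant collects the \(\varepsilon\)-variances and \(\mathrm{Var}_{P_{s,\mathcal D}}(Q_{\mathcal D})/n_t\). The factor \(w_{\mathcal D}^2/n_s\) comes from \(\mathrm{Var}_{P_s}(w_{\mathcal D}(Y-m))=\mathbb E_s[w_{\mathcal D}^2(Y-m)^2]\). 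In \(P_{s,\mathcal D}\) notation this reads
\[
\mathbb E_{s,\mathcal D}\big[a(X_{\mathcal D})(R-h)^2+b\,h^2\big],
\qquad a=\tfrac{w_{\mathcal D}}{n_s}+\delta_{\mathrm{dist}}^2,\quad b=\tfrac1{n_t}.
\]

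\textbf{Step 2: minimize pointwise.} For fixed \(x\), the integrand \(a(R-h)^2+bh^2\) is minimized at \(h^*=\lambda^*R\) with \(\lambda^*=a/(a+b)\). This is exactly \(\lambda^*_{\mathcal D}\), and it yields \(m_n^*\) in \eqref{eq:optimal-aihw-m}. The key check, which is the only real subtlety, is feasibility: \(h^*\) must lie in \(\mathcal H\). It does, because \(\lambda^*\) depends only on \(x_{\mathcal D}\), so \(\mathbb E_s[\lambda^*R\mid X_{\mathcal D}]=\lambda^*\mathbb E_s[R\mid X_{\mathcal D}]=0\). Since the unconstrained pointwise minimizer is feasible, it is the constrained minimizer. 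Boundedness of \(w_{\mathcal D}\) and \(\mathbb E_s[Y^2]<\infty\) give \(V_n(m^*_n)<\infty\).

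\textbf{Step 3: uniqueness.} Completing the square gives
\[
V_n(Q_{\mathcal D}+h)-V_n(m_n^*)=\mathbb E_{s,\mathcal D}\big[(a+b)(h-h^*)^2\big].
\]
Because \(a+b\ge 1/n_t>0\), the difference is zero only if \(h=h^*\) holds \(P_{s,\mathcal D}\)-a.s. This proves uniqueness in \(L^2(P_{s,\mathcal D})\).

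The main obstacle I expect is the bookkeeping in Step 1. The reweighting turns \(\mathrm{Var}_{P_s}\) into a weight \(w_{\mathcal D}^2\), while the other two terms carry \(w_{\mathcal D}\). One must track this correctly to obtain the ratio \(w_{\mathcal D}/n_s\) inside \(\lambda^*\). One must also verify that the cross terms with \(\varepsilon\) and with \(Q_{\mathcal D}\) vanish under \(P_{s,\mathcal D}\), using \(\mathbb E_{s,\mathcal D}[\varepsilon\mid X]=0\), which holds since \(w_{\mathcal D}\) is a function of \(X\). The closing statement about determinacy follows because \(m_n^*\) varies with the choice of \(\lambda^*\) only through \(\lambda^*R\). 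Where \(\mathbb E_{s,\mathcal D}[R^2\mid X_{\mathcal D}]=0\) we have \(R=0\), so the formula is insensitive to \(\lambda^*\) there, and it is pinned down wherever that conditional second moment is positive.
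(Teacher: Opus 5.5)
Your proposal is correct and follows essentially the same route as the paper: write $m=Q_{\mathcal D}+u$ with $u$ conditionally mean zero given $X_{\mathcal D}$, and reduce $V_n$ to a constant plus $\mathbb E_{s,\mathcal D}[a(Z-u)^2+bu^2]$ with $a=w_{\mathcal D}/n_s+\delta_{\mathrm{dist}}^2$ and $b=1/n_t$. You then complete the square pointwise, check feasibility via the $X_{\mathcal D}$-measurability of $a/(a+b)$, and read off uniqueness from the exact gap $\mathbb E_{s,\mathcal D}[(a+b)(u-u^*)^2]$; your added remark that boundedness of $w_{\mathcal D}$ guarantees $V_n(m_n^*)<\infty$ is a small check the paper leaves implicit.
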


Although $\lambda^*_{\mathcal D}$ depends on $n$, in the balanced regime it converges to the fixed limit $\lambda_\infty$ obtained by replacing $1/n_s$, $1/n_t$, and $\delta_{\mathrm{dist}}^2$ with $1/\rho_s$, $1/\rho_t$, and $\Var(W_1)$. The corresponding $m_\infty=\lambda_\infty Q+(1-\lambda_\infty)Q_{\mathcal D}$ again lies in $\mathcal M_{\mathcal D}$, so Theorem~\ref{theorem:aihw} applies with this $m_\infty$; the same remark covers $\alpha^*$ for AIDW.

\paragraph{Oracle theory and practical implementation.} It is important being precise about what the theory covers. Theorem~\ref{theorem:aihw} allows the outcome regressions $\hat Q$, $\hat Q_{\mathcal D}$ and the outer weight $\hat w_{\mathcal D}$ to be estimated, but takes the pooling weight to be the oracle $\lambda^*_{\mathcal D}$. In our implementation $\lambda_{\mathcal D}$ is itself a plug-in quantity, built from $\hat w_{\mathcal D}$ and $\hat\delta_{\mathrm{dist}}^2$ as described in Section~\ref{sec:practical}; unlike the scalar $\hat\alpha$ of AIDW, it is function-valued and depends on the realized perturbation draw, and a large-sample theory for such perturbation-dependent tuning remains open. Similarly, our theory treats the subset $\cD$ as fixed. In practice, it can be specified if sufficient domain knowledge is available (i.e., researchers controlling the participant recruitment thus the shift on $\cD$), and the hybrid model can be diagnosed, which we discuss in Section~\ref{section:diagnostics}. 
Otherwise, the subset $\cD$ shall be estimated from data; one can use the diagnostic procedure in Section~\ref{section:diagnostics} to screen for the subset of features whose shift seems systematic. 
In our numerical experiments, we implement selection procedures based on Gaussian-mixture estimation and t-statistic screening, and the performance is robust to the choice of the selection procedure.

%% file: 05implement.tex

\section{Practical implementation}\label{sec:practical}

In this section, we discuss practical tools for three challenges in implementing the AIDW and AIHW estimators: (i) estimating the distributional distance \(\delta_{\mathrm{dist}}^2\), which determines the choice of optimal $\alpha$; (ii) identifying the subset \(\mathcal{D}\) of covariates that exhibit deterministic shifts; and (iii) estimating the deterministic weight function \(w_{\mathcal D}(x_\cD)\) used by AIHW.  

\subsection{Estimating the distributional distance $\delta_\text{dist}^2$}\label{sec:estimation-of-delta}

Building on \citet{jeong2024out}, we use a plug-in calibration rule for $\delta_{\text{dist}}^2$ motivated by the hybrid variance formulas and the distributional CLT similar to~\eqref{eq:clt}. 
Since the random shift model is a special case of the hybrid model with $\cD=\varnothing$, here we introduce the method for the hybrid model. 

Corollary~\ref{cor:hybrid_phi_discrepancy} in Appendix~\ref{subsec:distributional-clt-matched} formally states that for any fixed function $\phi\colon \cX\to \RR$ obeying $\EE_s[\phi(X)]=\EE_{s,\cD}[\phi(X)]=0$, the mean difference obeys $s_{\phi,n}^{-1}\hat\EE_{s}[\phi(X)] - \hat\EE_{t}[\phi(X)]\stackrel{d}{\to} \cN(0,1)$ for some variance $s_{\phi,n}^2$ that depends on $\delta_{\text{dist}}^2$. 
Our idea is to choose such functions and estimate $\delta_{\text{dist}}^2$ based on the difference-in-mean statistics; the following calibration should be read as a plug-in heuristic based on the oracle, fixed-function, fixed-$\cD$ characterization in Corollary~\ref{cor:hybrid_phi_discrepancy}.  
Choose test functions $\phi_1,\ldots,\phi_L$ satisfying
\[
\mathbb{E}_s[\phi_\ell]=0,
\qquad
\mathbb{E}_s[w_{\mathcal D}(X_{\mathcal D})\phi_\ell]=0.
\]
One convenient choice is the residualized form
\[
\phi_\ell(X)=r_\ell(X)-\mathbb{E}_s[r_\ell(X)\mid X_{\mathcal D}],
\]
since \(w_{\mathcal D}(X_{\mathcal D})\) depends only on \(X_{\mathcal D}\). Corollary~\ref{cor:hybrid_phi_discrepancy} in the appendix gives the leading-order variance relation
\[
\Var \big(\hat{\mathbb{E}}_s[\phi_\ell]-\hat{\mathbb{E}}_t[\phi_\ell]\big)
\approx
\frac{1}{n_s}\Var_{P_s}(\phi_\ell)
+
\left(\frac{1}{n_t}+\delta_{\mathrm{dist}}^2\right)\Var_{P_{s,\mathcal D}}(\phi_\ell).
\]
Thus, for a moderately large value of $L$, by the law of large numbers, one would expect 
\$
\frac{1}{L}\sum_{\ell=1}^L \frac{\big(\hat{\mathbb{E}}_s[\phi_\ell]-\hat{\mathbb{E}}_t[\phi_\ell]\big)^2}{(1/n_s)\Var_{P_s}(\phi_\ell)
+
 (1/n_t+\delta_{\mathrm{dist}}^2 )\Var_{P_{s,\mathcal D}}(\phi_\ell)} \approx 1.
\$ 
This motivates choosing \(\delta_{\mathrm{dist}}^2\) so that standardized source-target discrepancies have average squared size near one. Let \(\hat v_{S,\ell}\) and \(\hat v_{T,\ell}\) be the empirical estimates (sample variances) for $\Var_{P_s}(\phi_\ell)$ and $\Var_{P_{s,\cD}}(\phi_\ell)$, respectively. We estimate \(\delta_{\mathrm{dist}}^2\) as the nonnegative solution to
\[
\frac{1}{L}
\sum_{\ell=1}^L
\frac{
\bigl(\hat{\mathbb E}_s[\phi_\ell]-\hat{\mathbb E}_t[\phi_\ell]\bigr)^2
}{
(1/n_t+\delta^2)\hat v_{T,\ell} + (1/n_s)\hat v_{S,\ell}
}
= 1.
\]
The above display is a non-increasing function of $\delta^2$. 
If the left-hand side is already no larger than one at \(\delta=0\), we set \(\hat\delta_{\mathrm{dist}}^2=0\). Otherwise, we solve the displayed equation by a bisection search.

\subsection{Diagnosing the hybrid shift model}\label{section:diagnostics}

Another important element in AIHW is the subset $\cD$ which captures the deterministic shift component.  
Our theoretical results treat $\mathcal D$ as fixed. 
In the following, we discuss (1) how to diagnose whether a chosen set of features capture the deterministic shift, and (2) how to use such diagnosis to heuristically screen for the set if one needs to select it from data.

\paragraph{Model diagnostic.} Our diagnostic tool exploits the fact that under the hybrid shift model, any feature $X_k$ for $k\notin \cD$ exhibit random-perturbation-like behavior after reweighting by $\cD$.  
For a candidate covariate $X_k$ and regression function estimate $\hat{Q}_k(X_\mathcal{D}) = \hat{E}_s[X_k \given  X_\mathcal{D}]$, the following corollary applies the general AIHW result to the corresponding residual contrast.

\vspace{0.25em}
\begin{corollary}[AIHW residual diagnostic]\label{cor:model-check}
Fix \(k\notin\mathcal D\), and define
\[
Q_k(x_{\mathcal D})
:=
\mathbb E_s[X_k\mid X_{\mathcal D}=x_{\mathcal D}],
\qquad
\psi_k(X):=X_k-Q_k(X_{\mathcal D}).
\]
Let \(\hat Q_k\) be an auxiliary-sample estimate of \(Q_k\), and write
\(\hat\psi_k(X):=X_k-\hat Q_k(X_{\mathcal D})\). Suppose the conditions of
Theorem~\ref{theorem:aihw} hold with pseudo-outcome
\(\widetilde Y\equiv0\), oracle augmentation \(m_\infty=-\psi_k\), and
fitted augmentation \(\hat m_n=-\hat\psi_k\), where the moment condition is
imposed on the coordinate being screened rather than on the outcome:
\(\mathbb E_s[X_k^4]<\infty\) and \(\psi_k\) bounded. Define
\[
\hat D_k
:=
\hat{\mathbb E}_s
\!\left[\hat w_{\mathcal D}(X_{\mathcal D})\hat\psi_k(X)\right]
-\hat{\mathbb E}_t[\hat\psi_k(X)]
\]
and
\[
s_k^2
:=
\frac{1}{n_s}\Var_{P_s}
\!\left\{w_{\mathcal D}(X_{\mathcal D})\psi_k(X)\right\}
+\left(\frac{1}{n_t}+\delta_{\mathrm{dist}}^2\right)
\Var_{P_{s,\mathcal D}}\{\psi_k(X)\}.
\]
If \(\liminf_{n\to\infty}J s_k^2>0\), then 
\begin{equation}\label{eq:diag_stat}
\frac{\hat D_k}{s_k}\xrightarrow{d}\mathcal N(0,1).
\end{equation}
\end{corollary}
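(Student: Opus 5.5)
The plan is to obtain Corollary~\ref{cor:model-check} as a direct specialization of Theorem~\ref{theorem:aihw} to the pseudo-outcome \(\widetilde Y\equiv 0\). With \(\widetilde Y\equiv0\) and \(m=-\hat\psi_k\), the generic AIHW statistic of Theorem~\ref{theorem:aihw} becomes
\[
\hat\theta=\hat{\mathbb E}_s\!\left[\hat w_{\mathcal D}(X_{\mathcal D})\{0+\hat\psi_k(X)\}\right]+\hat{\mathbb E}_t[-\hat\psi_k(X)]=\hat D_k .
\]
The corresponding target is \(\theta=\mathbb E_{P_t(\cdot;W_\bullet)}[\widetilde Y]=0\). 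So \(\hat\theta-\theta=\hat D_k\), and the conclusion of the theorem reads \(V_n(m_\infty)^{-1/2}\hat D_k\to\mathcal N(0,1)\).

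The first thing to verify is that \(m_\infty=-\psi_k\) lies in \(\mathcal M_{\mathcal D}\) for the pseudo-outcome. This means
\[
\mathbb E_s[\widetilde Y-m_\infty(X)\mid X_{\mathcal D}]=\mathbb E_s[X_k-Q_k(X_{\mathcal D})\mid X_{\mathcal D}]=0,
\]
which holds by the definition of \(Q_k\). Boundedness of \(m_\infty\) is assumed (\(\psi_k\) bounded). The remaining rate and independence conditions on \(\hat m_n=-\hat\psi_k\) and \(\hat w_{\mathcal D}\) are assumed in the statement. In particular, \(\|\hat m_n-m_\infty\|=\|\hat Q_k-Q_k\|\), since the \(X_k\) terms cancel, so these conditions only concern \(\hat Q_k\). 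The product-rate condition \eqref{eq:aihw-product-rate} is also inherited by assumption.

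Next, I compute \(V_n(m_\infty)\) from \eqref{eq:aihw-general-leading-variance}:
\[
V_n(-\psi_k)=\frac{1}{n_s}\Var_{P_s}\{w_{\mathcal D}\psi_k\}+\frac{1}{n_t}\Var_{P_{s,\mathcal D}}(\psi_k)+\delta_{\mathrm{dist}}^2\Var_{P_{s,\mathcal D}}(\psi_k)=s_k^2,
\]
using that variance is invariant to sign. The nondegeneracy hypothesis \(\liminf Js_k^2>0\) is exactly \(\liminf JV_n(m_\infty)>0\).

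The main obstacle is the moment condition. Theorem~\ref{theorem:aihw} requires \(\mathbb E_s[Y^4]<\infty\) on the outcome, whereas here the quantity entering the source residual is \(\widetilde Y-m=\psi_k\), and the fourth moments needed in the proof come from \(X_k\). I would check that in the proof of Theorem~\ref{theorem:aihw} the fourth-moment assumption is used only through the residual \(Y-m\) and through the target-side variable \(m(X')\) (for Lyapunov/Lindeberg conditions and the distributional CLT). If so, \(\mathbb E_s[X_k^4]<\infty\) together with boundedness of \(Q_k\) and \(\psi_k\) supplies the same bounds, and the conclusion \(\hat D_k/s_k\to\mathcal N(0,1)\) follows.
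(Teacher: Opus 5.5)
Your proposal is correct and follows the route the paper indicates, namely specializing Theorem~\ref{theorem:aihw}: with $\widetilde Y\equiv0$ and $\hat m_n=-\hat\psi_k$ you get $\hat\theta=\hat D_k$ and $\theta=0$, the definition of $Q_k$ gives $-\psi_k\in\mathcal M_{\mathcal D}$, and $V_n(-\psi_k)=s_k^2$. The moment issue you flag is not a real obstacle, since $\mathbb E_s[\widetilde Y^4]=0<\infty$ holds trivially and the only $L^4$ factor in the theorem's remainder bound, $\|\widetilde Y-\hat m_n\|_{L^4(P_s)}=\|\hat\psi_k\|_{L^4(P_s)}$, is controlled by the assumed uniform boundedness of $\hat m_n$, so the theorem applies verbatim without re-inspecting its proof.
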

\vspace{0.25em}

The proof of this result can be found in Appendix~\ref{app:proof-model-check}.
Corollary~\ref{cor:model-check} inherits the fixed-\(\mathcal D\), sample-splitting, and nuisance-rate requirements of Theorem~\ref{theorem:aihw}.
In practice we may use plug-in estimates for \(s_k^2\), \(\delta_{\mathrm{dist}}^2\), and the variance components. However, formal size control after adaptive updates of \(\mathcal D\) would require additional post-selection assumptions, which we do not pursue here. As a concrete plug-in, we use
\begin{equation*}
\hat s_k^2
=
\frac{1}{n_s}\hat{\mathrm{Var}}_{P_s} \bigl(\hat w_{\mathcal D}(X_{\mathcal D})(X_k-\hat Q_k(X_{\mathcal D}))\bigr)
+
\left(\frac{1}{n_t}+\hat\delta_{\mathrm{dist}}^2\right)\hat V_{s,\mathcal D,k},
\end{equation*}
where $\hat{\mathrm{Var}}_{P_s}$ is the source sample variance and $\hat\delta_{\mathrm{dist}}^2$ is the estimator from Section~\ref{sec:estimation-of-delta}. The source-weighted plug-in for the baseline target-side variance is
\[
\hat V_{s,\mathcal D,k}
=
\frac{\sum_{i=1}^{n_s}\hat w_{\mathcal D}(X_{\mathcal D,i})(R_{ik}-\bar R_{w,k})^2}
{\sum_{i=1}^{n_s}\hat w_{\mathcal D}(X_{\mathcal D,i})},
\qquad
R_{ik}:=X_{k,i}-\hat Q_k(X_{\mathcal D,i}),
\quad
\bar R_{w,k}:=
\frac{\sum_{i=1}^{n_s}\hat w_{\mathcal D}(X_{\mathcal D,i})R_{ik}}
{\sum_{i=1}^{n_s}\hat w_{\mathcal D}(X_{\mathcal D,i})}.
\]

Corollary~\ref{cor:model-check} inspires the following heuristic approach for diagnosing whether the hybrid shift holds for a given set $\cD$.  
Let $d$ denote the total number of covariates and let $M = |\{k: k \notin \mathcal D\}|$ be the number of candidate coordinates outside $\mathcal D$. When $M>0$, the diagnostic flags the hybrid shift model for subset $\mathcal{D}$ if there exists $k \in \{1,\ldots,d\}\setminus\mathcal D$ with $\hat s_k>0$ such that
\begin{equation}\label{eq:tstat_diag}
     \hat{s}_k^{-1} \left|\hat{\mathbb{E}}_s[\hat w_\mathcal{D}(X_\mathcal{D}) (X_k - \hat Q_k(X_\mathcal{D}))] - \hat{\mathbb{E}}_t[X_k - \hat Q_k(X_\mathcal{D})]\right| > \Phi^{-1}(1 - 0.025/M),
\end{equation}
where $\Phi^{-1}$ denotes the standard Gaussian quantile function.  
A large value of the standardized contrast on the left-handed side suggests that the current working model misses deterministic structure involving $X_k$ or correlated covariates, indicating that $\mathcal{D}$ may need to be reconsidered.

\paragraph{Screening for $\cD$.} The diagnostic tool can be used to screen variables and construct an estimate for $\cD$ (adaptive reuse of the same diagnostic falls outside the fixed-\(\mathcal D\) theory of Theorem~\ref{theorem:aihw}, so we present these as heuristics rather than formally justified procedures). First, the rule discussed above can be applied iteratively to build a candidate deterministic covariate set: one may include the indices with the largest standardized contrast on the left-handed side of~\eqref{eq:tstat_diag} until those outside of $\cD$ do not exceed a threshold. 
Second,~\eqref{eq:tstat_diag} suggest that the standardized contrast should be approximately normal for $k \not \in \cD$ and take larger values for those $k \in \cD$, which inspires a two-group structure for these statistics. From a heuristic perspective, one may use clustering methods such as a Gaussian mixture model to identify the two groups of variables. In Section~\ref{sec:experiments}, we implement AIHW with these two ideas, and find its performance to be robust to the screening procedure. 

\subsection{Estimating the deterministic weights}

Finally, the AIHW estimator involves a weight function $w_{\cD}$ that only depends on the selected deterministic coordinates $X_\cD$, which typically needs to be estimated unless sufficient domain knowledge is available.  
The subtlety here is that $w_\cD$ is not the realized density ratio between the source/target laws the observed samples are drawn from. Instead, it is the deterministic component of that density ratio. Thus, a standard density-ratio estimator trained to capture all
source and target differences may fit random fluctuations that does not need to be reweighted away.

A practical remedy is to estimate $w_\cD$ as a reduced or regularized
density-ratio weight on \(X_{\mathcal D}\). In our experiments, we first estimate the density ratio in the full space, and project the logarithm of the weights onto the selected coordinates
\(X_{\mathcal D}\) to construct the projected weights.  
This is meant to retain the
deterministic component of the shift while smoothing away high-dimensional random
perturbations. Other possible implementations include
logistic domain classifiers, entropy balancing, kernel mean matching with proper regularization, whose theoretical properties are beyond the scope of this work and left for future research.  

Finally, before using the fitted weights in AIHW, we source-normalize them so that
\[
\hat{\mathbb E}_s[\hat w_{\mathcal D}(X_{\mathcal D})]=1.
\]
We also recommend estimating the weights on an auxiliary fold or using
cross-fitting, and applying mild clipping or positivity regularization when the
estimated weights are unstable. These steps match the nuisance-estimation role
of \(\hat w_{\mathcal D}\) in Theorem~\ref{theorem:aihw}, although the fully
adaptive procedure used in practice should still be interpreted as a plug-in
implementation instead of a theoretically-justified approach (which would instead require post-selection-type assumptions).

%% file: 06real.tex

\section{Real data experiments}\label{sec:experiments}

We demonstrate the efficacy of the proposed methods in generalizing statistical parameter estimation in three real-world datasets. 
Each dataset consists of individual-level data from a collection of multiple sites/populations. We will take pairs of sites to emulate a generalization task, and use the ``target'' site full-data estimator as the oracle benchmark to  evaluate the methods. 

The goal of this section is to show the performance of our methods in various datasets where different distribution shift models are plausible. 
Our method demonstrates robust performance even when the dataset might not be best described by the distribution shift model the method is tailored for, thereby expecting robust performance in practical distribution shifts.

\subsection{Evaluation pipeline}

We use the same evaluation framework across the three datasets. 
Each dataset consists of per-site data $\mathsf{D}^{(k)} = \{D_i^{(k)}\}_{i=1}^{n_k}$ for site $k=1,\dots,K$, where each $D_i^{(k)}$ is an individual-level observation. 
We assume within-site data are i.i.d.~from a distribution $D_i^{(k)}\sim P^{(k)}$, and the $P^{(k)}$'s may vary with $k$ due to distribution shift.   
The parameter of interest is $\theta_k = \theta(P^{(k)})$ for a functional $\theta(\cdot)$. Given access to the full data in a site, we can compute an unbiased empirical estimator $\hat\theta_k = \theta(\mathsf{D}^{(k)})$.  

For the Pipeline project data in Section~\ref{subsec:real_pipeline} and the KSJ data in Section~\ref{subsec:real_ksj}, 
the site-level data is from a randomized experiment, $D_i=(X_i,Y_i,T_i)$, where $X_i$ is the background characteristics for the participants, $T_i\in \{0,1\}$ are randomly assigned treatments, and $Y_i\in \RR$ is the observed outcome. 
For the ACS-income data in Section~\ref{subsec:real_acs}, the site-level data is $D_i=(X_i,Y_i)$ for features $X_i$ and outcomes $Y_i$. 
For randomized experiments, the parameter of interest is the average treatment effect (ATE)  $\EE^{(k)}[Y\given T=1]-\EE^{(k)}[Y\given T=0]$. 
For other cases, the parameter of interest is the mean outcome $\EE^{(k)}[Y]$.

We take each pair $(i,j)$ for $i\neq j$, $1\leq i,j\leq K$. The site $i$ is taken as the source site for which we observe the full data $\mathsf{D}^{(i)}$, while the site $j$ is treated as the target site for which we only observe the covariates $\cX^{(j)}:=\{X_\ell^{(j)}\}_{\ell=1}^{n_j}$. 
Methods for generalizing the parameter from site $i$ to $j$ compute an estimator $\hat\theta_{i\to j}= f(\mathsf{D}^{(i)},\cX^{(j)})$ for $\theta_j$, around which the associated uncertainty quantification  can be leveraged to construct predictive interval $\hat{C}_{i\to j}$ for $\hat\theta_j$. We use the empirical estimator $\hat\theta_j$ as a benchmark to evaluate the performance of $f(\mathsf{D}^{(i)},\cX^{(j)})$. We compute the root mean-squared error $\{\frac{1}{K(K-1)}\sum_{i\neq j} (\hat\theta_{j} - \hat\theta_{i\to j})^2\}^{1/2}$ to assess the accuracy of the estimator, 
and coverage $\frac{1}{K(K-1)}\sum_{i\neq j} \ind\{ \hat\theta_j \in \hat{C}_{i\to j}\}$ to assess the reliability of uncertainty quantification. 
The methods under comparison include:
\vspace{0.25em}
\begin{itemize}[leftmargin=1em]
    \item \texttt{AIPW}: the AIPW estimator~\citep{robins1994estimation} with two-fold cross-fitting~\citep{chernozhukov2018double}, which combines outcome regression and covariate shift adjustment. 
    \item \texttt{SBW}: the stable-balancing weights estimator~\citep{zubizarreta2015stable}, which is the reweighted estimator with minimal-variance weights that balance the feature means in source and target sites. 
    \item \texttt{AIDW}: our AIDW estimator assuming purely random perturbations, where the parameter $\alpha$ is chosen following~\eqref{eq:opt_alpha_aidw}, and we use two-fold cross-fitting to fit the regression functions. 
    \item \texttt{AIHW}: our AIHW estimator assuming a hybrid distribution shift. We use two-fold cross-fitting, where the covariate shift subset is selected by either \texttt{gaussian-mix} or \texttt{t-stat} in the same fold of data used to fit the regression models; see Appendix~\ref{subsec:implementation} for details. 
\end{itemize}
\vspace{0.25em}

Following our discussion at the beginning of Section~\ref{sec:notation}, we apply our methods stated for mean-outcome estimation separately to the two treatment groups for the two randomized experiment datasets.  
The prediction intervals are computed based on the uncertainty quantification (asymptotic variance) associated with each estimator; we defer the full method details to Appendix~\ref{subsec:implementation}. 

The AIPW and SBW estimators are designed for pure covariate shift settings. Under the covariate shift assumption, together with overlap, suitable moment conditions, and product-rate conditions on the cross-fitted nuisance estimators, the AIPW estimator is consistent and asymptotically normal~\citep{dahabreh2020extending,chernozhukov2018double}; it may nonetheless suffer from instability if the covariate shift weights are extreme.
Under covariate shift and suitable regularity and approximation conditions, the SBW estimator is consistent and asymptotically normal~\citep{wang2020minimal}. With unconstrained quadratic weights and exact balance, it coincides with the implied weighting representation of linear regression on the balanced features~\citep{chattopadhyay2023implied}.

\subsection{The Pipeline project data: generalizing across replication sites}
\label{subsec:real_pipeline}

The first case study concerns the datasets from the Pipeline project~\citep{schweinsberg2016pipeline}. It is a multi-site replication study where 25 laboratories across the world (contributing 29 populations) independently replicate the same experiments to test 10 scientific hypotheses concerning moral judgment, a well-known
theory in psychology. The participating sites are invited by the project lead because they had ``access to a subject population in which
the original finding was theoretically expected to replicate using the original materials'' (p.~57). 

\vspace{-0.75em}
\paragraph{Plausible random shift.}
Analysis of this dataset in~\citet{jin2024beyond} found the violation of the covariate shift assumption and supported the purely random-perturbation model for treatment effect across sites. 
Due to the invitation process, the discrepancy between sites are less likely to be systematic, but may well be the artifact of many small, random factors in the experiment implementation, supporting the random-perturbation model. Of course, this is a conceptual justification, and it is impossible to know which model is exactly true.

\begin{figure} 
    \centering
    \includegraphics[width=\linewidth]{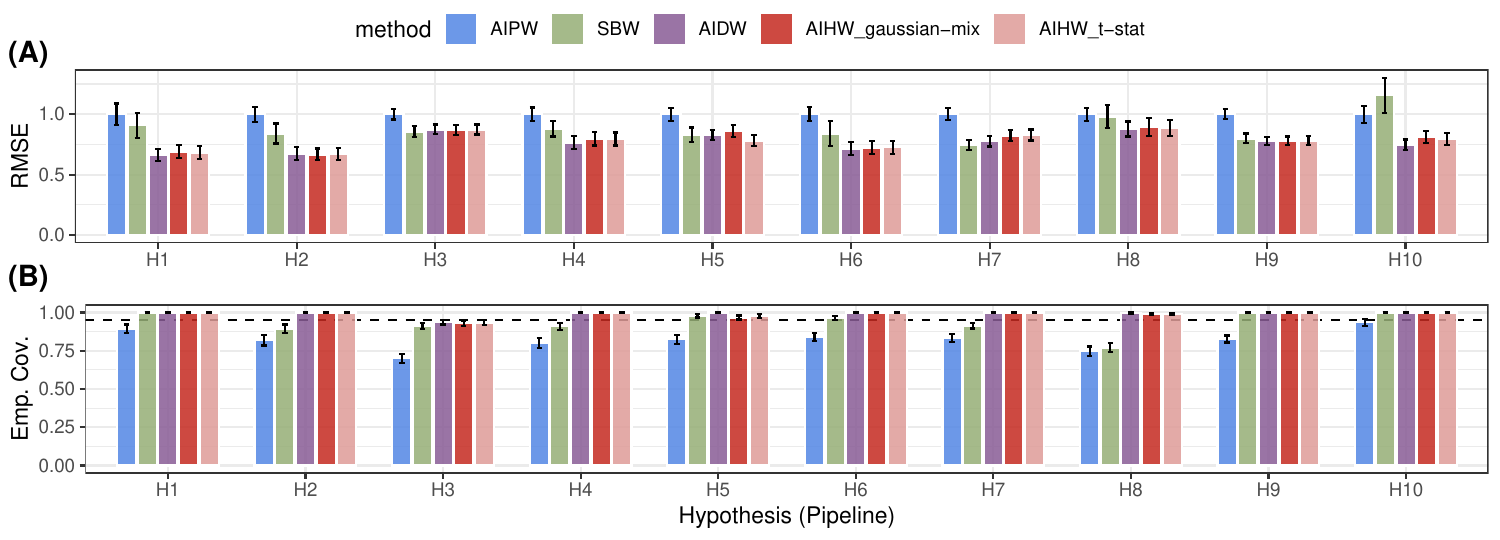}
    \caption{Empirical (A) RMSE and (B) coverage averaged across all site pairs for each hypothesis in the Pipeline dataset. The black bars show $\pm 1.96\times \text{std}$, and the dashed line in (B) is the nominal $95\%$ level.}
    \label{fig:real_pipeline_rmse_cov}
\end{figure}

\vspace{-0.75em}
\paragraph{Results.}
Figure~\ref{fig:real_pipeline_rmse_cov} presents the RMSE (panel A) and empirical coverage (panel B) between site pairs for testing each hypothesis, where the RMSE is normalized by the AIPW estimator's RMSE for easier visualization. We observe that the covariate-shift-based estimators (AIPW and SBW) lead to large estimation error and low coverage. This might be due to the violation of the covariate shift assumption. 
For the AIPW estimator, another reason might be the unstable estimation of the weights, which inflates the variance and contributes to the large RMSE. 
Even though the SBW estimator explicitly seeks small-variance weights, it can still lead to large RMSE and low coverage, likely due to the nonlinearity in data or violation of the covariate shift assumption. 

In contrast, our methods (AIDW and two AIHW variants) achieve both low RMSE and high coverage. For Hypothesis H1, the reduction in RMSE (which includes the irreducible error) by AIDW relative to AIPW is up to 40\%. 
The three methods are comparable in most of the cases, though sometimes AIDW can be slightly more accurate. The variable selection methods did not make a huge difference in the performance. The prediction intervals, which account for the uncertainty in the random shift component, provide reliable coverage. 
We have argued that the random-perturbation model is intuitively plausible in this dataset, and the superior performance of AIDW appears consistent with this argument. Meanwhile, the AIHW estimator (with the AIDW estimator as its special case), which accounts for deterministic shift when  present, also shows comparable performance, which supports its robustness in settings where systematic shift might be weak.

\subsection{The KSJ data: generalizing across diversified sites}
\label{subsec:real_ksj}

The second dataset was collected by~\citet{krefeld2024exposing}, which we refer to as the KSJ data. 
The distinct feature of this dataset is its site recruitment process: the authors deliberately chose several online and offline
populations that are expected to differ (following~\citet{jin2024beyond}, we take panels from
studies 1 to 2, totaling 13 panels for 4 hypotheses) to examine the variability of causal effects across diversified panels. 

\vspace{-0.75em}
\paragraph{Plausible hybrid shift.} Considering the stated site selection process, the hybrid shift model may intuitively better suit the distribution shift in the KSJ data: there is deterministic covariate shift because of the site diversification, yet one may still expect random perturbations due to inevitable deviations in the replications in different sites (qualitatively similar to the Pipeline project). If the hybrid model is appropriate, one should expect good performance of the AIHW estimator.  

\vspace{-0.75em}
\paragraph{Results.} The RMSE and coverage averaged over site pairs grouped by the same target population are shown in Figure~\ref{fig:real_ksj_rmse_cov}. Again, the covariate-shift-based methods AIPW and SBW tend to have higher RMSE than AIDW and AIHW methods, and for some target panels the reduction in RMSE is quite substantial. 
While AIPW often undercovers, the coverage of SBW is close to the target level (though the RMSE is high). 
In contrast, the coverage of AIDW and AIHW is near nominal for all target panels. 

We have intuitively argued that the hybrid model is plausible for the KSJ dataset. Indeed, the AIHW variants, especially when using the Gaussian mixture variable selection, typically achieve the lowest RMSE. 
Surprisingly, the AIDW method also achieves similar RMSE, suggesting that the random-perturbation model can be a useful working approximation in diverse practical scenarios. 

\begin{figure}[htbp]
    \centering
    \includegraphics[width=\linewidth]{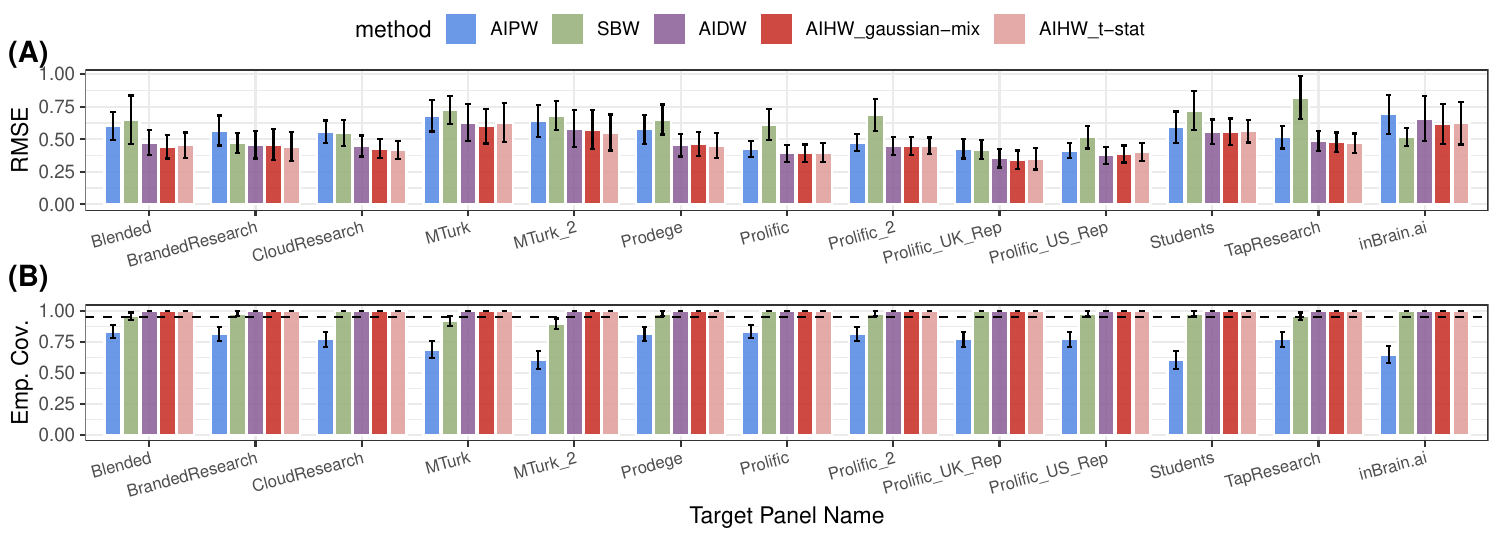}
    \caption{Empirical (A) RMSE and (B) coverage averaged across site pairs with a specific target population in the KSJ dataset. The black bars show $\pm 1.96\times \text{std}$, and the dashed line in (B) is the nominal $95\%$ level.}
    \label{fig:real_ksj_rmse_cov}
\end{figure}

Due to the intuitive motivation for positing a hybrid model, we perform the model diagnosis outlined in Section~\ref{section:diagnostics}, to test whether the correction for the deterministic shift is effective.  
Figure~\ref{fig:real_ksj_diagnostic} presents the QQ-plots for the mean differences in $X_{-\hat{\cD}}$ before and after correcting for the (learned) deterministic shift, where $\hat\cD$ is selected by the \texttt{gaussian-mix} method. 

We present five representative pairs of populations; the other pairs or correcting with the \texttt{t-stat} approach yield similar patterns. By Corollary~\ref{cor:model-check}, when both deterministic and random shifts are present and the deterministic component is (approximately) correctly accounted for, the residual covariate mean differences~\eqref{eq:diag_stat} is approximately normal. The second row of Figure~\ref{fig:real_ksj_diagnostic} shows that our learning procedure is effective in correcting for the deterministic shift, and the approximate normal distribution of the residuals justify our uncertainty quantification method. 

\begin{figure}[htbp]
    \centering
    \includegraphics[width=\linewidth]{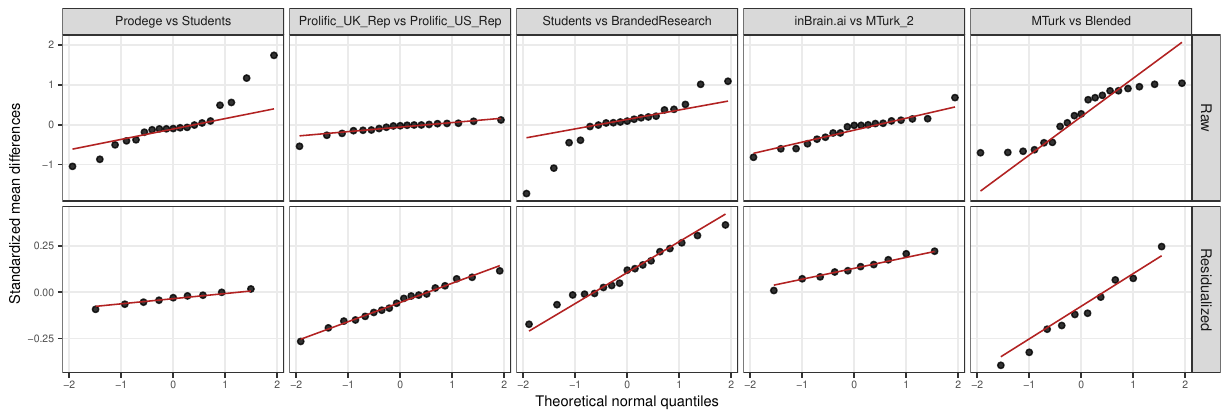}
    \caption{QQ-plot of covariate mean differences across five representative pairs of populations. The first row shows the difference in covariate mean for all covariates between two sites. The second row shows the residual after correcting for the learned deterministic shift as~\eqref{eq:diag_stat}.}
    \label{fig:real_ksj_diagnostic}
\end{figure}

\subsection{ACS income data: generalizing across states}
\label{subsec:real_acs}

The final dataset we study is the ACS income data derived from the United States census data, which we process based on the pipelines in~\citep{liu2023need,ding2021retiring}.  
In this dataset, each site is a state in the US.
The response variable $Y_i\in \{0,1\}$ indicates whether the individual's income is above 50,000 USD. The parameter of interest is the mean response in each state.

\vspace{-0.75em}
\paragraph{In-the-wild shift?} Because the sites are purely geographical, the distribution shift in the ACS-income data can be the most challenging to model. 
Arguably, it is unclear which model may fit this dataset. 
This dataset thus offers a stress test for the methods in scenarios where any model may be subject to misspecification.

\vspace{-0.75em}
\paragraph{Results.} The RMSE and coverage for target estimators averaged across pairs with the same target state are summarized in Figure~\ref{fig:real_acs_rmse_cov}. Again, we observe consistent improvement of AIDW and AIHW estimators upon AIPW and balancing estimators. The improvement is especially substantial for target states like MO, NC, and TN where these baselines suffer from large RMSE. This shows the robust performance of both variants with challenging distribution shifts. In general, AIHW performs slightly better than  AIDW for most target states, while the impact of variable selection method remains small. 

Reliable quantification of uncertainty seems particularly challenging in this dataset. AIDW and AIHW did not always achieve valid coverage, yet they still substantially improves upon the weighting approaches.   

Reliable uncertainty quantification seems particularly challenging in this dataset: AIDW and AIHW do not always achieve valid coverage, though they still improve substantially on the weighting approaches. We do not claim that the distribution shift models proposed here are perfect, and believe there remains room for future work to better capture the shift patterns present in such datasets. Still, regardless of whether these models hold exactly, our AIDW and AIHW estimators remain robust and reliable for effect generalization under real distribution shifts.

\begin{figure}[htbp]
    \centering
    \includegraphics[width=\linewidth]{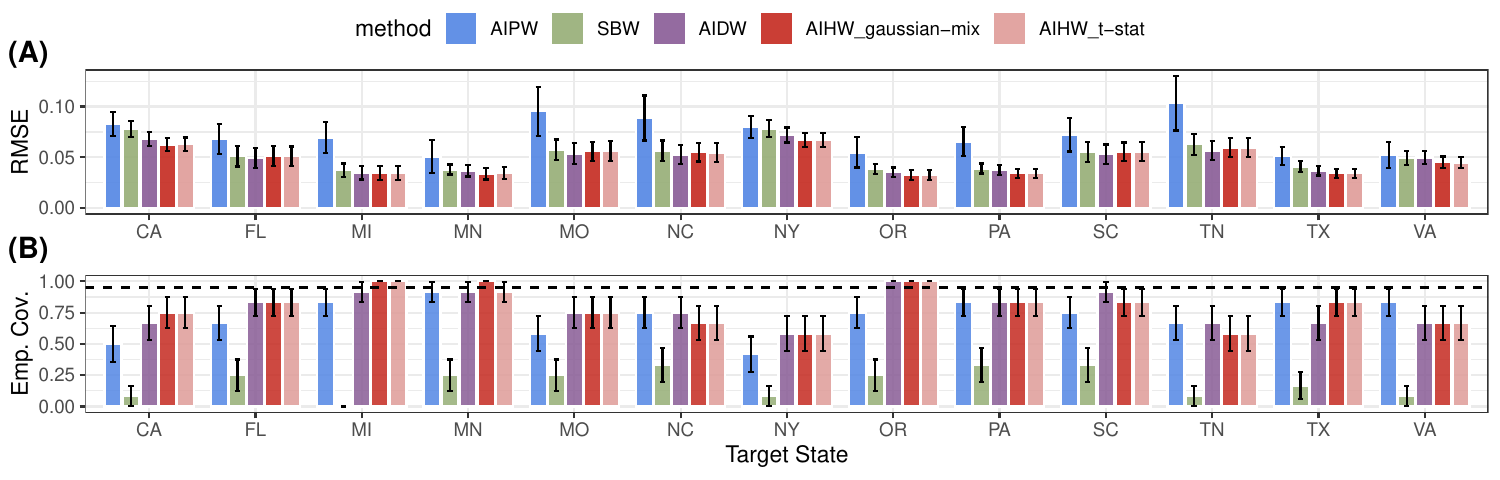}
    \caption{Empirical (A) RMSE and (B) coverage across pairs of states with a specific target state in the ACS-income data. The black bars show $\pm 1.96\times \text{std}$, and the dashed line in (B) is the nominal $95\%$ level.}
    \label{fig:real_acs_rmse_cov}
\end{figure}

\begin{figure}[htbp]
    \centering
    \includegraphics[width=\linewidth]{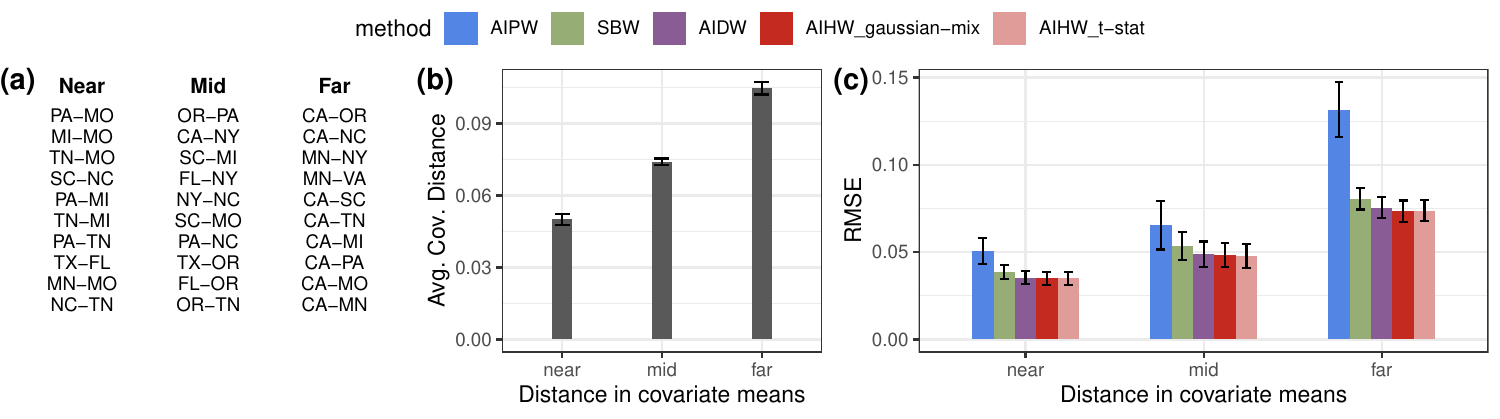}
    \caption{Variation of estimator performance with source-target covariate distance in the ACS-income data. (a) State pairs categorized as Near/Mild/Far based on covariate mean differences. (b) Average covariate mean differences in the three categories. (c) Average RMSE for state pairs in the three categories.}
    \label{fig:real_acs_by_distance}
\end{figure}

%% file: 10appendix.tex

\section{Deferred discussion}

\subsection{Estimators beyond mean outcome}
\label{app:extend_linear}

In the main text we primarily focus on mean outcome estimation with full observations $(X,Y)$. Here we briefly discuss how our framework generalizes to average treatment effect estimation. We assume the standard completely randomized experiments, where one has access to the full observations  $(X_i,T_i,Y_i)_{i=1}^{n_s}$ in the source site and covariates $\{X_i'\}_{i=1}^{n_t}$ in the target site. The full observations are generated from i.i.d.~data $(X_i,Y_i(1),Y_i(0))_{i=1}^{n_s}\iid P_s$ with i.i.d.~treatment indicators $T_i\iid \text{Bern}(p)$ for some constant $p\in (0,1)$, independent of everything else. The observed outcome is then $Y_i=Y_i(T_i)$ following the SUTVA~\citep{imbens2015causal}. 
The estimand is $\EE[Y(1)-Y(0)] = \EE[Y\given T=1] - \EE[Y\given T=0]$, for which a difference-in-mean estimator is $\hat\theta_s = \hat\EE_s[Y\given T=1] - \hat\EE_s[Y\given T=0]$, where $\hat\EE_s[\cdot\given T=t]$ denotes the empirical mean within treatment group $T=t$ in the source data. 
One can apply our framework to the treated and control groups separately. In this case, the distribution shift model posits that the treatment assignment distribution is held fixed, and the random/hybrid shift applies to the $(X,Y)$ distribution within each treatment group. A natural approach under this model is then to combine the AIDW and AIHW estimators for the per-group mean outcome to form the final estimator for the target average treatment effect.

\section{Proofs}

\subsection{Proof of Theorem~\ref{theorem:aidw}}
\label{app:subsec_proof_aidw}
\begin{proof}
Let us first define
\begin{equation*}
    \hat \theta_{oracle} = \frac{1}{n_s} \sum_{i=1}^{n_s} (Y_i -  Q(X_i)) + \alpha \frac{1}{n_s} \sum_{i=1}^{n_s} Q(X_i)  + (1-\alpha) \frac{1}{n_t} \sum_{i=1}^{n_t}  Q(X_i').
\end{equation*}
Since $\hat Q$ is trained on auxiliary data independent of the evaluation samples, we condition on that auxiliary training sample throughout the remainder bounds below and treat $\hat Q$ as fixed.
Let $\mu_{\hat Q} = E_s[\hat Q]$ and $\mu_{Q} = E_s[Q]$. First note that in the definition of $\hat \theta_{\mathrm{AIDW}}(\hat \alpha)$ and $\hat \theta_{oracle}$ we can replace $\hat Q$ by $\hat Q - \mu_{\hat Q}$ and $Q$ by $Q - \mu_{Q}$ without changing $\hat \theta_{\mathrm{AIDW}}(\hat \alpha)$ or $\hat \theta_{oracle}$ (the contributions of any constant in the source residual term and the source/target prediction terms cancel because $\alpha+(1-\alpha)=1$). For the entire remainder analysis up to and including \eqref{eq:negligible} and the displayed bound for $\hat\theta_{\mathrm{AIDW}}(\hat\alpha)-\hat\theta_{\mathrm{oracle}}$, we therefore work with these centered versions and, with a slight abuse of notation, continue to write $\hat Q$ and $Q$; in particular, $E_s[\hat Q]=E_s[Q]=0$ throughout that block. The bounds we derive transfer immediately to the uncentered objects because $\hat \theta_{\mathrm{AIDW}}(\hat \alpha) - \hat \theta_{oracle}$ is invariant under constant shifts in $\hat Q$ and $Q$. We will now show that $\hat \theta_{\mathrm{AIDW}}(\hat \alpha) - \hat \theta_{oracle} = o_P(1/\sqrt{n_s})$.
For the target remainder, define $\Delta(X)=\hat Q(X)-Q(X)$. Since both $\hat Q$ and $Q$ have mean zero under $P_s$, we also have $E_s[\Delta]=0$. Write
\begin{align*}
  \bar\Delta_t := \frac{1}{n_t}\sum_{i=1}^{n_t}\Delta(X_i')
  = \underbrace{\left(\frac{1}{n_t}\sum_{i=1}^{n_t}\Delta(X_i') - E_t[\Delta]\right)}_{\mathrm{(A)}}
  + \underbrace{\left(E_t[\Delta]-E_s[\Delta]\right)}_{\mathrm{(B)}}.
\end{align*}
Conditionally on the perturbation and the auxiliary training sample, term $\mathrm{(A)}$ is centered, so Chebyshev's inequality gives
\begin{align*}
 \mathbb{P}(\sqrt{n_s}|\mathrm{(A)}|\ge \epsilon \mid \hat Q)
 &= \mathbb{E} \left[\mathbb{P}(\sqrt{n_s}|\mathrm{(A)}|\ge \epsilon\mid W_{\bullet},\hat Q)\,\middle|\,\hat Q\right] \\
 &\le \frac{n_s}{n_t\epsilon^2}\,\mathbb{E}[\mathrm{Var}_t(\Delta)\mid \hat Q]
 \le \frac{n_s}{n_t\epsilon^2}\,\mathbb{E}[E_t[\Delta^2]\mid \hat Q]
 = \frac{n_s}{n_t\epsilon^2}\,E_s[\Delta^2].
\end{align*}
As $n_s\asymp n_t$ and by assumption $E_s[\Delta^2]=o_P(1)$, the conditional bound on the right-hand side is $o_P(1)$. Since the conditional probability on the left is bounded by $1$, averaging over the auxiliary training sample yields $\mathrm{(A)}=o_P(n_s^{-1/2})$.

To control $\mathrm{(B)}$, define $\bar W = J^{-1}\sum_{j=1}^J W_j$ and
\[
  m_j := J\int_{I_j}\Delta(x)\,dP_s(x,y), \qquad j=1,\ldots,J.
\]
Because $P_s(I_j)=1/J$ and $E_s[\Delta]=0$, we have $J^{-1}\sum_{j=1}^J m_j=0$. Moreover,
\[
  \mathrm{(B)}
  =
  \frac{1}{J\bar W}\sum_{j=1}^J (W_j-\bar W)m_j
  =
  \frac{1}{J\bar W}\sum_{j=1}^J (W_j-1)m_j.
\]
Using $W_j\ge c>0$ and Chebyshev's inequality,
\begin{align*}
 \mathbb{P}(\sqrt{n_s}|\mathrm{(B)}|\ge \epsilon \mid \hat Q)
 &\le \mathbb{P}\left(\frac{1}{c}\left|\frac{1}{J}\sum_{j=1}^J (W_j-1)m_j\right|\ge \frac{\epsilon}{\sqrt{n_s}}\,\middle|\,\hat Q\right) \\
 &= \mathbb{P}\left(\left|\frac{1}{J}\sum_{j=1}^J \frac{W_j-1}{c}m_j\right|\ge \frac{\epsilon}{\sqrt{n_s}}\,\middle|\,\hat Q\right) \\
 &\le \frac{n_s\,\mathrm{Var}(W_1)}{J^2 c^2 \epsilon^2}\sum_{j=1}^J m_j^2.
\end{align*}
By Jensen's inequality, $J^{-1}\sum_{j=1}^J m_j^2 \le E_s[\Delta^2]$, hence
\[
  \mathbb{P}(\sqrt{n_s}|\mathrm{(B)}|\ge \epsilon \mid \hat Q)
  \le \frac{n_s\,\mathrm{Var}(W_1)}{J c^2 \epsilon^2} E_s[\Delta^2]
  = o_P(1)
\]
because $n_s\asymp J$ and $E_s[\Delta^2]=o_P(1)$. Since the conditional probability on the left is bounded by $1$, averaging over the auxiliary training sample yields $\mathrm{(B)}=o_P(n_s^{-1/2})$, and we conclude that
\begin{equation}\label{eq:negligible}
	\left| \frac{1}{n_t} \sum_{i=1}^{n_t} \big(\hat Q(X_i') - Q(X_i')\big) \right| = o_P(1/\sqrt{n_s}).
\end{equation}
The source-sample analogue follows from the same conditional-Chebyshev argument: under the centering convention $E_s[\Delta]=0$,
\[
\mathbb{P} \left(\sqrt{n_s}\left|\frac{1}{n_s}\sum_{i=1}^{n_s}\Delta(X_i)\right|\ge \epsilon \,\middle|\, \hat Q\right)
\le \epsilon^{-2}E_s[\Delta^2]
= o_P(1),
\]
so $n_s^{-1}\sum_{i=1}^{n_s}(\hat Q(X_i)-Q(X_i))=o_P(n_s^{-1/2})$ after averaging over the auxiliary training sample.
Returning to the oracle replacement, we have
\begin{align*}
    \hat \theta_{\mathrm{AIDW}}(\hat \alpha) - \hat \theta_{oracle} &= \frac{1}{n_s} \sum_{i=1}^{n_s} \big(Q(X_i) - \hat Q(X_i)\big) \\
    &+  \left( \frac{1}{n_s} \sum_{i=1}^{n_s} \big(\hat Q(X_i) - Q(X_i)\big) \right) \hat \alpha \\
    &+ \left( \frac{1}{n_s} \sum_{i=1}^{n_s} Q(X_i) \right)  \left( \hat \alpha - \alpha \right) \\
    &+  \left( \frac{1}{n_t} \sum_{i=1}^{n_t} \big(\hat Q(X_i') - Q(X_i')\big) \right) ( 1 - \hat \alpha) \\
    &+  \left( \frac{1}{n_t} \sum_{i=1}^{n_t} Q(X_i')  \right) \left( ( 1 - \hat \alpha) - (1 - \alpha) \right) \\
    &= o_P(1/\sqrt{n_s}).
\end{align*}
Here, we used equation~\eqref{eq:negligible}, the source-sample analogue just proved, the assumption $\hat \alpha - \alpha=o_P(1)$, and Appendix Theorem~\ref{thm:matched-clt} applied with baseline law $P_s$, $K=1$, and $\phi_1=Q$, which gives $\hat{\mathbb E}_t[Q]-\mathbb{E}_s[Q] = O_P((1/n_t+\delta_{\mathrm{dist}}^2)^{1/2})$; under the centering convention $\mathbb{E}_s[Q]=0$, the target empirical average is therefore $O_P(n_s^{-1/2})$ by the balanced regime $n_s\asymp n_t\asymp J$. The source empirical average $n_s^{-1}\sum_i Q(X_i)$ is also $O_P(n_s^{-1/2})$ by the ordinary CLT under the same centering. To analyze the oracle term, we now return to the original uncentered regression function $Q(x)=\mathbb E_s[Y\mid X=x]$; adding constants to $Q$ leaves $\hat\theta_{\mathrm{oracle}}-\theta$ unchanged, so the remainder bound above is unaffected. Write
\begin{align*}
  r(X,Y) &:= Y-Q(X), \\
  I_n &:= \hat{\mathbb E}_s[r+\alpha Q]-\mathbb E_s[r+\alpha Q], \\
  S_n &:= (1-\alpha)\left(\hat{\mathbb{E}}_t[Q]-\mathbb{E}_t[Q]\right), \\
  D_n &:= \mathbb{E}_t[r+\alpha Q]-\mathbb{E}_s[r+\alpha Q].
\end{align*}
Then
\[
\hat \theta_{oracle} - \theta = I_n + S_n - D_n.
\]
We first handle the source block $I_n$, and then handle the target sampling and perturbation blocks jointly. If a variance component below is eventually zero, then the corresponding centered fluctuation is identically zero and is omitted from the normalization.
Throughout we use the identity $r+\alpha Q = Y-(1-\alpha)Q$, which holds for any fixed $Q$. By the Central Limit Theorem,
\[
v_{I,n}^{-1/2} I_n \rightarrow \mathcal{N}(0,1),
\qquad
v_{I,n}
=
\frac{1}{n_s}\mathrm{Var}_{P_s}(r+\alpha Q)
=
\frac{1}{n_s}\mathrm{Var}_{P_s}\big(Y-(1-\alpha)Q\big).
\]
We analyze the target sampling term and the perturbation term jointly by one
application of Appendix Theorem~\ref{thm:matched-clt}. Let
\[
f_1:=(1-\alpha)Q,
\qquad
f_2:=r+\alpha Q.
\]
Apply the theorem with baseline law $P_s$, two coordinates, the same
perturbation weights in both coordinates, and an unused independent empirical
coordinate for $f_2$. This auxiliary empirical coordinate is introduced only
to obtain the joint perturbation limit for $f_2$; it is discarded because
target outcomes are unobserved. Projecting the theorem's four-coordinate limit gives
the joint Gaussian limit of
\[
\sqrt J\left(
\hat{\mathbb E}_t[f_1]-\mathbb E_s[f_1],\,
\mathbb E_t[f_1]-\mathbb E_s[f_1],\,
\mathbb E_t[f_2]-\mathbb E_s[f_2]
\right).
\]
The linear combination with coefficients $(1,-1,-1)$ is
$\sqrt J(S_n-D_n)$. In the covariance matrix of
Theorem~\ref{thm:matched-clt}, the sampling difference
$\hat{\mathbb E}_t[f_1]-\mathbb E_t[f_1]$ is asymptotically uncorrelated
with the perturbation coordinate $\mathbb E_t[f_2]-\mathbb E_s[f_2]$.
Hence
\[
\sqrt J(S_n-D_n)
\Rightarrow
\mathcal N \left(0,
\frac{(1-\alpha)^2\mathrm{Var}_{P_s}(Q)}{\rho_t}
+
\mathrm{Var}(W_1)\mathrm{Var}_{P_s}(r+\alpha Q)
\right),
\qquad
\rho_t:=\lim_{J\to\infty}\frac{n_t}{J}.
\]
Set
\[
v_{S,n}
=
(1-\alpha)^2 \frac{1}{n_t}\mathrm{Var}_{P_s}(Q),
\qquad
v_{D,n}
=
\delta_\text{dist}^2 \mathrm{Var}_{P_s}(r+\alpha Q).
\]
equivalently $v_{D,n} = \delta_\text{dist}^2 \mathrm{Var}_{P_s}\big(Y-(1-\alpha)Q\big)$.
Since $J/n_t\to 1/\rho_t$ and $\delta_\text{dist}^2=J^{-1}\Var(W_1)$,
Slutsky's theorem gives
\begin{equation*}
   (v_{S,n}+v_{D,n})^{-1/2}(S_n-D_n)  \rightarrow \mathcal{N}(0,1).
\end{equation*}
If one of $v_{S,n}$ or $v_{D,n}$ is eventually zero, the same conclusion follows from the same display with that zero-variance component omitted.
Since the source sample is independent of the target covariate sample and the perturbation draw, the marginal limits combine into a joint limit in which the two quantities
\[
v_{I,n}^{-1/2}I_n
\quad\text{and}\quad
(v_{S,n}+v_{D,n})^{-1/2}(S_n-D_n),
\]
are independent standard-normal coordinates asymptotically. Applying Cramér--Wold and Slutsky's theorem to their deterministic linear combination gives
\begin{equation*}
  \left( v_{I,n}+v_{S,n}+v_{D,n} \right)^{-1/2}  \left(\hat \theta_{\mathrm{oracle}} - \theta \right) \rightarrow \mathcal{N} \left(0, 1 \right).
\end{equation*}
Since $v_{I,n}+v_{S,n}+v_{D,n}=s_n^2$, the oracle-replacement step shown above transfers the same limit to $\hat \theta_{\mathrm{AIDW}}(\hat \alpha)$.
This completes the proof.
\end{proof}

\subsection{Proof of Proposition~\ref{prop:exact_balance_benchmark}}
\label{app:subsec_proof_balance}

\begin{proof}
Because \eqref{eq:exact_balance_basis} holds and $Q_0(X)=\beta_0^\top b(X)$ lies in the balanced span,
\[
\hat{\mathbb{E}}_s[\hat w(X)Q_0(X)]
=
\beta_0^\top \hat{\mathbb{E}}_s[\hat w(X)b(X)]
=
\beta_0^\top \hat{\mathbb{E}}_t[b(X)]
=
\hat{\mathbb{E}}_t[Q_0(X)].
\]
Using $Y=Q_0(X)+\varepsilon$, we obtain
\begin{align*}
\hat\theta_{\mathrm{bal}}
&=
\hat{\mathbb{E}}_s[\hat w(X)\{Y-\hat\beta^\top b(X)\}]
+
\hat{\mathbb{E}}_t[\hat\beta^\top b(X)] \\
&=
\hat{\mathbb{E}}_s[\hat w(X)Y]
-
\hat\beta^\top \hat{\mathbb{E}}_s[\hat w(X)b(X)]
+
\hat\beta^\top \hat{\mathbb{E}}_t[b(X)] \\
&=
\hat{\mathbb{E}}_s[\hat w(X)Y] \\
&=
\hat{\mathbb{E}}_s[\hat w(X)Q_0(X)]
+
\hat{\mathbb{E}}_s[\hat w(X)\varepsilon] \\
&=
\hat{\mathbb{E}}_t[Q_0(X)]
+
\hat{\mathbb{E}}_s[\hat w(X)\varepsilon].
\end{align*}
Subtracting
\[
\theta = \mathbb{E}_t[Y] = \mathbb{E}_t[Q_0(X)] + \mathbb{E}_t[\varepsilon]
\]
gives
\[
\hat\theta_{\mathrm{bal}} - \theta
=
\big\{\hat{\mathbb{E}}_t[Q_0(X)] - \mathbb{E}_t[Q_0(X)]\big\}
+
\hat{\mathbb{E}}_s[\hat w(X)\varepsilon]
-
\mathbb{E}_t[\varepsilon].
\]
Write
\[
A := \hat{\mathbb{E}}_t[Q_0(X)] - \mathbb{E}_t[Q_0(X)],
\qquad
B := \hat{\mathbb{E}}_s[\hat w(X)\varepsilon],
\qquad
C := \mathbb{E}_t[\varepsilon],
\]
so that $\hat\theta_{\mathrm{bal}}-\theta=A+B-C$. Let
\[
\mathcal{G}
:=
\sigma \left(X_1,\ldots,X_{n_s},X_1',\ldots,X_{n_t}',\hat w,P_t\right).
\]
Because $\hat w$ depends only on covariates and $\varepsilon_i$ is independent of $X_i$ with mean zero,
\[
\mathbb{E}[B \mid \mathcal{G}]
=
\frac{1}{n_s}\sum_{i=1}^{n_s} \hat w(X_i)\,\mathbb{E}[\varepsilon_i\mid \mathcal G]
=
0.
\]
Moreover, conditional on $\mathcal{G}$, the source residuals are independent with conditional variances $\sigma_\varepsilon^2$, so
\[
\mathrm{Var}(B \mid \mathcal{G})
=
\frac{\sigma_\varepsilon^2}{n_s^2}\sum_{i=1}^{n_s} \hat w(X_i)^2
=
\frac{\sigma_\varepsilon^2}{n_s}\hat{\mathbb{E}}_s[\hat w(X)^2].
\]
Therefore,
\[
\mathrm{Var}(B)
=
\mathbb{E}[\mathrm{Var}(B \mid \mathcal{G})]
=
\frac{\sigma_\varepsilon^2}{n_s}\,\mathbb{E} \left[\hat{\mathbb{E}}_s[\hat w(X)^2]\right].
\]
Similarly, conditional on $P_t$, the target covariate sample is independent of the source sample and is i.i.d.\ from the realized target covariate marginal, so the target-sampling term satisfies
\[
\mathbb{E}[A \mid P_t] = 0,
\qquad
\mathrm{Var}(A \mid P_t) = \frac{1}{n_t}\mathrm{Var}_{P_t}(Q_0(X)),
\]
hence
\[
\mathrm{Var}(A)
=
\frac{1}{n_t}\,\mathbb{E} \left[\mathrm{Var}_{P_t}(Q_0(X))\right].
\]
The cross-covariances vanish by iterated expectation. Since $A$ is $\mathcal{G}$-measurable,
\[
\mathbb{E}[AB]
=
\mathbb{E} \left[A\,\mathbb{E}[B \mid \mathcal{G}]\right]
=
0.
\]
Since $C$ is measurable with respect to $P_t \subseteq \mathcal{G}$,
\[
\mathbb{E}[BC]
=
\mathbb{E} \left[C\,\mathbb{E}[B \mid \mathcal{G}]\right]
=
0.
\]
Finally,
\[
\mathbb{E}[AC]
=
\mathbb{E} \left[C\,\mathbb{E}[A \mid P_t]\right]
=
0.
\]
Combining the variance formulas for $A$, $B$, and $C$ yields the exact identity
\begin{equation}\label{eq:exact_balance_variance_exact}
\mathrm{Var}(\hat\theta_{\mathrm{bal}}-\theta)
=
\frac{1}{n_t}\,\mathbb{E} \left[\mathrm{Var}_{P_t}(Q_0(X))\right]
+
\frac{\sigma_\varepsilon^2}{n_s}\,\mathbb{E} \left[\hat{\mathbb{E}}_s[\hat w(X)^2]\right]
+
\mathrm{Var} \left(\mathbb{E}_t[\varepsilon]\right).
\end{equation}
For the asymptotic simplifications, we use a direct cellwise calculation. Define $m_j := J\int_{I_j}\varepsilon\,dP_s$ and $\bar W := J^{-1}\sum_{j=1}^J W_j$. Since $P_s(I_j)=1/J$ and $\mathbb{E}_s[\varepsilon]=0$, we have $J^{-1}\sum_{j=1}^J m_j = 0$ and
\[
\mathbb{E}_t[\varepsilon]
=
\frac{1}{J\bar W}\sum_{j=1}^J W_j m_j
=
\frac{1}{J\bar W}\sum_{j=1}^J (W_j-1) m_j.
\]
The partition-refinement assumption gives $J^{-1}\sum_{j=1}^J m_j^2 \to \mathrm{Var}_{P_s}(\varepsilon) = \sigma_\varepsilon^2$. To handle the random normalization directly, let $U_j=W_j/\bar W-1$. Then $\sum_j U_j=0$,
\[
\mathbb{E}_t[\varepsilon]
=
\frac1J\sum_{j=1}^J U_jm_j,
\qquad
\mathbb E[U_j]=0,
\]
where the last equality follows from exchangeability and $\sum_j W_j/\bar W=J$. Hence $\mathbb E[\mathbb E_t[\varepsilon]]=0$. By exchangeability, $\mathbb E[U_j^2]$ is common across $j$ and $\mathbb E[U_iU_j]$ is common across $i\ne j$. Since $\sum_jU_j=0$,
\[
0
=
\mathbb E \left[\left(\sum_{j=1}^J U_j\right)^2\right]
=
J\mathbb E[U_1^2]+J(J-1)\mathbb E[U_1U_2],
\]
so $\mathbb E[U_1U_2]=-\mathbb E[U_1^2]/(J-1)$. Using also $\sum_jm_j=0$, we obtain the exact identity
\[
\mathrm{Var} \left(\mathbb{E}_t[\varepsilon]\right)
=
\frac{1}{J^2}\sum_{j=1}^Jm_j^2
\left\{\mathbb E[U_1^2]-\mathbb E[U_1U_2]\right\}
=
\frac{1}{J}\frac{J}{J-1}
\left(\frac1J\sum_{j=1}^Jm_j^2\right)
\mathbb E[U_1^2].
\]
Finally, $\bar W\to1$ in probability, and $(W_1/\bar W-1)^2\le C(1+W_1^2)$ because $\bar W\ge c$. Dominated convergence therefore gives $\mathbb E[U_1^2]\to\mathrm{Var}(W_1)$. Together with $\delta_{\mathrm{dist}}^2 = J^{-1}\mathrm{Var}(W_1)$, this yields
\[
\mathrm{Var} \left(\mathbb{E}_t[\varepsilon]\right)
=
\delta_{\mathrm{dist}}^2 \,\sigma_\varepsilon^2 + o(J^{-1}).
\]
For $Q_0$, which lies in $L^2(P_s)$ because $\mathbb E_s[\|b(X)\|_2^2]<\infty$ and $Q_0(X)=\beta_0^\top b(X)$, the first two moment identities follow by exchangeability of the normalized weights: for every cell $j$, $\mathbb E[W_j/\bar W]=1$, and hence $\mathbb E[\mathbb E_t[Q_0]]=\mathbb E_s[Q_0]$ and $\mathbb E[\mathbb E_t[Q_0^2]]=\mathbb E_s[Q_0^2]$. Applying the preceding centered cellwise calculation to $Q_0-\mathbb E_s[Q_0]$ gives $\mathbb E[\mathbb E_t[Q_0]^2]=\mathbb E_s[Q_0]^2 + O(J^{-1})$. Hence
\[
\mathbb{E} \left[\mathrm{Var}_{P_t}(Q_0(X))\right]
=
\mathbb{E}_s[Q_0^2] - \mathbb E[\mathbb E_t[Q_0]^2]
=
\mathrm{Var}_{P_s}(Q_0(X)) + O(J^{-1}).
\]
Since $n_t/J \to \rho_t \in (0,\infty)$,
\[
\frac{1}{n_t}\,\mathbb{E} \left[\mathrm{Var}_{P_t}(Q_0(X))\right]
=
\frac{1}{n_t}\,\mathrm{Var}_{P_s}(Q_0(X)) + o(J^{-1}).
\]
Substituting these relations into \eqref{eq:exact_balance_variance_exact} shows that the leading term is \eqref{eq:exact_balance_variance_asymptotic}.
\end{proof}

\subsection{Proof of Corollary~\ref{cor:aidw_dominates_exact_balance}}
\label{app:subsec_compare_aidw}

\begin{proof}
Specializing Theorem~\ref{theorem:aidw} to $Q=Q_0$ and the variance-minimizing choice $\alpha^*$ gives the leading-order oracle AIDW variance
\[
s^2_{\mathrm{AIDW},*}
=
\frac{\mathrm{Var}_{P_s}(Q_0(X))}{\,n_t + \frac{1}{1/n_s+\delta_{\mathrm{dist}}^2}\,}
+
\left(\frac{1}{n_s}+\delta_{\mathrm{dist}}^2\right)\sigma_\varepsilon^2.
\]
Proposition~\ref{prop:exact_balance_benchmark} gives the leading-order variance
\[
\mathrm{AVar}(\hat\theta_{\mathrm{bal}}-\theta)
=
\frac{1}{n_t}\,\mathrm{Var}_{P_s}(Q_0(X))
+
\left(
\frac{1}{n_s}\,\mathbb{E} \left[\hat{\mathbb{E}}_s[\hat w(X)^2]\right]
+
\delta_{\mathrm{dist}}^2
\right)\sigma_\varepsilon^2.
\]
Because the first coordinate of $b(X)$ is 1, exact balance implies
\[
\hat{\mathbb{E}}_s[\hat w(X)] = 1.
\]
Hence, by Jensen's inequality,
\[
\hat{\mathbb{E}}_s[\hat w(X)^2]
\ge
\hat{\mathbb{E}}_s[\hat w(X)]^2
=
1
\qquad\text{almost surely},
\]
so
\[
\frac{\sigma_\varepsilon^2}{n_s}\,
\mathbb{E} \left[\hat{\mathbb{E}}_s[\hat w(X)^2]\right]
\ge
\frac{\sigma_\varepsilon^2}{n_s}.
\]
Subtracting the oracle AIDW variance formula from the exact-balancing benchmark therefore gives
\begin{align*}
\mathrm{AVar}(\hat\theta_{\mathrm{bal}}-\theta) - s^2_{\mathrm{AIDW},*}
=\,\,&
\left(
\frac{1}{n_t}
-
\frac{1}{\,n_t + \frac{1}{1/n_s+\delta_{\mathrm{dist}}^2}\,}
\right)\mathrm{Var}_{P_s}(Q_0(X)) \\
&+
\frac{\sigma_\varepsilon^2}{n_s}
\left(
\mathbb{E} \left[\hat{\mathbb{E}}_s[\hat w(X)^2]\right]-1
\right),
\end{align*}
which is \eqref{eq:aidw_vs_exact_balance_gap}. Because $1/(1/n_s+\delta_{\mathrm{dist}}^2)>0$, the first displayed coefficient is strictly positive at the displayed leading-order scale. The residual-weight term is nonnegative by Jensen's inequality and is strictly positive exactly when the realized values $\hat w(X_i)$ are nonconstant across the source sample with positive probability. Thus the leading-order gap is strictly positive whenever $\mathrm{Var}_{P_s}(Q_0(X))>0$ or the realized exact-balancing weights are nonconstant across the source sample with positive probability.
\end{proof}

\subsection{Asymptotic Gaussianity of AIHW}
\label{app:aihw-gaussianity}

\begin{proof}
For any augmentation $m$, define
\begin{align}
\widetilde\theta(m)
&:=
\hat{\mathbb E}_s[w_{\mathcal D}(Y-m)]
+\hat{\mathbb E}_t[m],
\label{eq:aihw-oracle-generic}
\end{align}  We will first prove the CLT for the oracle $\widetilde\theta(m_\infty)$. Let $f_\infty(X,Y):=Y-m_\infty(X)$. Since
$m_\infty\in\mathcal M_{\mathcal D}$,
\[
\mathbb E_{s,\mathcal D}[f_\infty]
=
\mathbb E_s \left[
w_{\mathcal D}(X_{\mathcal D})
\mathbb E_s[f_\infty\mid X_{\mathcal D}]
\right]
=0.
\]
Consequently,
\[
\mathbb E_s[w_{\mathcal D}f_\infty]=\mathbb E_{s,\mathcal{D}}[f_\infty]=0.
\]
The oracle estimator $\widetilde\theta(m_\infty)$ therefore satisfies
the exact decomposition
\begin{align}
\widetilde\theta(m_\infty)-\theta
= \, &
(\hat{\mathbb E}_s-\mathbb E_s)[w_{\mathcal D}f_\infty]
+(\hat{\mathbb E}_t-\mathbb E_t)[m_\infty]
-(\mathbb E_t-\mathbb E_{s,\mathcal{D}})[f_\infty].
\label{eq:aihw-oracle-exact-decomposition}
\end{align}
For the Gaussian limit, combine a source-sample CLT with Appendix
Theorem~\ref{thm:matched-clt}, applied to
$\phi_1=m_\infty$ and $\phi_2=Y-m_\infty$ using the same perturbation
weights. This gives the joint convergence
\[
\sqrt J
\begin{pmatrix}
(\hat{\mathbb E}_s-\mathbb E_s)[w_{\mathcal D}f_\infty]\\
(\hat{\mathbb E}_t-\mathbb E_t)[m_\infty]\\
(\mathbb E_t-\mathbb E_{s,\mathcal{D}})[f_\infty]
\end{pmatrix}
\xrightarrow{d}
\mathcal N \left(0,
\begin{pmatrix}
\rho_s^{-1}\Var_{P_s}(w_{\mathcal D}f_\infty) & 0 & 0\\
0 & \rho_t^{-1}\Var_{P_{s,\mathcal D}}(m_\infty) & 0 \\
0 & 0 & \Var(W_1)\Var_{P_{s,\mathcal{D}}}(f_\infty)
\end{pmatrix}
\right).
\]
The source coordinate is independent of the other two. The zero covariance
between the last two coordinates follows directly from the covariance matrix
in Theorem~\ref{thm:matched-clt}: subtracting the perturbed population mean
from the empirical $m_\infty$ coordinate cancels its covariance with the
population perturbation of $f_\infty$. Consequently,
\[
\sqrt J\{\widetilde\theta(m_\infty)-\theta\}
\xrightarrow{d}\mathcal N(0,v_m),
\]
where
\[
v_m
:=
\rho_s^{-1}\Var_{P_s}(w_{\mathcal D}f_\infty)
+\rho_t^{-1}\Var_{P_{s,\mathcal{D}}}(m_\infty)
+\Var(W_1)\Var_{P_{s,\mathcal{D}}}(f_\infty).
\]
The sample-size limits give $JV_n(m_\infty)\to v_m$, so the nondegeneracy
condition and Slutsky's theorem yield the asserted limit.

It remains to compare the estimator directly with this oracle.
Define
\[
a:=\hat w_{\mathcal D}-w_{\mathcal D},
\qquad
\Delta:=\hat m_n-m_\infty.
\]
Direct subtraction gives
\begin{align}
\hat\theta-\widetilde\theta(m_\infty)
=\, &
\hat{\mathbb E}_s[a\{Y-\hat m_n(X)\}]
-\hat{\mathbb E}_s[w_{\mathcal D}\Delta]
+\hat{\mathbb E}_t[\Delta].
\label{eq:aihw-plug-in-remainder}
\end{align}
For the first term, its mean satisfies
\begin{align*}
\mathbb E_s[a\{Y-\hat m_n(X)\}]
&=
\mathbb E_s[a\{Y-m_\infty(X)\}]
+\mathbb E_s[a\{m_\infty(X)-\hat m_n(X)\}]\\
&=
\mathbb E_s[a\{m_\infty(X)-\hat m_n(X)\}],
\end{align*}
because $a$ is $X_{\mathcal D}$-measurable and
$m_\infty\in\mathcal M_{\mathcal D}$. Cauchy--Schwarz and
\eqref{eq:aihw-product-rate} therefore show that this mean is
$o_P(n_s^{-1/2})$. Conditional on the
auxiliary fits and $W_\bullet$, the main source sample is i.i.d.\ from
$P_s$. Uniform boundedness of $a$ and $\hat m_n$, the
$L^2(P_s)$-consistency of $a$, and $\mathbb E_s[Y^4]<\infty$ imply
\[
\mathbb E_s[a^2\{Y-\hat m_n(X)\}^2]
\le
\|a\|_{L^4(P_s)}^2
\|Y-\hat m_n(X)\|_{L^4(P_s)}^2
=o_P(1).
\]
Here
$\|a\|_{L^4(P_s)}^2\le\|a\|_\infty\|a\|_{L^2(P_s)}=o_P(1)$, while
boundedness of $\hat m_n$ and the fourth-moment assumption make the other
factor $O_P(1)$.
Conditional Chebyshev therefore shows that $ |\hat{\mathbb E}_s[a\{Y-\hat m_n(X)\}]| =  o_P(n_s^{-1/2})$.

For the remaining two terms in~\eqref{eq:aihw-plug-in-remainder}, use
$\mathbb E_s[w_{\mathcal D}\Delta]=\mathbb E_{s,\mathcal{D}}[\Delta]$ to write
\begin{align*}
-\hat{\mathbb E}_s[w_{\mathcal D}\Delta]
+\hat{\mathbb E}_t[\Delta]
=\;&
-(\hat{\mathbb E}_s-\mathbb E_s)[w_{\mathcal D}\Delta]
+(\hat{\mathbb E}_t-\mathbb E_t)[\Delta]
+(\mathbb E_t-\mathbb E_{s,\mathcal{D}})[\Delta].
\end{align*}
The first term is $(o_P(n_s^{-1/2})$ by conditional Chebyshev,
boundedness of $w_{\mathcal D}$, and
$\|\Delta\|_{L^2(P_s)}=o_P(1)$. For the target-sampling term, condition
first on $(\Delta,W_\bullet)$. The target observations are then i.i.d.\
from $P_t$, so
\[
\mathbb{E}  \left[
n_s\{(\hat{\mathbb E}_t-\mathbb E_t)\Delta\}^2
\, \middle| \, \Delta,W_\bullet
\right]
=
\frac{n_s}{n_t}\Var_{P_t}(\Delta)
\le
\frac{n_s}{n_t}\mathbb E_t[\Delta^2].
\]
It remains to average the last quantity over the perturbation weights. Put
$r_j:=\mathbb E_{s,\mathcal{D}}[\Delta^2\mid (X,Y)\in I_j]$. Then
\[
\mathbb E_t[\Delta^2]
=
\frac{1}{J}\sum_{j=1}^J\frac{W_j}{\bar W}r_j.
\]
Because $\Delta$ is independent of $W_\bullet$, the weights remain
exchangeable conditional on $\Delta$. Moreover,
$\sum_jW_j/\bar W=J$, so exchangeability implies
$\mathbb E[W_j/\bar W]=1$ for every $j$. Consequently,
\[
\mathbb E \left[\mathbb E_t[\Delta^2]\mid\Delta\right]
=
\frac{1}{J}\sum_{j=1}^Jr_j
=
\mathbb E_{s,\mathcal{D}}[\Delta^2].
\]
Combining the last three displays gives
\[
\mathbb E \left[
n_s\{(\hat{\mathbb E}_t-\mathbb E_t)\Delta\}^2
\, \middle| \, \Delta
\right]
\le
\frac{n_s}{n_t}\mathbb E_{s,\mathcal{D}}[\Delta^2]
=o_P(1).
\]
Chebyshev's inequality therefore yields
$(\hat{\mathbb E}_t-\mathbb E_t)\Delta=o_P(n_s^{-1/2})$.
Finally, centering $\Delta$ under $P_{s,\mathcal{D}}$, using its independence from
$W_\bullet$, and repeating the cellwise calculation gives
\[
\mathbb E \left[
n_s\{(\mathbb E_t-\mathbb E_{s,\mathcal{D}})\Delta\}^2
\, \middle| \, \Delta
\right]
\le
\frac{\Var(W_1) n_s}{c^2J}
\Var_{P_{s,\mathcal{D}}}(\Delta)
=o_P(1).
\]
All three terms are therefore $o_P(n_s^{-1/2})$, which proves oracle
replacement. Combining this fact with the oracle expansion and Gaussian limit
completes the proof.
\end{proof}

\subsection{MSE-optimal augmentation for AIHW}
\label{app:mse-optimal-aihw}

\begin{proof}
Let
\[
\varepsilon:=Y-Q(X),
\qquad
Z:=Q(X)-Q_{\mathcal D}(X_{\mathcal D}).
\]
Since $dP_{s,\mathcal D}=w_{\mathcal D}(X_{\mathcal D})\,dP_s$, conditioning under
$P_{s,\mathcal D}$ or $P_s$ gives the same law given $X_{\mathcal D}$. Hence
\[
\mathbb E_{s,\mathcal D}[\varepsilon\mid X]=0,
\qquad
\mathbb E_{s,\mathcal D}[Z\mid X_{\mathcal D}]=0.
\]
Every $m\in\mathcal M_{\mathcal D}$ has the unique representation
\[
m=Q_{\mathcal D}+u,
\qquad
\mathbb E_{s,\mathcal D}[u\mid X_{\mathcal D}]=0,
\]
and conversely every such $u$ is feasible. Moreover,
$Y-m=\varepsilon+Z-u$ and
$\mathbb E_{s,\mathcal D}[Y-m]=\mathbb E_s[w_{\mathcal D}(Y-m)]=0$.
The conditional mean-zero identities therefore
give
\begin{align*}
\Var_{P_s}\{w_{\mathcal D}(Y-m)\}
&=\mathbb E_{s,\mathcal D}[w_{\mathcal D}\varepsilon^2]
+\mathbb E_{s,\mathcal D}[w_{\mathcal D}(Z-u)^2],\\
\Var_{P_{s,\mathcal D}}(m)
&=\Var_{P_{s,\mathcal D}}(Q_{\mathcal D})+\mathbb E_{s,\mathcal D}[u^2],\\
\Var_{P_{s,\mathcal D}}(Y-m)
&=\mathbb E_{s,\mathcal D}[\varepsilon^2]+\mathbb E_{s,\mathcal D}[(Z-u)^2].
\end{align*}
Set
\[
a(X_{\mathcal D})
:=
\frac{w_{\mathcal D}(X_{\mathcal D})}{n_s}
+\delta_{\mathrm{dist}}^2,
\qquad
b:=\frac{1}{n_t}.
\]
The preceding identities reduce the objective to
\[
V_n(Q_{\mathcal D}+u)
=C_n+\mathbb E_{s,\mathcal D}[a(Z-u)^2+bu^2],
\]
where
\[
C_n
:=
\frac{1}{n_s}\mathbb E_{s,\mathcal D}[w_{\mathcal D}\varepsilon^2]
+\frac{1}{n_t}\Var_{P_{s,\mathcal D}}(Q_{\mathcal D})
+\delta_{\mathrm{dist}}^2\mathbb E_{s,\mathcal D}[\varepsilon^2]
\]
does not depend on $u$. Since $b>0$, $a+b>0$, and completing the
square pointwise gives
\[
a(Z-u)^2+bu^2
=
(a+b)\left(u-\frac{a}{a+b}Z\right)^2
+\frac{ab}{a+b}Z^2.
\]
Thus the unique candidate is
\[
u^*(X)
=
\frac{a(X_{\mathcal D})}{a(X_{\mathcal D})+b}Z(X).
\]
It is feasible because $a/(a+b)$ is $X_{\mathcal D}$-measurable:
\[
\mathbb E_{s,\mathcal D}[u^*\mid X_{\mathcal D}]
=
\frac{a}{a+b}\mathbb E_{s,\mathcal D}[Z\mid X_{\mathcal D}]
=0.
\]
For every feasible $u$, the same identity gives the exact gap
\[
V_n(Q_{\mathcal D}+u)-V_n(Q_{\mathcal D}+u^*)
=
\mathbb E_{s,\mathcal D}[(a+b)(u-u^*)^2]
\ge0,
\]
with equality only if $u=u^*$ $P_{s,\mathcal D}$-almost surely. Substitution gives
\eqref{eq:optimal-aihw-m}. Under the convention
$m_{\alpha_{\mathcal D}}=Q_{\mathcal D}+ (1-\alpha_{\mathcal D})(Q-Q_{\mathcal D})$, the corresponding coefficient
is $\alpha_{\mathcal D}^*=b/(a+b)$. Finally, if
$\mathbb E_{s,\mathcal D}[Z^2\mid X_{\mathcal D}]>0$, this coefficient and hence the
pooling function itself is uniquely identified.
\end{proof}

\subsection{Distributional CLT}\label{subsec:distributional-clt-matched}

We  now analyze the regime where the number of perturbation cells $J$ and the sample sizes $n_k$ grow at the same rate. Let $(I_j)_{j=1}^J$ be a measurable partition of the ambient sample space such that $\mathbb P^0(I_j)=1/J$ for all $j$. For fixed finite $K$, and with a perturbation-weight law that does not change with $J$, let $(W_j^1,\ldots,W_j^K)$ be a positive random vector with $\mathbb{E}[W_j^k]=1$ and $\mathrm{Var}(W_j^k)<\infty$, and assume these vectors are i.i.d.\ across $j$. This mean-one normalization is without loss of generality because the law $\mathbb P^k$ depends only on $W_j^k/\bar W^k$. Dependence across the population index $k$ is allowed within a cell and is summarized by the covariance matrix $\Sigma^W$ below. Define $\bar W^k := J^{-1}\sum_{j=1}^J W_j^k$ and, for $(x,y)\in I_j$,
\[
\frac{d\mathbb P^k}{d\mathbb P^0}(x,y)
=
\frac{W_j^k}{\bar W^k}.
\]
Conditionally on these weights, the $K$ samples are mutually independent, and the $k$th sample $\{(X_{ki},Y_{ki})\}_{i=1}^{n_k}$ is i.i.d.~from $\mathbb{P}^k$. Let $\hat{\mathbb{E}}^k[\phi]=\frac{1}{n_k}\sum_{i=1}^{n_k}\phi(X_{ki},Y_{ki})$ and retain the notation $\mathbb{E}^k[\phi]$ for the conditional expectation under $\mathbb{P}^k$. Assume furthermore that the partitions asymptotically refine $L^2(\mathbb P^0)$: for every $f\in L^2(\mathbb P^0)$,
\[
\left\|f(X,Y)-\sum_{j=1}^J 1_{(X,Y)\in I_j}\,\mathbb E^0[f(X,Y)\mid (X,Y)\in I_j]\right\|_{L^2(\mathbb P^0)}
\to 0.
\]

\begin{theorem}[Distributional CLT]\label{thm:matched-clt}
Under the setup and partition-refinement assumption displayed above, suppose $\phi_k\in L^{2}(\mathbb{P}^0)$ for $k=1,\ldots,K$, and assume that $n_k\rightarrow\infty$, $J\rightarrow\infty$, with $n_k/J \rightarrow\rho_k\in(0,\infty)$ and $W_j^k \ge c$ almost surely for all $j,k,J$ and some constant $c > 0$. Let $\hat{\bm{\Phi}}=(\hat{\mathbb{E}}^1[\phi_1],\ldots,\hat{\mathbb{E}}^K[\phi_K])^{\top}$, $\bm{\Phi}=(\mathbb{E}^1[\phi_1],\ldots,\mathbb{E}^K[\phi_K])^{\top}$, and $\bm{\Phi}^0=(\mathbb{E}^0[\phi_1],\ldots,\mathbb{E}^0[\phi_K])^{\top}$. Write $\bm{\phi}(X,Y)=(\phi_1(X,Y),\ldots,\phi_K(X,Y))^\top$. Denote by $\Sigma^W$ the distributional covariance matrix with entries $(\Sigma^W)_{ij}=\mathrm{Cov}(W_1^i,W_1^j)$, and let $C$ be the $K\times K$ matrix with entries
\[
C_{ij}:=(\Sigma^W)_{ij}\,\mathrm{Cov}_{\mathbb{P}^0}(\phi_i(X,Y),\phi_j(X,Y)).
\]
Finally set $R=\mathrm{diag}(\mathrm{Var}_{\mathbb{P}^0}(\phi_1)/\rho_1,\ldots,\mathrm{Var}_{\mathbb{P}^0}(\phi_K)/\rho_K)$. Then
\[
\sqrt{J}
\begin{pmatrix}
\hat{\bm{\Phi}}-\bm{\Phi}^0 \\
\bm{\Phi}-\bm{\Phi}^0
\end{pmatrix}
\xrightarrow{d} \mathcal{N}\left(0,
\begin{pmatrix}
C+R & C \\
C & C
\end{pmatrix}
\right).
\]
\end{theorem}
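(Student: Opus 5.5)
We split the vector into a sampling fluctuation and a perturbation fluctuation:
\[
\hat{\bm\Phi}-\bm\Phi^0=(\hat{\bm\Phi}-\bm\Phi)+(\bm\Phi-\bm\Phi^0).
\]
The plan is to show that, conditionally on $W_\bullet=(W_j^k)_{j,k}$, the first block is asymptotically $\mathcal N(0,R)$ at scale $\sqrt J$. Marginally, the second block is asymptotically $\mathcal N(0,C)$. We then combine the two through conditional characteristic functions. This gives the claimed block structure: cross-covariance $C$ between $\hat{\bm\Phi}-\bm\Phi^0$ and $\bm\Phi-\bm\Phi^0$, and total covariance $C+R$. The key point is that the first block is conditionally independent of $W_\bullet$ in the limit.

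\textbf{Perturbation block.} Center each $\phi_k$ under $\mathbb P^0$, which is harmless since both blocks are differences. Set $m_j^k:=J\int_{I_j}\phi_k\,d\mathbb P^0$, so that $J^{-1}\sum_j m_j^k=0$. Then
\[
\mathbb E^k[\phi_k]-\mathbb E^0[\phi_k]=\frac{1}{\bar W^k}\,\frac1J\sum_{j}(W_j^k-1)m_j^k .
\]
Since $\bar W^k\to1$ in probability, Slutsky reduces the problem to
\[
J^{-1/2}\sum_j (\bm W_j-\bm 1)\circ\bm m_j,
\]
a sum of independent, non-identically distributed mean-zero vectors with deterministic coefficients. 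Its covariance is
\[
J^{-1}\sum_j \Sigma^W_{ik}\,m_j^i m_j^k\to \Sigma^W_{ik}\,\mathrm{Cov}_{\mathbb P^0}(\phi_i,\phi_k)=C_{ik},
\]
by the partition-refinement assumption, since the step-function projection converges in $L^2$.

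We verify Lindeberg via Cramér--Wold. The main obstacle is that the $m_j$ are not uniformly bounded; we only have $L^2$ control. First, $J^{-1}\max_j (m_j^k)^2\to0$, because $J^{-1}\sum_j (m_j^k)^2$ converges and its projections are uniformly integrable (approximate $\phi_k$ by a bounded function in $L^2$). Together with $W$ having a fixed law with finite variance, this gives the Lindeberg condition through $\mathbb E[(W-1)^2\ind\{|W-1|>\epsilon/\max_j|m_j|J^{-1/2}\}]\to0$.

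\textbf{Sampling block.} Conditionally on $W_\bullet$, the $K$ samples are independent and i.i.d.\ from $\mathbb P^k$. The conditional variance of $\sqrt J(\hat{\mathbb E}^k-\mathbb E^k)[\phi_k]$ is $(J/n_k)\mathrm{Var}_{\mathbb P^k}(\phi_k)$. We show that $\mathrm{Var}_{\mathbb P^k}(\phi_k)\to\mathrm{Var}_{\mathbb P^0}(\phi_k)$ in probability. Writing
\[
\mathbb E^k[\phi_k^2]=\frac{1}{\bar W^k}\,\frac1J\sum_j W_j^k s_j^k,\qquad s_j^k:=J\int_{I_j}\phi_k^2\,d\mathbb P^0,
\]
this is a weak law of large numbers for a weighted average with $J^{-1}\sum_j s_j^k=\mathbb E^0\phi_k^2$. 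It needs a truncation argument, since only $\phi_k\in L^2$ and $W$ has only a second moment; we use the fact that $\max_j s_j^k/J\to0$ by uniform integrability, and apply the WLLN for triangular arrays. The same argument applied to $\phi_k^2\ind\{|\phi_k|>M\}$ gives a conditional Lindeberg condition in probability, using the lower bound $W_j^k\ge c$ to control $\bar W^k$. Hence conditionally the block is asymptotically $\mathcal N(0,R)$ in probability, with diagonal covariance because the samples are independent.

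\textbf{Joint limit.} For $s,t\in\mathbb R^K$, write $A_J:=\sqrt J(\hat{\bm\Phi}-\bm\Phi)$ and $B_J:=\sqrt J(\bm\Phi-\bm\Phi^0)$. Then
\[
\mathbb E\bigl[e^{i s^\top A_J+i t^\top B_J}\bigr]
=\mathbb E\bigl[e^{i t^\top B_J}\,\mathbb E[e^{i s^\top A_J}\mid W_\bullet]\bigr].
\]
The inner factor converges in probability to $e^{-s^\top R s/2}$ and is bounded by $1$, so dominated convergence together with the perturbation-block CLT gives $e^{-s^\top Rs/2-t^\top Ct/2}$. Thus $(A_J,B_J)$ are asymptotically independent Gaussians. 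The linear map $(A,B)\mapsto(A+B,B)$ then yields the stated covariance $\begin{pmatrix}C+R&C\\ C&C\end{pmatrix}$.
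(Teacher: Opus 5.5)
Your proposal is correct. Its final step matches the paper's: you condition on $W_\bullet$, show that the conditional characteristic function of $\sqrt J(\hat{\bm\Phi}-\bm\Phi)$ converges in probability to $e^{-s^\top Rs/2}$, and use boundedness by one to factor the joint characteristic function. Where you differ is in how square-integrable $\phi_k$ are handled.

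\textbf{The paper's route.} It first proves everything for bounded $\phi_k$. There the Lindeberg check for the perturbation sum is trivial, and the conditional characteristic function of the sampling block has a deterministic $O(J^{-1/2})$ Taylor remainder. It then extends to $L^2(\mathbb P^0)$ in three steps: mean-matched truncation, a uniform continuity bound $\mathbb E\|S_J(\bm\psi)\|_2^2+\mathbb E\|D_J(\bm\psi)\|_2^2\le C\sum_k\|\psi_k\|_{L^2(\mathbb P^0)}^2$ (which uses exchangeability and $\bar W^k\ge c$), and the converging-together theorem.

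\textbf{Your route.} You work with $L^2$ functions in one pass. The single fact doing the work is that individual cells are uniformly negligible. By Cauchy--Schwarz, $(m_j^k)^2/J\le\int_{I_j}\phi_k^2\,d\mathbb P^0$. Since $\mathbb P^0(I_j)=1/J$ for every $j$, absolute continuity of the integral gives $\max_j\int_{I_j}\phi_k^2\,d\mathbb P^0\to0$. This is a cleaner justification than your uniform-integrability phrasing.

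This fact also shows that no truncation is needed in your weak law. Chebyshev gives directly
\[
\mathrm{Var}\Bigl(J^{-1}\sum_j(W_j^k-1)s_j^k\Bigr)\le\mathrm{Var}(W_1^k)\,\bigl(\max_j s_j^k/J\bigr)\,\bigl(J^{-1}\sum_j s_j^k\bigr)\to0 .
\]
The same bound applied to $\phi_k^2 1\{|\phi_k|>M\}$ gives the conditional Lindeberg condition in probability. For the variance replacement you also need $\mathbb E^k[\phi_k]\to\mathbb E^0[\phi_k]$, which your perturbation block supplies.

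\textbf{What each approach buys.} Your argument is shorter and skips the approximation step. It also uses only $\bar W^k\to1$ in probability, so the hypothesis $W_j^k\ge c$ is actually dispensable on your route. The cost is that you must invoke a conditional Lindeberg--Feller theorem whose hypotheses hold only in probability, via a subsequence/almost-sure argument or an explicit Lindeberg remainder bound. The paper's bounded case avoids this with a deterministic remainder. Its cellwise $L^2$ bound is also the same estimate it reuses for the remainder terms in the AIDW and AIHW proofs.
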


\begin{proof}
The proof follows a similar technique as in \citep{zhang2025data,jeong2024out}, but we have to keep track of the joint distribution of empirical means and randomly shifted population means.

First we consider bounded $\phi$. Let $D_J=\sqrt{J}(\bm{\Phi}-\bm{\Phi}^0)$ and $S_J=\sqrt{J}(\hat{\bm{\Phi}}-\bm{\Phi})$, so that
\[
\sqrt{J}
\begin{pmatrix}
\hat{\bm{\Phi}}-\bm{\Phi}^0 \\
\bm{\Phi}-\bm{\Phi}^0
\end{pmatrix}
=
\begin{pmatrix}
S_J+D_J \\
D_J
\end{pmatrix}.
\]
We first prove the limit for $D_J$ directly. Let
\[
m_{j,k}:=J\int_{I_j}\phi_k\,d\mathbb P^0,
\qquad
\bar m_k:=\mathbb E^0[\phi_k],
\qquad
\tilde m_{j,k}:=m_{j,k}-\bar m_k .
\]
Because $J^{-1}\sum_j\tilde m_{j,k}=0$,
\[
D_{J,k}
=
\frac{1}{\bar W^k\sqrt J}
\sum_{j=1}^J (W_j^k-1)\tilde m_{j,k}.
\]
Since $\bar W^k\to1$ in probability, it is enough by Slutsky's theorem to analyze the unnormalized triangular array. For any $a\in\mathbb R^K$, the summands
\[
\frac{1}{\sqrt J}\sum_{k=1}^K a_k(W_j^k-1)\tilde m_{j,k},
\qquad j=1,\ldots,J,
\]
are independent, centered, and satisfy Lindeberg's condition: the $\tilde m_{j,k}$'s are uniformly bounded in this bounded-$\phi$ step, while the vector $W_j-\mathbf 1$ has finite second moment. Indeed, for any fixed $\epsilon>0$, the Lindeberg event requires $\|W_j-\mathbf 1\|\gtrsim \sqrt J$, so the average truncated second moment tends to zero by the fixed-law finite second moment assumption. Their variance converges to
\[
\sum_{k,\ell=1}^K
a_ka_\ell(\Sigma^W)_{k\ell}
\left(\frac1J\sum_{j=1}^J\tilde m_{j,k}\tilde m_{j,\ell}\right)
\to
a^\top C a,
\]
because the partition-refinement assumption makes the covariance of the cellwise projections converge to
$\mathrm{Cov}_{\mathbb P^0}(\phi_k,\phi_\ell)$. The triangular-array CLT and Cramér--Wold therefore give
\[
 D_J \Rightarrow \mathcal{N}(0,\,C), \qquad C_{ij}=(\Sigma^W)_{ij}\,\mathrm{Cov}_{\mathbb{P}^0}(\phi_i(X,Y),\phi_j(X,Y)).
\]
For the sampling component, conditionally on $W$, the $K$ samples are mutually independent and the $k$th sample is i.i.d.\ from $\mathbb P^k$. Fix an arbitrary vector $t_1=(t_{11},\ldots,t_{1K})\in\mathbb R^K$. Then
\[
t_1^\top S_J
=
\sum_{k=1}^K \sum_{i=1}^{n_k}
\frac{t_{1k}\sqrt J}{n_k}
\left\{\phi_k(X_{ki},Y_{ki})-\mathbb E^k[\phi_k]\right\}.
\]
Because the $\phi_k$'s are bounded in this part of the proof, conditionally on $W$ the summands in $t_1^\top S_J$ are independent, centered, and uniformly bounded by a constant times $\max_k\sqrt J/n_k$. The conditional Lyapunov (equivalently, characteristic-function Taylor) remainder is bounded by a deterministic constant, depending on $\|\phi_k\|_\infty$ and $t_1$, times
\[
\sum_{k=1}^K n_k\left|\frac{t_{1k}\sqrt J}{n_k}\right|^3
=O(J^{-1/2}).
\]
Thus, conditional on $W$,
\[
\mathbb E \left[e^{i t_1^\top S_J}\mid W\right]
=
\exp \left\{
-\frac{1}{2}
\sum_{k=1}^K t_{1k}^2\frac{J}{n_k}\mathrm{Var}_{\mathbb P^k}(\phi_k)
+ o(1)
\right\},
\]
where the $o(1)$ is deterministic for fixed $t_1$. It remains only to replace the conditional variances. Because $\phi_k$ is bounded, the same perturbation calculation that yields the limit for $D_J$, now applied to the one-dimensional functions $\phi_k$ and $\phi_k^2$, gives
\[
\mathbb E^k[\phi_k]-\mathbb E^0[\phi_k]=O_P(J^{-1/2}),
\qquad
\mathbb E^k[\phi_k^2]-\mathbb E^0[\phi_k^2]=O_P(J^{-1/2}),
\]
and therefore
\[
\Var_{\mathbb P^k}(\phi_k)=\Var_{\mathbb P^0}(\phi_k)+o_P(1).
\]
Since $n_k/J \to \rho_k\in(0,\infty)$, the conditional variance in the preceding display satisfies
\[
\sum_{k=1}^K t_{1k}^2\frac{J}{n_k}\mathrm{Var}_{\mathbb P^k}(\phi_k)
=
t_1^\top R t_1+o_P(1).
\]
Therefore
\[
A_J(t_1)
:=
\mathbb{E} \left[ e^{ i t_1^{\top} S_J }\,\middle|\, W\right]
\to
a(t_1)
:=
e^{-\frac{1}{2} t_1^{\top} R t_1}
\quad\text{in probability.}
\]
Because $|A_J(t_1)|\le 1$, this convergence also holds in $L^1$. It remains to establish the joint limit. For any $t_1,t_2\in\mathbb{R}^K$,
\begin{align*}
\mathbb{E} \left[\exp \left\{ i t_1^{\top}(S_J+D_J) + i t_2^{\top}D_J \right\}\right]
=\,&
\mathbb{E} \left[\big(A_J(t_1)-a(t_1)\big)e^{ i (t_1+t_2)^{\top} D_J } \right] \\
&+
a(t_1)\,\mathbb{E} \left[e^{ i (t_1+t_2)^{\top} D_J } \right].
\end{align*}
The first term converges to zero because
\[
\left|\mathbb{E} \left[\big(A_J(t_1)-a(t_1)\big)e^{ i (t_1+t_2)^{\top} D_J } \right]\right|
\le
\mathbb E \left[|A_J(t_1)-a(t_1)|\right]
\to 0,
\]
while the second converges to
\[
e^{-\frac{1}{2} t_1^{\top} R t_1}\, e^{-\frac{1}{2} (t_1+t_2)^{\top} C (t_1+t_2)}
\]
by the distributional CLT for $D_J$. The limit is the characteristic function of a centered Gaussian vector with covariance
\[
\begin{pmatrix}
C+R & C \\
C & C
\end{pmatrix}.
\]
This proves the claim for bounded $\phi_k$.

Extension to square-integrable $\phi$. Fix $k$ and let the truncation $T_B(x)=\max(\min(x,B),-B)$. Define the bounded, mean-matched functions
\[
 \phi_k^{(B)} := T_B(\phi_k) - \mathbb{E}^{0}[T_B(\phi_k)] + \mathbb{E}^{0}[\phi_k],\quad k=1,\ldots,K,
\]
so that $\mathbb{E}^{0}[\phi_k^{(B)}]=\mathbb{E}^{0}[\phi_k]$ and $\|\phi_k^{(B)}-\phi_k\|_{L^2(\mathbb{P}^0)}\to 0$ as $B\to\infty$. Write $\psi_k^{(B)}:=\phi_k-\phi_k^{(B)}$ and collect vectors with a bold symbol. Then
\[
\mathcal{R}_J^{(B)}
:= \sqrt{J} \begin{pmatrix}
\hat{\bm{\Phi}}-\hat{\bm{\Phi}}^{(B)} \\
\bm{\Phi}-\bm{\Phi}^{(B)}
\end{pmatrix}
=
\begin{pmatrix}
S_J(\bm{\psi}^{(B)}) + D_J(\bm{\psi}^{(B)}) \\
D_J(\bm{\psi}^{(B)})
\end{pmatrix},
\]
where $S_J(\cdot),D_J(\cdot)$ denote the sampling and distributional parts applied componentwise. We claim that
\begin{equation}\label{eq:l2_remainder_bound}
\lim_{B\to\infty}\sup_J \mathbb{E}\big[\|\mathcal{R}_J^{(B)}\|_2^2\big] = 0.
\end{equation}
It is enough to prove the following uniform $L^2$-continuity bound: for any mean-zero $\bm\psi=(\psi_1,\ldots,\psi_K)\in L^2(\mathbb P^0)^K$,
\[
\mathbb E\|S_J(\bm\psi)\|_2^2+\mathbb E\|D_J(\bm\psi)\|_2^2
\le
C\sum_{k=1}^K\|\psi_k\|_{L^2(\mathbb P^0)}^2
\]
with $C$ independent of $J$ and $\bm\psi$, after enlarging $C$ to cover finitely many initial $J$'s. Indeed, conditionally on $W$,
\[
\mathbb E[\|S_J(\bm\psi)\|_2^2\mid W]
=
\sum_{k=1}^K\frac{J}{n_k}\Var_{\mathbb P^k}(\psi_k),
\]
and exchangeability plus $\sum_j W_j^k/\bar W^k=J$ gives
$\mathbb E[\mathbb E^k[\psi_k^2]]=\mathbb E^0[\psi_k^2]$. Since $J/n_k$ is bounded, the sampling part is bounded by a constant times $\sum_k\|\psi_k\|_2^2$.
For the distributional part, write
\[
D_{J,k}(\psi_k)
=
\frac{1}{\sqrt J}\sum_{j=1}^J
\left(\frac{W_j^k}{\bar W^k}-1\right)m_{j,k},
\qquad
m_{j,k}:=J\int_{I_j}\psi_k\,d\mathbb P^0 .
\]
Because $\mathbb E^0[\psi_k]=0$, $J^{-1}\sum_jm_{j,k}=0$. Thus, using $\bar W^k\ge c$,
\[
\mathbb E[D_{J,k}(\psi_k)^2]
\le
\frac{\Var(W_1^k)}{c^2J}\sum_{j=1}^Jm_{j,k}^2
\le
\frac{\Var(W_1^k)}{c^2}\|\psi_k\|_{L^2(\mathbb P^0)}^2,
\]
where the first inequality uses $\sum_jm_{j,k}=0$, so $\sum_j(W_j^k-\bar W^k)m_{j,k}=\sum_j(W_j^k-1)m_{j,k}$, and the last inequality is Jensen's inequality within cells. This proves the continuity bound. Applying it to $\bm\psi^{(B)}$ and using the bound
\[
\|\mathcal{R}_J^{(B)}\|_2^2
\le 2\|S_J(\bm{\psi}^{(B)})\|_2^2 + 3\|D_J(\bm{\psi}^{(B)})\|_2^2
\]
gives \eqref{eq:l2_remainder_bound}, since $\|\psi_k^{(B)}\|_{L^2(\mathbb P^0)}\to0$ for each $k$.

For each fixed $B$, the bounded case implies
\[
 \sqrt{J}
 \begin{pmatrix}
  \hat{\bm{\Phi}}^{(B)}-\bm{\Phi}^{0} \\
  \bm{\Phi}^{(B)}-\bm{\Phi}^{0}
 \end{pmatrix}
 \Rightarrow \mathcal{N} \left(0,
 \begin{pmatrix}
  C^{(B)}+R^{(B)} & C^{(B)} \\
  C^{(B)} & C^{(B)}
 \end{pmatrix}\right),
\]
with $(C^{(B)})_{ij}=(\Sigma^W)_{ij}\,\mathrm{Cov}_{\mathbb{P}^0}(\phi_i^{(B)}(X,Y),\phi_j^{(B)}(X,Y))$ and $R^{(B)}=\mathrm{diag}(\mathrm{Var}_{\mathbb{P}^0}(\phi_1^{(B)})/\rho_1,\ldots)$. By $L^2$ convergence and Cauchy--Schwarz, $C^{(B)}\to C$ and $R^{(B)}\to R$ as $B\to\infty$. The uniform remainder bound \eqref{eq:l2_remainder_bound} and the converging-together theorem therefore yield, for all $\phi_k\in L^2(\mathbb{P}^0)$,
\[
 \sqrt{J}
 \begin{pmatrix}
  \hat{\bm{\Phi}}-\bm{\Phi}^{0} \\
  \bm{\Phi}-\bm{\Phi}^{0}
 \end{pmatrix}
 \Rightarrow \mathcal{N} \left(0,
 \begin{pmatrix}
  C+R & C \\
  C & C
 \end{pmatrix}\right).
\]
This completes the proof.
\end{proof}

\begin{corollary}[Hybrid discrepancy for mean-matched covariate functions]\label{cor:hybrid_phi_discrepancy}
Consider the hybrid shift model of Section~\ref{sec:aihw}. Assume the same partition-refinement and fixed-law perturbation-weight conditions as in Theorem~\ref{thm:matched-clt}, with baseline law $P_{s,\mathcal D}$, and assume $n_s/J\to\rho_s\in(0,\infty)$, $n_t/J\to\rho_t\in(0,\infty)$. The source sample is independent of the target covariate sample and perturbation draw. Write $\delta_{\mathrm{dist}}^2=J^{-1}\mathrm{Var}(W_1)$. Let $\phi(X)$ be a measurable function of the covariates such that $\phi\in L^2(P_s)\cap L^2(P_{s,\mathcal D})$ and
\[
\mathbb{E}_s[\phi(X)] = 0,
\qquad
\mathbb{E}_{s,\mathcal D}[\phi(X)] = \mathbb{E}_s[w_{\mathcal D}(X_{\mathcal D})\phi(X)] = 0.
\]
Define
\[
s_{\phi,n}^2
:=
\frac{1}{n_s}\Var_{P_s}(\phi(X))
+
\left(\frac{1}{n_t}+\delta_{\mathrm{dist}}^2\right)\Var_{P_{s,\mathcal D}}(\phi(X)).
\]
If $s_{\phi,n}^2$ is eventually positive, then, under the joint law that averages over the source sample, the target covariate sample, and the perturbation draw,
\[
s_{\phi,n}^{-1}\Big(\hat{\mathbb E}_s[\phi(X)]-\hat{\mathbb E}_t[\phi(X)]\Big)
\xrightarrow{d}
\mathcal N(0,1).
\]
\end{corollary}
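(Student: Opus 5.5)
The plan is to split the contrast into a source-sampling block and a target block, and then invoke Theorem~\ref{thm:matched-clt} with baseline law $\mathbb P^0=P_{s,\mathcal D}$. The exact decomposition is
\[
\hat{\mathbb E}_s[\phi]-\hat{\mathbb E}_t[\phi]
=
(\hat{\mathbb E}_s-\mathbb E_s)[\phi]
-(\hat{\mathbb E}_t-\mathbb E_t)[\phi]
-(\mathbb E_t-\mathbb E_{s,\mathcal D})[\phi].
\]
It uses the two mean conditions $\mathbb E_s[\phi]=0=\mathbb E_{s,\mathcal D}[\phi]$. These are exactly what make the deterministic tilt $w_{\mathcal D}$ contribute no bias.

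\textbf{Source block.} The first term is an i.i.d.\ mean from $P_s$, and $\phi\in L^2(P_s)$. The ordinary CLT therefore gives
\[
\sqrt J(\hat{\mathbb E}_s-\mathbb E_s)[\phi]\Rightarrow \mathcal N\bigl(0,\Var_{P_s}(\phi)/\rho_s\bigr).
\]
By assumption, the source sample is independent of the target covariate sample and of $W_\bullet$.

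\textbf{Target block.} By \eqref{eq:def_Pt_hybrid}, the target law is exactly $d\mathbb P^1/d\mathbb P^0=W_j/\bar W$ on $I_j$ with $\mathbb P^0=P_{s,\mathcal D}$. The equal-mass and refinement conditions are those of Assumption~\ref{assump:W_hybrid}. Apply Theorem~\ref{thm:matched-clt} with $K=1$ and $\phi_1=\phi$, using $\phi\in L^2(P_{s,\mathcal D})$. The joint limit of $\sqrt J(\hat{\mathbb E}_t[\phi]-\mathbb E_{s,\mathcal D}[\phi])$ and $\sqrt J(\mathbb E_t[\phi]-\mathbb E_{s,\mathcal D}[\phi])$ has covariance
\[
\begin{pmatrix}C+R&C\\C&C\end{pmatrix},
\qquad
C=\Var(W_1)\Var_{P_{s,\mathcal D}}(\phi),\quad R=\Var_{P_{s,\mathcal D}}(\phi)/\rho_t .
\]
We only need the first coordinate. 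It equals the sum of the last two terms of the decomposition, and its limit is $\mathcal N(0,C+R)$.

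\textbf{Combining.} The two blocks are independent, so the product of characteristic functions shows that
\[
\sqrt J\bigl(\hat{\mathbb E}_s[\phi]-\hat{\mathbb E}_t[\phi]\bigr)\Rightarrow
\mathcal N\bigl(0,\ \Var_{P_s}(\phi)/\rho_s+(1/\rho_t+\Var(W_1))\Var_{P_{s,\mathcal D}}(\phi)\bigr).
\]
Since $n_s/J\to\rho_s$, $n_t/J\to\rho_t$ and $J\delta_{\mathrm{dist}}^2=\Var(W_1)$, we have $Js_{\phi,n}^2\to v$, where $v$ is the limiting variance above. Slutsky then gives the claim when $v>0$.

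\textbf{Main obstacle: the degenerate case.} The delicate point is what ``eventually positive'' means. Because $Js^2_{\phi,n}$ is a fixed combination of limits, positivity of $s^2_{\phi,n}$ forces $v>0$ unless both variances vanish, and in that case $\phi$ is a.s.\ constant and $s_{\phi,n}=0$. So normalization by $s_{\phi,n}$ is legitimate. I expect the only care needed to be checking that Theorem~\ref{thm:matched-clt} applies with baseline $P_{s,\mathcal D}$ and a source sample drawn from the different law $P_s$. This works because the source sample enters only through the separate independent CLT and never through the theorem.
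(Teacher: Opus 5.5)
Your proposal is correct and follows essentially the same route as the paper: the ordinary CLT for $\sqrt J\,\hat{\mathbb E}_s[\phi]$, Theorem~\ref{thm:matched-clt} with $K=1$ and baseline $P_{s,\mathcal D}$ for the first coordinate $\sqrt J\,\hat{\mathbb E}_t[\phi]$, factorization of characteristic functions by independence, and Slutsky via $Js_{\phi,n}^2\to v$. One harmless sign slip: that first coordinate equals minus the sum of your last two decomposition terms, not the sum itself, which does not matter because the limit is a centered Gaussian.
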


\begin{proof}
Write
\[
A_n := \hat{\mathbb E}_s[\phi(X)],
\qquad
B_n := \hat{\mathbb E}_t[\phi(X)],
\]
and abbreviate
\[
V_s := \Var_{P_s}(\phi(X)),
\qquad
V_{\mathcal D} := \Var_{P_{s,\mathcal D}}(\phi(X)).
\]
Because $\mathbb E_s[\phi(X)]=0$, the ordinary central limit theorem gives $\sqrt{n_s}A_n\Rightarrow \mathcal N(0,V_s)$. Multiplying by $\sqrt{J/n_s}\to\rho_s^{-1/2}$ gives
\[
\sqrt{J}\,A_n
\xrightarrow{d}
\mathcal N \left(0,\frac{V_s}{\rho_s}\right).
\]
For the target term, apply Theorem~\ref{thm:matched-clt} on the full joint law with $K=1$, baseline law $\mathbb P^0=P_{s,\mathcal D}$, perturbed law $\mathbb P^1=P_t$, and test function $\phi_1(x,y)=\phi(x)$. Under the hybrid model, $P_t$ is obtained by randomly perturbing the baseline law $P_{s,\mathcal D}$ across joint cells $I_1,\ldots,I_J$. Since the test function depends only on $x$, its empirical target average is exactly $B_n=\hat{\mathbb E}_t[\phi(X)]$. Because $\mathbb E_{s,\mathcal D}[\phi(X)]=0$, the first coordinate of the theorem gives
\[
\sqrt{J}\,B_n
\xrightarrow{d}
\mathcal N \left(0,\Var(W_1)V_{\mathcal D}+\frac{V_{\mathcal D}}{\rho_t}\right).
\]
Here the target sampling contribution is scaled by $\sqrt{J/n_t}\to\rho_t^{-1/2}$, while the perturbation contribution is already on the $\sqrt J$ scale. The source empirical term $A_n$ is a function only of the source sample, whereas $B_n$ is a function only of the target covariate sample and perturbation draw. By the assumed independence, their characteristic functions factor for every $J$, so the off-diagonal covariance in the joint limit is exactly zero:
\[
\sqrt{J}
\begin{pmatrix}
A_n \\
B_n
\end{pmatrix}
\xrightarrow{d}
\mathcal N \left(
0,
\begin{pmatrix}
\frac{V_s}{\rho_s} & 0 \\
0 & \Var(W_1)V_{\mathcal D}+\frac{V_{\mathcal D}}{\rho_t}
\end{pmatrix}
\right).
\]
Applying the continuous map $(a,b)\mapsto a-b$ gives
\[
\sqrt{J}\,(A_n-B_n)
\xrightarrow{d}
\mathcal N \left(
0,
\frac{V_s}{\rho_s}
+
\frac{V_{\mathcal D}}{\rho_t}
+
\Var(W_1)V_{\mathcal D}
\right).
\]
Finally, because
\[
\delta_{\mathrm{dist}}^2
=
\frac{1}{J}\Var(W_1)
\]
is the finite-$J$ distributional variance scale, we have
\[
J s_{\phi,n}^2
=
\frac{J}{n_s}V_s
+
\left(\frac{J}{n_t}+J\delta_{\mathrm{dist}}^2\right)V_{\mathcal D}
\to
\frac{V_s}{\rho_s}
+
\frac{V_{\mathcal D}}{\rho_t}
+
\Var(W_1)V_{\mathcal D}.
\]
Therefore Slutsky's theorem yields
\[
s_{\phi,n}^{-1}(A_n-B_n)
=
\frac{\sqrt{J}(A_n-B_n)}{\sqrt{J s_{\phi,n}^2}}
\xrightarrow{d}
\mathcal N(0,1),
\]
which is exactly the stated claim.
\end{proof}

\section{Details for Experiments}

\subsection{Implementation of the Estimators}
\label{subsec:implementation}

We observe labeled source data $\{(X_i,Y_i)\}_{i=1}^{n_s}$ and unlabeled target covariates
$\{X_j'\}_{j=1}^{n_t}$. Below we introduce the implementation of estimators for the target mean $\EE[Y_j']$. For the ATE analyses, all estimators are applied
separately within the treated and control groups, and the final estimand is
formed by differencing the two target-mean estimates.  

All nuisance quantities are estimated by source-side cross-fitting. We first uniformly split the source data into equal-sized folds 
$\mathcal{I}_1,\dots,\mathcal{I}_K$, and let
$k(i)$ be the fold containing source unit $i$. For each fold $k$, nuisance
models are fit on the source training sample
$\{(X_i,Y_i): i \notin \mathcal{I}_k\}$ and then evaluated on the held-out
source fold $\mathcal{I}_k$. Target-side regression predictions are computed by
evaluating each fold-specific outcome model on the full target sample and then
averaging across folds.

\paragraph{AIPW estimator.} The implementation of the AIPW estimator follows the cross-fitting idea~\citep{chernozhukov2018double}. 
Let $\hat m_{-k(i)}(X_i)$ denote the cross-fitted outcome prediction for source
unit $i$, 
and let
\[
\hat\mu_T
=
\frac{1}{K}\sum_{k=1}^K \frac{1}{n_t}\sum_{j=1}^{n_t} \hat m_{-k}(X_j')
\]
be the target-average regression prediction. 

We estimate the density ratio $w(x)=p_T(x)/p_S(x)$ using a probabilistic domain classifier. For fold $k$, we train a classifier on the combined sample
$ 
\{X_i: i\notin\mathcal{I}_k\}\cup\{X_j':1\le j\le n_t\},
$
where source observations receive domain label $0$ and target observations receive domain label $1$. Let
$ 
\hat p_{-k}(x)=\widehat{\PP}(D=1\mid X=x)
$
be the fitted probability from this classifier. We then set 
$ 
\hat\pi_{S,k}=\frac{n_{S,-k}}{n_{S,-k}+n_t}
$, $
\hat\pi_{T,k}=\frac{n_t}{n_{S,-k}+n_t}$, 
and 
then for $i\in\mathcal{I}_k$ we set
\[
\hat w_i
=
\frac{\hat p_{-k}(X_i)}{1-\hat p_{-k}(X_i)}
\frac{\hat\pi_{S,k}}{\hat\pi_{T,k}}.
\]
We then apply a minimal weight
clipping step for numerical stability:
\[
\hat w_i \leftarrow
\min\left\{\sqrt{n_s},\max\left(n_s^{-1/2},\hat w_i\right)\right\}.
\]
For ATE analyses, the same rule is applied within each treatment arm, replacing
$n_s$ by the source sample size in that arm.
The implemented AIPW estimator is
\[
\hat\psi_{\mathrm{AIPW}}
=
\frac{1}{n_s}\sum_{i=1}^{n_s} \hat w_i \bigl(Y_i - \hat m_{-k(i)}(X_i)\bigr)
+
\hat\mu_T.
\]

\paragraph{AIDW estimator.}
AIDW uses only the outcome regression. With
$\hat m_i = \hat m_{-k(i)}(X_i)$ and
\[
\hat\mu_S = \frac{1}{n_s}\sum_{i=1}^{n_s} \hat m_i,
\qquad
\hat\mu_T = \frac{1}{K}\sum_{k=1}^K \frac{1}{n_t}\sum_{j=1}^{n_t}\hat m_{-k}(X_j'),
\]
the estimator is
\[
\hat\psi_{\mathrm{AIDW}}(\alpha)
=
\frac{1}{n_s}\sum_{i=1}^{n_s} (Y_i-\hat m_i)
+
\alpha \hat\mu_S
+
(1-\alpha)\hat\mu_T.
\]
When $\alpha$ is not fixed in advance, it is chosen from the estimated
distribution-shift scalar $\hat\delta_{\mathrm{dist}}^2$ via
\[
\hat\alpha
=
\frac{1}{n_t(1/n_s + \hat\delta_{\mathrm{dist}}^2)+1}.
\]
The quantity $\hat\delta_{\mathrm{dist}}^2$ is estimated from the covariate shift between the
source and target samples. 
Let $\phi(\cdot)$ denote the collection of test functions applied to these
standardized residual covariates. In the implementation, we set 
$ 
\phi(r) = \bigl(r, r^2, \sin(r), \cos(r)\bigr),
$ 
applied coordinatewise. For each test-function coordinate $\ell$, let
$\bar\phi_{S,\ell}$ and $\bar\phi_{T,\ell}$ be the source and target sample
means, and let $\hat v_{S,\ell}$ and $\hat v_{T,\ell}$ be the corresponding
sample variances. We estimate $\delta^2$ as the nonnegative solution to
\[
\frac{1}{L}
\sum_{\ell=1}^L
\frac{
(\bar\phi_{S,\ell}-\bar\phi_{T,\ell})^2
}{
(1/n_t+\delta^2)\hat v_{T,\ell} + (1/n_s)\hat v_{S,\ell}
}
= 1.
\]
If the left-hand side is already no larger than one at $\delta^2=0$, we set
$\hat\delta_{\mathrm{dist}}^2=0$. Otherwise, the root is found by bisection, with an upper cap
used only for numerical stability.

\paragraph{AIHW estimator.} Our implementation follows the MSE-optimal augmentation in
Equation~\eqref{eq:mse-optimal-aihw-estimator} using cross-fitting and
fold-specific variable selection. For each source fold $k$, the selected
feature set $\hat{\mathcal D}_k$ is constructed using the source
training observations $\{X_i:i\notin\mathcal I_k\}$ and the full target
covariate sample, without using the held-out source observations. On the
same source training sample, we fit a full outcome model
$\hat Q_{-k}(X)$ and a reduced outcome model
$\hat Q_{\hat{\mathcal D}_k,-k}
(X_{\hat{\mathcal D}_k})$

For the weighting step, we first fit a density-ratio model on the full
covariate vector using the  training fold (we use all the coordinates in the target samples) to obtain $\hat{w}^{\text{full}}_i$.  
We then regress the logarithm of these
weights on $X_{\hat{\mathcal D}_k}$, using the source training
observations, and exponentiate its predictions to obtain the projected
density-ratio estimator
$\hat w_{\hat{\mathcal D}_k,-k}$. The projected weights are
evaluated on both the held-out source fold and the target observations.
All weights are truncated to
$[n_s^{-1/2},n_s^{1/2}]$. In addition, when the variance of the projected
source weights exceeds that of the corresponding full-space weights, the
centered predicted log weights are shrunk until this variance bound is
satisfied.

For an observation $x$, define the fold-specific adaptive coefficient
\[
\hat\lambda_{\hat{\mathcal D}_k,-k}
(x_{\hat{\mathcal D}_k})
=
\frac{
\hat w_{\hat{\mathcal D}_k,-k}
(x_{\hat{\mathcal D}_k})/n_s
+\hat\delta_{\mathrm{dist},-k}^2
}{
1/n_t+
\hat w_{\hat{\mathcal D}_k,-k}
(x_{\hat{\mathcal D}_k})/n_s
+\hat\delta_{\mathrm{dist},-k}^2
},
\]
and the corresponding augmentation
\[
\hat m_{n,-k}(x)
=
\hat Q_{\hat{\mathcal D}_k,-k}
(x_{\hat{\mathcal D}_k})
+
\hat\lambda_{\hat{\mathcal D}_k,-k}
(x_{\hat{\mathcal D}_k})
\left\{
\hat Q_{-k}(x)
-
\hat Q_{\hat{\mathcal D}_k,-k}
(x_{\hat{\mathcal D}_k})
\right\}.
\]
The distributional distance estimator $\hat\delta_{\text{dist},-k}^2$ is detailed at the end of this part. 
The implemented cross-fitted estimator is
\[
\hat\theta_{\mathrm{AIHW}}
=
\frac{1}{n_s}
\sum_{k=1}^K\sum_{i\in\mathcal I_k}
\hat w_{\hat{\mathcal D}_k,-k}
(X_{i,\hat{\mathcal D}_k})
\{Y_i-\hat m_{n,-k}(X_i)\}
+
\frac{1}{K}\sum_{k=1}^K\frac{1}{n_t}\sum_{j=1}^{n_t}
\hat m_{n,-k}(X_j').
\]
For treatment-effect outcomes, AIHW is fitted separately
within the treatment and control groups and the two estimated target means
are differenced.
In AIHW, the distributional perturbation parameter is estimated after removing the
systematic variation explained by the selected covariates. For each fold
$k$, let $X_{-\hat{\mathcal D}_k}$ denote the covariates not included
in the selected set. Using only the source training observations, we regress
each coordinate of $X_{-\hat{\mathcal D}_k}$ on an intercept and
$X_{\hat{\mathcal D}_k}$. Let $\hat B_{-k}$ denote the resulting
coefficient matrix. We then construct the source and target residual
covariates
\[
\begin{aligned}
\hat R_{i,-k}^{S}
&=
X_{i,-\hat{\mathcal D}_k}
-
(1,X_{i,\hat{\mathcal D}_k}^{\top})\hat B_{-k},
\qquad i\notin\mathcal I_k,\\
\hat R_{j,-k}^{T}
&=
X_{j,-\hat{\mathcal D}_k}'
-
(1,X_{j,\hat{\mathcal D}_k}'^{\top})\hat B_{-k},
\qquad 1\leq j\leq n_t.
\end{aligned}
\]
Each residual coordinate is standardized using its mean and standard
deviation in the source training sample. We apply the test functions
$\phi(r)=\bigl(r,r^2,\sin(r),\cos(r)\bigr)$ 
coordinatewise to the standardized residuals. For each resulting
test-function coordinate $\ell$, let
$\bar\phi_{S,-k,\ell}$ and $\bar\phi_{T,-k,\ell}$ denote the source
and target means, and let $\hat v_{S,-k,\ell}$ and
$\hat v_{T,-k,\ell}$ denote the corresponding sample variances.

We define $\hat\delta_{\mathrm{dist},-k}^2$ as the nonnegative solution
to
\[
\frac{1}{L}\sum_{\ell=1}^{L}
\frac{
\left(\bar\phi_{S,-k,\ell}
-\bar\phi_{T,-k,\ell}\right)^2
}{
\left(1/n_t+\delta^2\right)\hat v_{T,-k,\ell}
+
\left(1/n_{s,-k}\right)\hat v_{S,-k,\ell}
}
=1,
\]
where $n_{s,-k}$ is the number of source training observations in fold
$k$. If the left-hand side is no larger than one at $\delta^2=0$, we
set $\hat\delta_{\mathrm{dist},-k}^2=0$; otherwise, the solution is
obtained by bisection. The fold-specific residualized estimate
$\hat\delta_{\mathrm{dist},-k}^2$ is then used in the corresponding
adaptive coefficient
$\hat\lambda_{\hat{\mathcal D}_k,-k}$.


\paragraph{Variable selection.}
For the AIHW estimator, we implement two heuristic selection
rules for determining $\cD$. 
First, the \texttt{gaussian-mix} rule computes the per-feature standardized
mean-difference statistic
\[
t_j
=
\frac{\bar X_{S,j}-\bar X_{T,j}}
{\sqrt{1/n_s+1/n_t}\,\hat\sigma_{S,j}},
\]
where $\bar{X}_{S,j}$ and $\bar{X}_{T,j}$ are the sample mean of the $j$-th feature in the source and target data among the data used for feature selection. We then fit a two-component Gaussian mixture model to $\{t_j\}_{j=1}^p$, and selects features in 
the component with the larger mean squared $t$-statistic.

Second, the \texttt{t-stat} rule selects feature $j$ whenever
\[
|\bar X_{S,j}-\bar X_{T,j}|
>
z_{1-\kappa/2}\sqrt{\hat\delta_{\mathrm{dist}}^2+1/n_t+1/n_s}\,\hat\sigma_{S,j},
\]
where $\kappa$ is a user-specified significance level.
If this rule selects no features, the implementation falls back to the three
features with the largest absolute $t$-statistics. Near-constant source
features are excluded.

\paragraph{Inference and variance estimation.}
For all methods we report plug-in variance estimates constructed from the
estimated nuisance quantities. In the empirical summaries, confidence intervals
and predictive intervals are formed based on the asymptotic normality with the corresponding estimated variance
components. For each estimator, the implementation returns two variance components. The first component is the plug-in estimate of the asymptotic variance of the
estimator for the target-population mean. The second component  is an additional prediction-noise component used when we
compare the estimator to the realized target-sample outcome mean in the
empirical analyses. In the reported coverage calculations, we therefore use
standard errors based on
\[
\widehat{\mathrm{se}}^2 = \widehat s_n^2 + \widehat s_{\mathrm{pred}}^2,
\]
where $\widehat s_n^2$ is the target-mean variance and
$\widehat s_{\mathrm{pred}}^2$ is the realized sample-mean variance component.

For AIPW, the implementation uses the usual cross-fitted influence-function
plug-in variance. Let
\[
\hat\varphi_i^{S}
=
\hat w_i\{Y_i-\hat m_{-k(i)}(X_i)\},
\qquad
\hat\varphi_j^{T}
=
\frac{1}{K}\sum_{k=1}^K \hat m_{-k}(X_j').
\]
Then we compute the two components
\[
\widehat s^2_{n,\mathrm{AIPW}}
=
\frac{1}{n_s}\widehat{\Var}_S(\hat\varphi_i^S)
+
\frac{1}{n_t}\widehat{\Var}_T(\hat\varphi_j^T),\quad  
\widehat s^2_{\mathrm{pred,AIPW}}
=
\frac{1}{n_t}\widehat{\Var}_{\hat w}\{Y_i-\hat m_{-k(i)}(X_i)\},
\]
where $\widehat{\Var}_{\hat w}$ denotes the weighted empirical variance using
the estimated density-ratio weights.

For AIDW, the plug-in variance follows the asymptotic variance formula in
Theorem~\ref{theorem:aidw}. Let
\[
\hat r_i = Y_i-\hat m_{-k(i)}(X_i),
\qquad
\hat q_i = \hat m_{-k(i)}(X_i).
\]
The implementation estimates the variance by replacing population variances in
Theorem~\ref{theorem:aidw} with empirical variances:
\[
\widehat s^2_{n,\mathrm{AIDW}}
=
\widehat{\Var}(\hat r)
\left(\frac{1}{n_s}+\hat\delta_{\mathrm{dist}}^2\right)
+
\widehat{\Var}(\hat q)
\left\{
\hat\alpha^2\left(\frac{1}{n_s}+\hat\delta_{\mathrm{dist}}^2\right)
+
\frac{(1-\hat\alpha)^2}{n_t}
\right\},\quad 
\widehat s^2_{\mathrm{pred,AIDW}}
=
\frac{1}{n_t}\widehat{\Var}(\hat r).
\] 

For AIHW, we compute the plug-in variance using the cross-fitted adaptive
augmentation described above. For each source observation $i$, let
$k(i)$ denote its held-out fold and define
\[
\hat\phi_i
=
\hat m_{n,-k(i)}(X_i),
\qquad
\hat r_i
=
Y_i-\hat Q_{-k(i)}(X_i),
\qquad
\hat g_i
=
\hat Q_{-k(i)}(X_i)-\hat\phi_i.
\]
Thus, $Y_i-\hat\phi_i=\hat r_i+\hat g_i$, where
$\hat g_i$ incorporates the observation-specific adaptive coefficient
$\hat\lambda_{\hat{\mathcal D}_{k(i)},-k(i)}
(X_{i,\hat{\mathcal D}_{k(i)}})$.

Let $\hat{\Var}_{\hat w}(\cdot)$ denote the weighted empirical variance
computed using the projected AIHW weights, and let
$\hat{\Var}(\cdot)$ denote the ordinary empirical variance. The variance
estimator used in the implementation is
\[
\begin{aligned}
\hat s^2_{n,\mathrm{AIHW}}
={}&
\hat\delta_{\mathrm{dist}}^2
\hat{\Var}_{\hat w}(\hat r)
+
\frac{1}{n_s}\hat{\Var}(\hat w\hat r)
+
\frac{1}{n_s}\hat{\Var}(\hat w\hat g)
+
\frac{1}{n_t}\hat{\Var}_{\hat w}(\hat\phi)
+
\hat\delta_{\mathrm{dist}}^2
\hat{\Var}_{\hat w}(\hat g).
\end{aligned}
\]
The corresponding prediction component is
\[
\hat s^2_{\mathrm{pred,AIHW}}
=
\frac{1}{n_t}\hat{\Var}_{\hat w}(\hat r).
\]
For treatment-effect outcomes, these variance components are computed
separately within the treatment and control groups and then added. The
weights used throughout are the cross-fitted projected weights obtained
from the log-weight regression described above.